\newcommand{\citep}[1]{\cite{#1}}
\newcommand{\nfrac}{\nicefrac}
\newtheorem{theorem}{Theorem}[section]
\newtheorem{claim}[theorem]{Claim}
\newtheorem{definition}[theorem]{Definition}
\newtheorem{observation}[theorem]{Observation}
\newtheorem{fact}[theorem]{Fact}
\newtheorem{lemma}[theorem]{Lemma}
\newtheorem{corollary}[theorem]{Corollary}
\newcommand{\defeq}{\stackrel{\textup{def}}{=}}
\newcommand{\eps}{\varepsilon}
\newcommand{\e}{\eps}
\newcommand{\Psymb}{\mathbb{P}}
\renewcommand{\Pr}{\ProbOp}
\renewcommand{\Pr}{\ProbOp}
\newcommand{\alert}[1]{}
\newcommand{\diff}[2]{\ensuremath{\frac{\text{d}#1}{\text{d}#2}}}
\newcommand{\pdiff}[2]{\ensuremath{\frac{\partial#1}{\partial#2}}}
\newcommand{\abs}[1]{\ensuremath{\left|#1\right|}}
\newcommand{\ceil}[1]{\ensuremath{\left\lceil#1\right\rceil}}
\newcommand{\Ex}[2]{\ensuremath{\mathbb{E}_{#1}\insq{#2}}}
\newcommand{\etal}[0]{\emph{et al}}
\newcommand{\ina}[1]{\ensuremath{\left<#1\right>}}
\newcommand{\inner}[2]{\ensuremath{\ina{#1,#2}}}
\newcommand{\inb}[1]{\ensuremath{\left\{#1\right\}}}
\newcommand{\inp}[1]{\ensuremath{\left(#1\right)}}
\newcommand{\insq}[1]{\ensuremath{\left[#1\right]}}
\newcommand{\M}[0]{\ensuremath{\mathcal{M}}}
\newcommand{\ones}[0]{\ensuremath{\mathbf{1}}}
\renewcommand{\Pr}[2][]{\ensuremath{\mathbb{P}_{#1}\insq{#2}}}
\newcommand{\norm}[2][]{\ensuremath{\left|\left|#2\right|\right|_{#1}}}
\newcommand{\ham}[1]{\ensuremath{d_H\inp{#1}}}
\newcommand{\dm}[1]{\ensuremath{d_{\rm match}\inp{#1}}}
\newcommand{\nt}[2]{\ensuremath{N^{#2}_{#1}}}
\newcommand{\dt}[2]{\ensuremath{D^{#2}_{#1}}}
\newcommand{\xt}[2]{\ensuremath{m^{#2}_{#1}}}
\newcommand{\phyp}[3]{\ensuremath{P^{\text{hyp}}\inp{#1\rightarrow#2;#3}}}
\newcommand{\pbin}[3]{\ensuremath{P^{\text{bin}}\inp{#1\rightarrow#2;#3}}}
\newcommand{\E}[1]{\ensuremath{\mathbb{E}\insq{#1}}}
\newcommand{\Var}[1]{\ensuremath{\mathbf{Var}\insq{#1}}}
\newcommand{\w}[1]{\ensuremath{w_H\inp{#1}}}
\renewcommand{\vec}[1]{\ensuremath{\mathbf{#1}}}
\renewcommand{\epsilon}[0]{\varepsilon}
\newcommand{\shrinkspace}[1]{}
\title{\bf A Finite Population Model of Molecular Evolution: \\ Theory and Computation}
\author{
    Narendra M. Dixit  \\
    {\small Department of Chemical Engineering} \\
    {\small Indian Institute of Science} \\
    {\small Bangalore-560 012, India }\\
    {\small Email: narendra@chemeng.iisc.ernet.in} 
  \and
    Piyush Srivastava\\
    {\small Computer Science Division}\\
    {\small University of California at Berkeley} \\
    {\small Berkeley - 94720, CA, USA}\\
    {\small Email: piyushsr@eecs.berkeley.edu}
  \and
    Nisheeth K. Vishnoi\\
    {\small Microsoft Research}  \\
    {\small Bangalore - 560 025, India}\\
    {\small Email: nisheeth.vishnoi@gmail.com}
}
\date{}
\begin{document}

\thispagestyle{empty}
\maketitle
\thispagestyle{empty}
\thispagestyle{empty}
\begin{abstract}

  This paper is concerned with the evolution of haploid organisms that
  reproduce asexually.  In a seminal piece of work, Eigen and
  coauthors proposed the quasispecies model in an
  attempt to understand such an evolutionary process.  Their  work
  has impacted antiviral treatment and vaccine design strategies.
Yet, predictions of the quasispecies model are at best viewed as a
guideline, primarily because it assumes an infinite population size,
whereas realistic population sizes can be quite small.  
In this paper we consider a population genetics-based model aimed at
understanding the evolution of such organisms with {\em finite}
population sizes and present a rigorous study
of the convergence and computational issues that arise therein.
Our first
result is structural and shows that, at any time during the evolution,
as the population size  tends to infinity, the distribution of genomes predicted by our
model converges to that predicted by the quasispecies model. This 
justifies the continued use of the quasispecies model to
derive guidelines for intervention.
While the stationary state in the quasispecies model is readily obtained, 
due to the explosion of the state space in our model, exact computations are prohibitive.
Our second set of results are computational in nature and address this
issue.  We derive conditions on the parameters of evolution under
which our stochastic model mixes rapidly. 
Further, for a class of widely used fitness landscapes we give a fast
deterministic algorithm which computes the stationary distribution of
our model.  These computational tools are expected to serve as a
framework for the modeling of strategies for the deployment of
mutagenic drugs.

\vspace{0.3in}

\noindent{\bf Topics: }Molecular Evolution, Quasispecies Theory.
\end{abstract}
\thispagestyle{empty}
\newpage
\thispagestyle{empty}
\tableofcontents

\thispagestyle{empty}
\newpage
\setcounter{page}{1}

\section{Introduction}

The rapid genomic evolution of viruses such as HIV has made the design of drugs and
vaccines with lasting activity one of the most difficult challenges of our time. A novel
intervention strategy which potentially outplays viruses in this evolutionary game was suggested by
the pioneering work of Eigen and coworkers \citep{eigen71,EMS89}. Eigen and coworkers considered the asexual evolution of a
haploid organism and found that when the mutation (or evolutionary) rate was small, the organism
survived as a collection of closely related yet distinct genomes together termed the quasispecies.
Viral populations in infected individuals are known to exist as such quasispecies \citep{LA10}. Remarkably, this
quasispecies model predicted that when the mutation rate increased beyond a critical value, called
the error threshold, the collection of genomes in the quasispecies ceased to be closely related; in
fact, a completely random collection of genomic sequences was predicted to emerge. This transition with increasing mutation rate
thus induced a severe loss of genetic information in the quasispecies and has been referred to as an
error catastrophe. The generic antiviral drug ribavirin has been shown to act as a \emph{mutagen}--an
agent that induces mutations--against poliovirus and trigger a severe loss of viral infectivity in
culture \citep{Crotty01}. This strategy of enhancing the viral mutation rate thus appears promising and particularly
advantageous because it is unlikely to be susceptible to failure through viral evolution-driven
development of drug resistance. Mutagenic drugs that attempt to induce an error catastrophe are 
thus being explored as a potential antiviral strategy
\citep{Crotty01,GP05,ADL04}, and one such drug for HIV is currently under clinical trials \citep{10.1371/journal.pone.0015135}.

The success of mutagenic strategies relies on accurate estimates of the error threshold of
the pathogens under consideration. Notwithstanding the tremendous insights into viral evolution the
quasispecies model provides, important gaps remain between the quasispecies model and the
realistic evolution of viruses and other haploid asexual organisms. First, whereas the model assumes
an infinite population size and, hence, adopts a deterministic approach, real populations are often
small enough to lend themselves to substantial stochastic effects. For instance, the effective
population size of HIV-1 in infected individuals is about $10^3-10^6$ \citep{Kouyos07,Balagam11}, which is thought to underlie the strongly
stochastic nature of HIV-1 evolution. Second, the model assumes a single-peak fitness landscape,
where one genomic sequence is assumed to be the fittest and all other genomes are equally less fit.
Realistic fitness landscapes are far more complex \citep{Hinkley}. There have been significant efforts in the last 30
years to close these gaps \citep{WilkeCommentary}. While more general landscapes have been successfully considered
in the quasispecies case \citep{saakian06}, a rigorous treatment of the finite population case has remained
elusive (see Section \ref{sec:disc-comp-with}). Importantly, it still
remains to be established whether the insights gained from the quasispecies model, such as the
occurrence of an error catastrophe, translate to the more realistic, finite population case.

Here, we consider a finite population model of the asexual evolution of a haploid organism.
Following standard population genetics-based descriptions \citep{citeulike:1993435}, the model considers evolution in
discrete, non-overlapping generations. Within each generation, genomes undergo reproduction (R),
selection (S), and mutation (M), yielding progeny genomes for the next generation. We analyze this
RSM model formally and establish the following key results. We show that in the infinite population
limit, the expected structure of the quasispecies predicted by the RSM model converges to that of
the quasispecies model. Thus, insights from the quasispecies model may be translated to the finite
population scenario. Indeed, we show further that the error threshold predicted by the RSM model
also converges in the infinite population limit to that of the quasispecies model. Finite population
models, such as the RSM model, appropriately tuned to mimic specific details, such as the fitness
landscape, of the pathogens under consideration may thus be employed to obtain realistic estimates of the error threshold. 

Unlike the quasispecies model, where the quasispecies is identified
readily using black-box eigenvector determination algorithms, identifying the expected quasispecies
of the RSM model is computationally prohibitive even for the smallest realistic genome and
population sizes. Monte Carlo sampling techniques are therefore often
resorted to~\citep{WilkeCommentary,althaus05,batorsky11,GD10,SimulationsPaper12}. Here, going
beyond the ideas of the quasispecies model, we examine the mixing properties of the RSM model.
We establish constraints on the model parameters under which the RSM model exhibits rapid mixing
and therefore allows fast estimation of the expected quasispecies structure. Finally, we suggest an
algorithm that uses the Markov Chain Monte Carlo paradigm to estimate the error threshold in the RSM model. 
Our study thus serves as a framework for elucidating
quantitative guidelines for the modeling of intervention strategies that employ mutagenic drugs.

The paper is organized as follows. In Section \ref{sec:prelimQM} we briefly describe the quasispecies model and the notion of the error threshold. In Section \ref{sec:finiteRSM} we setup the finite population RSM model, present our main results and outline the techniques employed.  In Section \ref{sec:disc-comp-with}
we discuss our results in the context of previous studies and highlight open problems arising from work. Formal statements of our results are presented in Section \ref{sec:form-stat-main}. Detailed proofs are contained in the Appendix.

\section{Preliminaries and Definitions}
\label{sec:prelimQM}

\subsection{Preliminaries}
We consider the evolution of a population of haploid organisms that
reproduce asexually.  In this evolutionary process the genome of each
individual is modeled as a string of nucleotides.  During
reproduction, the genome is copied with possible mutations, which can
be insertions, deletions, or substitutions.  In applications, such as
the modeling of viral evolution, it is often convenient to neglect
insertions, deletions, and substitutions other than transitions ($A$ to
$G$ or $C$ to $T$, and \textit{vice versa}).  Under this assumption, a
genome may, without loss of generality, be represented as a binary
string.  We thus represent a genome as an $L$-bit string
$\sigma=(\sigma_{1},\sigma_{2},\ldots, \sigma_{L} )$, where
$\sigma_{i} \in \{ 0,1\}.$ 

The \emph{fitness} of a genome is then modeled in terms of
its propensity to produce copies of itself.  
Specifically, the fitness of the genome $\sigma$ is
defined by the number of copies $a_\sigma$ of itself that it produces
in one round of replication (also called one generation).  However,
during replication, each bit of each of the $a_\sigma$ offsprings is
copied incorrectly with probability $\mu$ (called the error or
mutation \emph{rate}), thus potentially giving rise to an $L$-bit
string different from $\sigma$.  The fittest genome, also termed the
\emph{master sequence}, is without loss of generality assumed to be
${\bf 0}=(0,\ldots,0)$ so that $a_{\bf 0} > a_{\sigma}$ for all 
$\sigma \neq {\bf 0}$.  

The primary cause of the complexity and diversity in the
evolution of such organisms is the variety of possible fitness
landscapes, which \textit{a priori} can be arbitrary functions from
$\{0,1\}^{L}$ to the set of non-negative integers.  Several special classes of fitness
landscapes have been employed in the literature and we list the
important ones below. We will assume that $a_{\sigma} \geq 1.$ The case $a_{\sigma}=0$ for some $\sigma$'s has been
used~\citep{WK93,TH07,GD10}, and will be discussed in Section
\ref{sec:disc-comp-with}.  

\begin{enumerate}
\item ({\em General}) Here the only condition is that $a_{\sigma} \geq 1.$ 
\item ({\em Class Invariant~\citep{saakian06, TH07,PMD10,Balagam11}})
    In a class invariant landscape $a_{\sigma}$ depends only on the Hamming weight of $\sigma$.
\item ({\em Single Peak \citep{eigen71,NS89,PMD10}}) Here, we have $a_{\bf 0} > 1$ and $a_{\sigma}=1$ for
  all $\sigma \neq {\bf 0}$. 
\item ({\em Multiplicative \citep{TH07,WH96}}) These are parametrized by  $a_{1},\ldots,a_{L} \geq 1$ so that for a given $\sigma,$ $a_{\sigma} \defeq \prod_{i=1, \sigma_{i}=0}^{L}a_{i}$.
\end{enumerate}
\noindent
Other landscapes such as the simpler additive or linear landscapes and
more complex correlated landscapes have also been employed in the literature
\citep{BS93,Thomas97,nimwegen99}.

\subsection{The Quasispecies Model} 
\label{sec:quasispecies-model}
\noindent Eigen and coworkers \citep{eigen71,EMS89} gave the following differential equations for the
time-evolution of the fractional  population of the genome $\sigma$ at time $t,$ denoted by   $x_{\sigma}(t)$:
{$$ \diff{x_\sigma(t) }{t} =
  \sum_{\tau}x_{\tau}(t) a_\tau Q_{\tau\sigma} -
  x_\sigma (t)\bar{A}(t)\text{ for all $\sigma$}.$$ 
  \noindent
Here,  $Q_{\sigma\tau} \defeq \mu ^{d_H({\sigma,\tau} )} (1-\mu
)^{L-d_H({\sigma,\tau })} $ is the probability that $\sigma $ mutates
to $\tau $ and $d_H({\sigma,\tau} )$ is the Hamming distance between
$\sigma $ and $\tau.$ $\bar{A}(t)$ is the average fitness
$\sum_{\sigma}a_\sigma x_\sigma(t)$ at time $t$.  Defining
$A_{\sigma\tau} \defeq a_\sigma$ when $\sigma = \tau$ and $0$ otherwise, they showed that the vector of stationary frequencies,
${v}^\sigma_\mu \defeq \lim_{t \to \infty} x_{\sigma}(t) $, is
the dominant right eigenvector of the \emph{value matrix} $QA$ at mutation rate $\mu$ such that $\|{\bf v}_\mu\|_1 \defeq \sum_\sigma {v}_\mu^\sigma=1.$\footnote{Throughout this paper, we will be dealing with vectors over $\{0,1\}^L.$ Vectors will be typeset in boldface.  The components of a vector $\vec{x}$ will be denoted either by $x^\sigma$ or by $x_\sigma$ for $\sigma \in \{0,1\}^L$, based on convenience of notation in the context of use. } 
The collection of genomes determined by this dominant eigenvector,
which marks the culmination of the evolutionary process, is called the
{\em quasispecies}.  
It is important to note that the vector $\vec{v}_\mu$ is independent of the starting population distribution. 

We will mostly be concerned with the discrete
time version of the quasispecies model.  
In the discrete time case, $t=\{0,1,\ldots \},$  denoting the
fraction of genomes of type $\sigma$ at time $t$ by $m^\sigma_t,$
Eigen's equations can  be written as:
\begin{equation}\label{eq:conv-2}
  \xt{t+1}{\sigma} \defeq
  \frac{\sum_{\tau}\xt{t}{\tau}a_\tau Q_{\tau\sigma}}{\sum_{\tau}\xt{t}{\tau}a_\tau}.
\end{equation}
In vector notation, 
given the fractional population
$\vec{m}_t$ at
time $t$, the fractional population $\vec{m}_{t+1}$ at time $t+1$ 
is given by $\vec{m}_{t+1} = \vec{r}(\vec{m}_t)$,
where the $\sigma$ co-ordinate $r^\sigma$ of the vector valued
function $\vec{r}$ is defined as

\begin{equation}
\label{eq:r-def}
{r}^\sigma(\vec{x}) \defeq \frac{\sum_{\tau}a_\tau
  Q_{\tau\sigma}x_\tau}{\sum_{\tau}a_\tau x_\tau} =
\frac{\inp{QA\vec{x}}_\sigma}{\norm[1]{A\vec{x}}}\text{ and thus,
}\vec{r}(\vec{x}) = \frac{QA\vec{x}}{\norm[1]{A\vec{x}}} .
\end{equation}
Again, $\vec{m}_t$ can be shown to converge to $\vec{v}_\mu$ irrespective of the starting population distribution as $t$ goes to infinity. However, at any finite $t,$  $\vec{m}_t$ depends on the initial state $\vec{m}_0.$

\subsection{The Error Threshold}
With the single peak landscape, Eigen {\em et al.} observed empirically  that there
is a critical value $\mu_{c} \le 0.5$ for the mutation rate $\mu$ such
that for $\mu \ll \mu _{c} $, the quasispecies is dominated by the
master sequence, i.e. $v_\mu^{\bf 0} \ge v_\mu^{\sigma} \  \forall
\sigma$, whereas when $0.5 \ge \mu >\mu _{c} $ the quasispecies is
dispersed approximately uniformly.  The critical mutation rate $\mu
_{c} $ is called the \emph{error threshold} because the uniform
dispersal for $\mu >\mu _{c} $ implies a severe loss of representation
in the quasispecies of the genetic information encoded by the master
sequence. Evidently, this dispersal also decreases the mean fitness,
$\overline{A} \defeq \sum _{\sigma}a_{\sigma} v_\mu^{\sigma}
$. 

We note here that despite the notion of the error threshold being
widely recognized, no consensus exists on its definition; see
Wilke~\citep{WilkeCommentary}. Since, in most cases, $\vec{v}_\mu$ will never become exactly the uniform distribution on $\{0,1\}^L,$ it is clear that the goal is to find the smallest $\mu$ such that 
$\vec{v}_\mu$ is {\em close} to the uniform distribution on all genomes, i.e., the vector with every coordinate equal to $2^{-L}$, which we denote by  $\mathbf{\vec{U}}.$ To define the error threshold we also need a function that measures closeness: e.g. $\|\vec{v}_\mu-\vec{U}\|_1,$ $\|\vec{v}_\mu-\vec{U}\|_\infty$ or the difference in Shannon entropies of $\vec{v}_\mu$ and $\vec{U},$ namely $|\mathbf{H}(\vec{v}_\mu)-L|.$  Hence, for a given distance function $d$ which measures closeness of $\vec{v}_\mu$ and $\vec{U},$ and a bound on closeness $\eps>0,$ we define 
 $$ \mu_{c}^d(\eps) \defeq \min \{ \mu \in (0,1): d(\vec{v}_{\mu},\vec{U}) \leq \eps \}.$$

\noindent
First note that at $\mu=\nfrac{1}{2},$ the steady state vector $\vec{v}_\mu$ is {\em exactly} $\vec{U}.$ Hence, $\mu_c^d(\eps) \leq \nfrac{1}{2}$ for all $d,\eps >0.$
Second, note that  changing the distance function $d$ will change the error threshold quantitatively. Eigen and coworkers presented a heuristic argument that the error threshold should scale as $\nfrac{1}{L}$ for the single-peak model without any rigorous proofs of its existence and without mentioning any closeness function.

\section{A Finite Population Model and Our Main Results} 
\label{sec:finiteRSM}
In this section, we describe at an informal level the salient features
of our work, which comprises a finite population model to capture
molecular evolution, and our theoretical and computational results
associated with it. We give a high-level technical overview of the methods used
to prove our results in Section \ref{sec:overv-techn-contr}, while
precise definitions and formal statements of our results appear in
Section \ref{sec:form-stat-main}.  Proofs have been moved to the
Appendix due to considerations of space. 

\subsection{A Finite Population (RSM) Model}

We consider the following stochastic discrete time finite population model of
evolution which we call the RSM model.  The parameters are the same as in the quasispecies model: the genome length $L,$ the sequence space $\{0,1\}^L,$ a per bit mutation rate $\mu$ and the fitness landscape $\{a_\sigma\}_{\sigma \in \{0,1\}^L}$ with all $a_\sigma \geq 1$ and integers. At any time $t$, let
$N_{t}^{\sigma} $ be the number of genomes (a random variable)  of type $\sigma$, and fix
the total population $\sum _{\sigma }N_{t}^{\sigma}$ to be $N$.  This fixing of the population size to $N$ at each step is the key distinction from the quasispecies model and is a new parameter.  In
each time step, the ensuing evolution is then described in terms of the
following three steps.

\begin{enumerate}
\item ({\bf Reproduction}) First, in the reproduction
  step, each genome $\sigma$ produces $a_{\sigma} $ copies of
  itself, giving rise to an intermediate population $I_{t} \defeq \sum
  _{\sigma \in \{0,1\}^{L} }I_{t}^{\sigma} $, where $I_{t}^{\sigma}
  \defeq a_{\sigma} N_{t}^{\sigma} $.
\item ({\bf Selection}) Second, in the selection step, $N$ genomes are
  chosen at random without replacement from this intermediate
  population of size $I_{t}$, resulting in the selection of
  $S_{t}^{\sigma} $ genomes of type $\sigma$ where $S_{t}^{\sigma}
  \in \{ 0,1,\ldots, I_{t}^{\sigma}\}$ and $\sum _{\sigma \in \{0,1\}^{L}
  }S_{t}^{\sigma} =N \le I_{t}$.
\item ({\bf Mutation}) Third, in the mutation step, each selected
  genome is mutated with probability $\mu $ per bit, giving rise to
  the next generation of $N_{t+1}^\sigma$ genomes of type $\sigma$,
  such that $\sum_{\sigma\in\{0,1\}^L}N_{t+1}^\sigma = N$.
\end{enumerate}

\noindent
The starting state is denoted by $\vec{N}_0$ which is typically ggiven by  $N_0^\vec{0}=N$ and ${N}_0^\sigma=0$ for all $\sigma \neq  \vec{0},$ but we will often not use this assumption.
This RSM model is best viewed as a Markov chain where the state space is the set of functions $f: \{0,1\}^{2^L} \mapsto \{0,1,\ldots, N\}$ such that $\sum_\sigma f(\sigma)=N.$ 
Thus, the number of states of this Markov chain is ${{N+2^L-1}\choose{N}}$ which is roughly  $N^{2^L}.$ 
  It can be shown (see Fact \ref{fct:ergodic-chain}) that for any $0< \mu <1$  the transition matrix of this Markov chain has a unique stationary vector,   denoted by ${\pi}.$  $\pi$ is indexed by all $f$ satisfying the property above and $\sum_f \pi(f)=1,$ i.e., $\pi$ is a probability distribution over the state space of the RSM Markov chain.   

Let $\vec{D}_{t} \defeq
(N_{t}^{\sigma}/N)_{\sigma \in \{0,1\}^{L} } $ denote the random vector which captures the fractional population at time $t.$
It can also be shown that 
 $\lim_{t \rightarrow \infty} \E{\vec{D}_t|\vec{D}_0}$ exists and is independent of $\vec{D}_0.$ We denote this limit as $\E{\vec{D}_\infty}$ and it can be computed from the stationary vector $\pi$ as follows:
 $$ \E{D_\infty^\sigma} = \sum _{i=0}^N \sum _{f: f(\sigma)=i} \frac{i}{N} \cdot \pi(f).$$
\noindent
 Finally, we will subscript $\vec{D}_t$ with parameters such as $\mu,N$ when we want to highlight dependence on them, e.g. $\vec{D}_{t,\mu,N}.$
The key questions of interest, especially given the fact that computing $\pi$ would be prohibitive even for small values of $L$ and $N$, are:
\begin{enumerate} 
\item  For a fixed $t$, what does $\E{\vec{D}_t| \vec{D}_0}$ converge to as 
 $N$ increases? 
\item Is there a notion of error threshold in the RSM model?
\item   How to obtain an estimate of $\E{\vec{D}_\infty}$ efficiently? 
\end{enumerate}

\noindent
We present theoretical results that address all of the above questions. Note that if the answer to the first  question is that $\E{\vec{D}_t|\vec{D}_0}$ converges to the  prediction of the quasispecies model $\vec{m}_t$ with the same starting states ($\vec{m}_0=\vec{D}_0$), then it is important as we can leverage the significant understanding obtained from the study of the quasispecies model while incorporating stochastic effects with finite populations. 
For the second question, we need to first define a notion of the error threshold in the RSM model. We do so formally, given a distance function $d.$ 
\begin{definition} 
\label{def:error-thresholds-finite}
Let $\eps \geq 0.$  
$$\mu_{c}^d(\eps,N) \defeq \min \left\{ \mu \in (0,1):
    d(\E{\vec{D}_{\infty,\mu,N}},\vec{U}) \leq \eps 
  \right\},$$ where $\vec{U}$ is the uniform distribution over all genomes of length $L.$
\end{definition}
\noindent
What we want to understand is $\lim _{N \rightarrow \infty
}\mu_{c}^d(\eps,N).$ Again, if we can show that the answer here is
$\mu_c(\eps),$ then we can translate the insights from the
quasispecies model to the RSM model.

For the third question, first note that if we want to estimate the
error threshold, we need to be able to compute $\E{\vec{D}_\infty}.$
Secondly, we consider the standard notion of efficiency: the algorithm
to estimate $\E{\vec{D}_\infty}$ should be polynomial in the input
size. As we noted, the state space of the RSM Markov chain is
prohibitively large and computing the stationary state is
prohibitive. We employ the Markov Chain Monte Carlo method and run the
RSM process for some time $\tau$ such that it is guaranteed that
distribution from which $\vec{D}_\tau$ is drawn comes {\em
  statistically close} to the distribution from which $\vec{D}_\infty$
is drawn, irrespective of $\vec{D}_0.$ Simulating each step of the
random walk can be done efficiently. Hence, we are led to the question
of bounding the {\em mixing time} of the RSM Markov chain: the smallest time
the finite time distribution needs to come close to the steady state
distribution for all starting configurations.

The issue of how the input is presented is also important and we
briefly discuss it here. In one model, one can be given all $a_\sigma$
which would require bit length about $\sum_\sigma \log a_\sigma$ and
can, in principle, be as large or even larger than $2^L.$ Often, this
is not the case and either the values $a_\sigma$ are given by a simple
equation, or only some fixed number, say $k \ll 2^L$ of the values
$a_\sigma$ are strictly bigger than $1.$ In the latter case the input
has bit length roughly $O(k\log \max_\sigma a_\sigma +\log L+\log
\nfrac{1}{\mu}).$ Another case for the input is when $a_\sigma$ are
class invariant. In this case the input is of length roughly $O(L\log
\max_\sigma a_\sigma +\log \nfrac{1}{\mu}).$ We now proceed to
summarize our results.

\subsection{Our Results}

We now give informal statements of our main results, before
describing the mathematical techniques employed in the proofs of these
results.  The formal statements of the theorems described here appear
in Section \ref{sec:form-stat-main}, after a discussion in Section
\ref{sec:disc-comp-with}, while the formal proofs are deferred to the
Appendix.
\subsubsection{Convergence of the Quasispecies and the RSM Model} 
\begin{theorem}[Convergence of the RSM
 and Quasispecies Models]
\label{thm:convg-claim1}
Fix a fitness landscape $A$ with positive entries and a mutation
transition matrix $Q$.  Consider the RSM process started with the initial
state $\vec{D}_0$ and consider the evolution of the quasispecies model started
with the initial state $\vec{m}_0 = \vec{D}_0$.   Then for any fixed
time $t$,
  \begin{equation}
    \lim_{N\rightarrow\infty}\E{\vec{D}_{t}|\vec{D}_0} = \vec{m}_{t},  \end{equation}
      where $\vec{m}_{t}$ is the fractional population vector at time $t$ starting from $\vec{m}_0$ predicted by the quasispecies model.
\end{theorem}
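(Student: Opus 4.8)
The plan is to argue by induction on the generation count $t$, and the engine of the whole argument is a single \emph{exact} identity relating one step of the RSM process to the quasispecies map $\vec{r}$ of \eqref{eq:r-def}. First I would compute the expected next-generation population conditioned on the current counts $\vec{N}_t$. In the selection step $\inp{S_t^\sigma}_\sigma$ is a multivariate hypergeometric vector drawn from the intermediate pool of size $I_t = \sum_\sigma a_\sigma N_t^\sigma$, so $\E{S_t^\sigma \,|\, \vec{N}_t} = N a_\sigma N_t^\sigma / I_t$; in the mutation step each selected genome of type $\sigma$ independently lands on type $\tau$ with probability $Q_{\sigma\tau}$, so $\E{N_{t+1}^\tau \,|\, S_t} = \sum_\sigma S_t^\sigma Q_{\sigma\tau}$. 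Composing these two and dividing by $N$ gives, after the factors of $N$ cancel, the clean identity
\begin{equation*}
  \E{\vec{D}_{t+1} \,|\, \vec{N}_t} = \frac{QA\vec{D}_t}{\norm[1]{A\vec{D}_t}} = \vec{r}\inp{\vec{D}_t},
\end{equation*}
valid for \emph{every} finite $N$. In words, the one-step conditional mean of the RSM fractional population is exactly the deterministic quasispecies update applied to the current fractional population.

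If $\vec{r}$ were linear this identity would finish the proof by iterated expectations, since $\E{\vec{D}_{t+1} \,|\, \vec{D}_0} = \E{\vec{r}\inp{\vec{D}_t} \,|\, \vec{D}_0}$ would equal $\vec{r}\inp{\E{\vec{D}_t \,|\, \vec{D}_0}}$. But $\vec{r}$ is a ratio and hence genuinely nonlinear, so $\E{\vec{r}\inp{\vec{D}_t}}$ and $\vec{r}\inp{\E{\vec{D}_t}}$ differ by a Jensen-type gap governed by the fluctuations of $\vec{D}_t$. Tracking means alone therefore does not close the recursion, and the right invariant to induct on is the stronger statement that $\vec{D}_t$ \emph{concentrates}: I would prove $\E{\norm[2]{\vec{D}_t - \vec{m}_t}^2} \to 0$ as $N \to \infty$, from which the claimed convergence of means follows by Cauchy--Schwarz. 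The base case $t=0$ is immediate because $\vec{D}_0 = \vec{m}_0$ is deterministic. For the inductive step I would split
\begin{equation*}
  \vec{D}_{t+1} - \vec{m}_{t+1} = \inp{\vec{D}_{t+1} - \vec{r}\inp{\vec{D}_t}} + \inp{\vec{r}\inp{\vec{D}_t} - \vec{r}\inp{\vec{m}_t}},
\end{equation*}
using $\vec{m}_{t+1} = \vec{r}\inp{\vec{m}_t}$. By the identity above the first term is the deviation of $\vec{D}_{t+1}$ from its own conditional mean, and the second will be controlled by the inductive hypothesis once $\vec{r}$ is shown to be Lipschitz.

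Two ingredients then remain. First, $\vec{r}$ is Lipschitz on the probability simplex: since every $a_\sigma \ge 1$, the denominator $\norm[1]{A\vec{x}} = \sum_\sigma a_\sigma x_\sigma \ge 1$ is bounded away from zero there, so $\vec{r}$ is a smooth rational map on a compact set and $\norm[2]{\vec{r}\inp{\vec{D}_t} - \vec{r}\inp{\vec{m}_t}} \le C\,\norm[2]{\vec{D}_t - \vec{m}_t}$ for a constant $C$ depending only on $A$, $Q$ and $L$. Second, and this is where I expect the real work to lie, I must bound the conditional variance $\E{\norm[2]{\vec{D}_{t+1} - \vec{r}\inp{\vec{D}_t}}^2 \,|\, \vec{N}_t}$ by $O\inp{1/N}$ \emph{uniformly} in $\vec{N}_t$. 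Here fluctuations enter from two independent sources, the hypergeometric selection and the per-genome mutation, and I would apply the law of total variance across the two steps. The hypergeometric variance of each $S_t^\sigma$ is at most $N/4$ (the finite-population correction factor being at most $1$), and the mutation step contributes a comparable $O(N)$ to the count-variance of each $N_{t+1}^\tau$; dividing by $N^2$ gives per-coordinate fraction-variance $O\inp{1/N}$, and summing over the $2^L$ coordinates preserves $O\inp{1/N}$ for fixed $L$. The point I must be careful about is \emph{uniformity}: these bounds must not depend on $\vec{N}_t$, which holds precisely because the hypergeometric and binomial variances are controlled by the universal factor $1/4$ rather than by the underlying proportions.

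Finally I would combine the two terms via $\norm[2]{u+v}^2 \le 2\norm[2]{u}^2 + 2\norm[2]{v}^2$, take expectations, and use the tower property to obtain a recursion of the form $\E{\norm[2]{\vec{D}_{t+1} - \vec{m}_{t+1}}^2} \le O\inp{1/N} + 2C^2\,\E{\norm[2]{\vec{D}_t - \vec{m}_t}^2}$, whose right-hand side tends to $0$ by the inductive hypothesis; since $t$ is fixed the accumulated constant stays finite. This completes the induction and gives $L^2$ (hence $L^1$) convergence, so that $\norm[2]{\E{\vec{D}_t \,|\, \vec{D}_0} - \vec{m}_t} \le \E{\norm[2]{\vec{D}_t - \vec{m}_t}} \to 0$, which is exactly the assertion $\lim_{N\to\infty}\E{\vec{D}_t \,|\, \vec{D}_0} = \vec{m}_t$. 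The main obstacle throughout is the uniform $O\inp{1/N}$ conditional-variance bound and its clean propagation through the iteration; everything else is either the exact one-step identity or standard continuity and second-moment bookkeeping.
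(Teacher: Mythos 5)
Your proposal is correct, and every structural ingredient matches the paper's: the exact one-step identity $\E{\vec{D}_{t+1}\,|\,\vec{N}_t} = \vec{r}\inp{\vec{D}_t}$ (the paper's Lemma on concentration, part 1), the Lipschitz continuity of $\vec{r}$ on the simplex (their Lemma \ref{lem:lipschitz}), and an induction on $t$ that splits $\vec{D}_{t+1}-\vec{m}_{t+1}$ into a fluctuation term and a term controlled by the Lipschitz constant. Where you genuinely diverge is in how the fluctuations are quantified and propagated. The paper uses exponential concentration --- Azuma/the method of bounded differences for the hypergeometric selection and Chernoff--Hoeffding for the mutation --- to show that each coordinate of $\vec{D}_{t+1}$ stays within $\epsilon$ of its conditional mean except with probability $\exp\inp{-\Omega(\epsilon^2 N)}$, and then chains these high-probability events by maintaining shrinking intervals $\insq{l_t^\sigma, u_t^\sigma}$ that contain both $m_t^\sigma$ and, after a union bound, $D_t^\sigma$. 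You instead run a pure second-moment argument: a uniform $O(1/N)$ bound on the conditional variance via the law of total variance (using that the hypergeometric and binomial count-variances are at most $N/4$ independently of $\vec{N}_t$, which is the correct uniformity point), fed into an $L^2$ recursion $\E{\norm[2]{\vec{D}_{t+1}-\vec{m}_{t+1}}^2} \leq O(1/N) + 2C^2\,\E{\norm[2]{\vec{D}_t-\vec{m}_t}^2}$ and finished with Jensen. For the statement as given --- convergence of the conditional mean at fixed $t$ --- your route is sound and arguably cleaner, and it even gives the natural $O(N^{-1/2})$ rate in expectation. What you give up is the exponentially small failure probability: the paper's version yields a whole-trajectory, high-probability tracking statement, and the authors explicitly contrast their Chernoff-type bounds with the ``second moment inequalities'' of van Nimwegen et al., which is essentially the style of argument you have reinvented. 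So be aware that your proof establishes the theorem but not the quantitatively stronger concentration the paper leans on elsewhere.
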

The theorem shows that in the infinite population limit, the
stochastic fluctuations of the RSM process disappear, and the model
converges to the quasispecies model.  Informally, the main technical
difficulty in proving the above theorem is to establish a result of
the form $\lim _{N \to \infty } \mathbb{E}[\vec{D}_{t}|\vec{D}_{t-1}]
= \lim _{N\to \infty } \E{\vec{D}_t|\vec{D}_0}$ with probability $1$,
which would establish convergence to the quasispecies model.  The full
proof is deferred to the Appendix.  This convergence result allows us
to show that for any distance function $d$, a finitary version of
the error threshold, $\mu_c^d(\epsilon,N)$ as defined above,
converges to the error
threshold $\mu_c^d(\epsilon)$ of the quasispecies model, as the
population size goes to infinity.
These two results provide validation for the finite population RSM
model by establishing that in the infinite population limit, the
predictions from the RSM model converge to those of the quasispecies
model. We now move on to problems concerning the mixing time and
other computational issues of the RSM model.

\subsubsection{Computational Results in the  RSM 
 Model}  
As noted before, a  primary computational question in both the
quasispecies model and the RSM model is the determination of the
quasispecies, or the expected population profile at stationarity in
the RSM model, which can then be used to estimate error thresholds
(see Section \ref{sec:overv-techn-contr} for an overview and Sections \ref{sec:algor-estim-error-1} and
\ref{sec:algor-estim-error-2} for  details).  For the
quasispecies model, a satisfactory solution to this problem is
obtained via the observation that the quasispecies is the leading
right eigenvector of the $QA$ matrix.  The $QA$ matrix is of dimension
$2^L\times2^L$, and the above observation can thus be used to obtain
efficient algorithms using black-box eigenvector finding algorithms
for moderate values of $L$.  In the case of class-invariant fitness
landscapes, it is known~\citep{SS82} that one only needs to
find the leading eigenvector of an $(L+1)\times (L+1)$ matrix.

However, similar approaches are not as effective for the RSM model.
In this case, the stationary distribution is the leading eigenvector
of the transition matrix $\mathcal{M}$ of the RSM process which is of
dimension roughly $N^{2^L}$.  Using ideas similar to those referred to
above, one can reduce the running time for computing the stationary
distribution to $N^{O(L^2)}.$
\begin{theorem}[Computation of Steady State in the Class Invariant Case]
\label{thm:projected-chains1}
For any class invariant fitness landscape $A$, there is an algorithm
running in time $T= O(N^{O(L^2)})$ which computes the steady state of
the RSM process with population size $N$ and the genome length $L.$
\end{theorem}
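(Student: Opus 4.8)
The plan is to exploit the permutation symmetry inherent in a class-invariant landscape to \emph{project} the RSM Markov chain onto a much smaller chain that records only how many genomes lie in each Hamming-weight class, and then to compute the stationary distribution of that projected chain directly. This is the finite-population analogue of the observation of~\citep{SS82} that the quasispecies of a class-invariant landscape is the leading eigenvector of an $(L+1)\times(L+1)$ matrix: there the $2^L$ genome frequencies collapse to $L+1$ weight-class frequencies, and here the full state of $N$ genomes collapses to a count profile over the $L+1$ weight classes.

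Concretely, since $A$ is class invariant I write $a_\sigma = a_{\w{\sigma}}$, so fitness depends on Hamming weight alone. To each full state $f$ I associate its profile $\Phi(f) = (n_0,\ldots,n_L)$ with $n_k \defeq \sum_{\sigma:\,\w{\sigma}=k} f(\sigma)$; since $\sum_k n_k = N$, the number of profiles is $\binom{N+L}{L} = N^{O(L)}$, which is minuscule compared with the $\approx N^{2^L}$ states of the full chain. The heart of the argument is a lumpability lemma: the profile process $\Phi(\vec{N}_t)$ is itself a Markov chain, i.e. the law of the next profile depends on the current state only through $\Phi$. I would prove this by following the three RSM steps at the level of profiles. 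Reproduction sends a state of profile $(n_k)$ to an intermediate pool whose weight-$k$ count is exactly $a_k n_k$, a function of $\Phi$ only. Selection draws $N$ genomes without replacement from this pool, so the selected counts $(S_0,\ldots,S_L)$ are multivariate hypergeometric with class sizes $(a_k n_k)_k$ and sample size $N$ — depending on the pool only through the class totals, hence on $\Phi$ only. Mutation is where the symmetry is essential: a genome of weight $k$ mutates to weight $j$ with the same probability $P(k\to j) = \sum_{a-b=k-j}\binom{k}{a}\binom{L-k}{b}\,\mu^{a+b}(1-\mu)^{L-a-b}$ (with $a$ the number of flipped ones and $b$ the number of flipped zeros) regardless of which specific weight-$k$ string it is, because these two flip-counts are independent binomials determined only by $k$. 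Composing the three kernels shows the next-profile distribution is a function of $(n_k)$ alone, which is exactly strong lumpability.

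Granting the lemma, I would write down the $\binom{N+L}{L}\times\binom{N+L}{L}$ transition matrix of the projected chain explicitly: each entry is a finite sum over selection outcomes (hypergeometric weights) and mutation outcomes (products of the $P(k\to j)$ over a multinomial split of each $S_k$), all computable in time polynomial in $N$ and $L$. Ergodicity of the projected chain is inherited from that of the full chain (Fact~\ref{fct:ergodic-chain}), so it has a unique stationary vector, which I obtain as the leading left eigenvector by solving the associated linear system; this costs time polynomial in the number of states, i.e. $N^{O(L)}$, comfortably within the claimed bound $N^{O(L^2)}$. Finally, because the full chain is invariant under the $S_L$-action permuting the $L$ bit positions, its stationary distribution $\pi$ is symmetric, so $\E{D_\infty^\sigma}$ depends only on $\w{\sigma}$ and equals $\E{n_k/N}/\binom{L}{k}$ for $\w{\sigma}=k$; these expectations are read off directly from the projected stationary distribution, recovering the full expected steady-state profile.

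I expect the main obstacle to be the lumpability lemma, and in particular making the selection step rigorous. Selection is \emph{without replacement}, and within a weight class the intermediate pool is split unevenly among the $\binom{L}{k}$ distinct strings according to the finer structure of $f$ that the profile discards. I must verify that this discarded structure never re-enters the next profile — which it does not, precisely because the subsequent mutation kernel is weight-homogeneous, so only the aggregate selected counts $(S_k)$ survive into the next generation. A secondary point requiring care is confirming that the multivariate hypergeometric law of $(S_k)$ truly depends on the pool only through the class totals $a_k n_k$ and not on its internal composition; once both facts are in hand, the lemma and hence the theorem follow.
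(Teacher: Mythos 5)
Your proposal follows essentially the same route as the paper: lump states by their Hamming-weight profile, verify strong lumpability separately for the reproduce--select step (multivariate hypergeometric depending only on the class totals $a_k n_k$) and for the weight-homogeneous mutation kernel, and then solve for the stationary vector of the projected $\binom{N+L}{L}$-state chain; this is exactly the content of Lemma~\ref{lem:proj-strong} and Theorem~\ref{thm:projected-chains}. The one slip is your claim that each entry of the projected transition matrix is ``computable in time polynomial in $N$ and $L$'': the mutation entry $R([h],[h'])$ as you describe it is a sum over all $(L+1)\times(L+1)$ nonnegative integer matrices $(d_{ij})$ with prescribed row and column sums, and the number of such terms is $\prod_i\binom{h(i)+L}{L}$, which can be $N^{\Theta(L^2)}$ --- this enumeration, not the linear algebra on the $N^{O(L)}$-dimensional system, is where the paper's $L^2$ in the exponent comes from. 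Since the theorem only claims $T=O(N^{O(L^2)})$, your construction still delivers the stated bound once the per-entry cost is accounted for correctly; your closing observation that bit-permutation symmetry lets one recover $\E{D_\infty^\sigma}$ from the projected stationary distribution as $\E{n_k/N}/\binom{L}{k}$ is a correct addition not spelled out in the paper.
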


 However, as mentioned before, in many applications, as in
the case of HIV, for instance, where $N\sim 10^{3} -10^{6} $ and
$L\sim 10^{4} $, the problem is still computationally prohibitive.  In
these cases, one typically resorts to Monte Carlo simulations of the
RSM process for estimating the population profile at
stationarity~\citep{SimulationsPaper12}, and thus we are led to
considering the mixing time of the RSM process.  
The following theorem derives conditions on the parameters of evolution  under
which the RSM model mixes rapidly.

\begin{theorem}[Mixing Time of the RSM Process]\label{thm:mixing-informal}
  Given a fitness landscape $A$, mutation rate $\mu$, the RSM process
  exhibits fast mixing if $(1-2\mu)\frac{\max_\tau a_\tau}{\min_\tau
    a_\tau}L  + \frac{1}{N} < 1.$
\end{theorem}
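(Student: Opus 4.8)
The plan is to establish rapid mixing via the path coupling method of Bubley and Dyer, using the transportation metric induced by Hamming distance on the space of population profiles. Concretely, I would view each state as a multiset of $N$ genomes and define the matching distance $\dm{X,X'}$ as the minimum, over all perfect matchings between the individuals of $X$ and those of $X'$, of the sum of the Hamming distances of the matched pairs. The single-bit moves -- changing one bit of one individual -- generate exactly this metric as their shortest-path metric, and the diameter of the state space under it is at most $NL$. By the path coupling theorem it therefore suffices to produce, for every pair of states $X,X'$ that differ in exactly one bit of exactly one individual, a coupling of a single RSM step with $\E{\dm{X_1,X_1'}} \le \beta$, where $\beta \defeq \inp{1-2\mu}\nfrac{\max_\tau a_\tau}{\min_\tau a_\tau}L + \nfrac{1}{N}$. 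Since the hypothesis gives $\beta<1$, the path coupling theorem then yields a mixing time of $O\inp{\nfrac{\log\inp{NL/\e}}{1-\beta}}$, which is $\poly{L,\log N,\log\nfrac{1}{\e}}$ and hence establishes fast mixing.

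To build the one-step coupling I would treat the three sub-steps -- reproduction, selection, mutation -- in turn. Reproduction is deterministic: if the discrepant individual has type $\sigma$ in $X$ and type $\tau$ in $X'$ with $\ham{\sigma,\tau}=1$, then the two intermediate populations agree on the copies descended from the $N-1$ shared individuals and differ only in that $X$ contributes $a_\sigma$ copies of $\sigma$ while $X'$ contributes $a_\tau$ copies of $\tau$; in particular their total sizes $I$ and $I'$ differ by $\abs{a_\sigma-a_\tau}$. For selection I would couple the two without-replacement draws so as to select, as far as possible, the same copies from the shared part of the intermediate population, pairing surviving copies of $\sigma$ with surviving copies of $\tau$. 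Because each parent is amplified into a fraction at most $\nfrac{\max_\tau a_\tau}{\min_\tau a_\tau}$ of the mean, in expectation at most this many copies of the discrepant individual survive selection, each paired with a partner at Hamming distance at most $L$. For mutation I would use the per-coordinate maximal coupling: on every bit where a matched pair already agrees the two chains apply identical noise and stay equal, while on each bit where they disagree the maximal coupling of two $\mathrm{Bernoulli}\inp{\mu}$ flips reconciles them with probability $2\mu$ (here $\mu\le\nfrac{1}{2}$, the only relevant regime). Thus coupled mutation contracts the Hamming distance of every matched pair in expectation by the factor $\inp{1-2\mu}$, which supplies the $\inp{1-2\mu}$ in $\beta$, while the amplification together with the worst-case per-pair distance supplies the $\nfrac{\max_\tau a_\tau}{\min_\tau a_\tau}L$.

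The main obstacle will be the selection step. Coupling two hypergeometric (without-replacement) draws from intermediate populations whose total sizes $I$ and $I'$ differ is delicate: the per-draw selection probabilities of even the shared copies are perturbed by an amount of order $\nfrac{\abs{a_\sigma-a_\tau}}{I}=O\inp{\nfrac{1}{N}}$, so the coupling cannot keep the two selected multisets perfectly aligned. I would quantify this defect by writing selection as a sequential draw and bounding, step by step, the probability that the two chains are forced to disagree on a shared copy; summing these over the $N$ draws produces the additive $\nfrac{1}{N}$ term in $\beta$ and accounts for the rare unmatched copies, which I would charge at their worst-case cost of at most $L$ against an arbitrary common individual. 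Carrying out this accounting so that the contributions of the matched pairs, of the size mismatch, and of the residual unmatched copies combine into the clean bound $\beta=\inp{1-2\mu}\nfrac{\max_\tau a_\tau}{\min_\tau a_\tau}L+\nfrac{1}{N}$ is the technical heart of the argument; the reproduction and mutation steps, by contrast, are essentially immediate once the matching metric and the per-bit maximal coupling are in place.
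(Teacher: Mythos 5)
Your overall framework --- path coupling in the matching metric $\dm{\cdot,\cdot}$, a two-phase coupling in which the per-bit coupling of the mutation step contributes the factor $(1-2\mu)$ and the reproduce--select phase contributes the $\frac{\max_\tau a_\tau}{\min_\tau a_\tau}L$ term --- is exactly the paper's, and your mutation coupling and its $(1-2\mu)$ contraction are correct. The gap is in the selection step, which you yourself identify as the technical heart. Your accounting there does not close: you propose a sequential without-replacement coupling in which each of the $N$ draws incurs a disagreement probability of order $\nfrac{1}{N}$, and you claim these defects sum to the additive $\nfrac{1}{N}$ in $\beta$. But $N$ draws each with an $O(\nfrac{1}{N})$ chance of disagreeing produce $O(1)$ mismatched individuals in expectation (of order $\frac{\max_\tau a_\tau}{\min_\tau a_\tau}$ in the worst case, e.g.\ when $a_\sigma-a_\tau$ is comparable to the mean fitness), and each mismatch must be charged up to $L$ in the matching metric. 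Carried out as described, this yields an additive term of order $\frac{\max_\tau a_\tau}{\min_\tau a_\tau}L$ --- a second copy of the main term --- rather than $\nfrac{1}{N}$, which destroys the contraction.

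The paper sidesteps this by not coupling the two hypergeometric draws sequentially. It builds a single bag of labelled balls: for each genome, $a_\sigma\min(n^x_\sigma,n^y_\sigma)$ balls carry a shared label and the excess carries an $x$-only or $y$-only label; one uniform permutation of the bag is drawn, and each chain takes the first $N$ balls bearing its labels. The marginals are correct because the restriction of a uniform permutation to a subset is a uniform permutation of that subset, and the \emph{only} source of mismatch is a non-shared ball landing among the first $N$ positions --- there is no separate misalignment of the shared copies to account for. Since there are only about $a_{\sigma_1}$ non-shared balls out of at least $(N-1)\min_\tau a_\tau$ in total, the expected number of mismatches is at most $\frac{N}{N-1}\frac{\max_\tau a_\tau}{\min_\tau a_\tau}$, giving $\E{\dm{I_X,I_Y}}\leq\frac{N}{N-1}\frac{\max_\tau a_\tau}{\min_\tau a_\tau}L$. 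The $\frac{1}{N-1}$ thus appears as a multiplicative correction to the main term (which, under the theorem's hypothesis, is indeed dominated by the stated $+\nfrac{1}{N}$), not as an independent additive defect arising from the shared copies. If you replace your sequential coupling with this common-permutation construction, the rest of your argument goes through essentially as you wrote it.
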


\noindent
Having stated our results, we now highlight the techniques employed in the proofs.

 \subsection{Overview of Our Technical Contributions}
\label{sec:overv-techn-contr}
As before, we will denote by $\vec{D}_t$ the random variable of
fractional populations after $t$ steps of the RSM process, and by $\vec{S}_t$
the random variable of the populations of genomes after the
replication and selection steps of the $(t+1)$-the step of the RSM
process.  

\paragraph{\bf Our convergence result (Theorem \ref{thm:convg-claim1}).}
The starting point of the proof of our convergence result is to observe that
$\E{\vec{D}_{t+1}|\vec{D}_t}$ has the same functional form $\vec{r}$
(as a function of $\vec{D}_t$) as the evolution equation of the discrete
time quasispecies model, with $\vec{r}$ as defined  in Equation \eqref{eq:r-def}: $\E{\vec{D}_{t+1} | \vec{D}_t} =
\vec{r}(\vec{D}_t)$.  Our high level approach is to first show that
$\vec{D}_{t+1}$ is actually \emph{concentrated} around
$\E{\vec{D}_{t+1}| \vec{D}_t}.$  Using the
Lipschitz continuity of the evolution function $\vec{r}$, we can then
chain these concentration results inductively to show that the
evolution of $\vec{D}_t$ is tightly concentrated around the evolution of the
discrete time quasispecies model, which allows us to show that $\E{\vec{D}_{t}}$ converges to the quasispecies as
$N\rightarrow\infty$.  To illustrate the ideas involved, we consider
the case $L =1$. Here the two genomes are $\{0,1\}.$  After the replication phase in the $t$-th step, there
are $a_0D^0_tN$ copies of ${0}$.  For the $i$-th copy, let
$R_i$ denote the indicator variable for this copy being selected in
the selection phase, so that $S_t^0 = \sum_{i=1}^{ a_0D_t^0N}{R_i}$.
Since the $R_i$'s are not independent, we cannot directly apply a
Chernoff bound.  However, since they are negatively correlated, one
expects concentration to hold, and this can indeed be shown using the
so-called method of bounded differences.  The same reasoning works for
$S_t^1$, and thus we get that given $\vec{D}_t$, the intermediate
population $\vec{S_t}$ after the replication and selection steps is
concentrated around its expectation with high probability.  We now
look at the mutation step.  Let $M_i$ be the indicator variable for
the $i$th genome being $\mathbf{0}$ after the mutation step.  We then
have $ND_{t+1}^0 = \sum_{i=1}^NM_i$.  Since the $M_i$'s are
independent random variables, it can be shown using a Chernoff bound
that given $\vec{S}_t$, ${D}_{t+1}^0$ is concentrated around
$\E{{D}_{t+1}^0|\vec{S}_t} =\nfrac{1}{N}( \mu S_t^0 + (1-\mu) S_t^1)$.
The two steps can then can be combined to show that given $\vec{D}_t$,
$D_{t+1}^0$ is concentrated around $\E{\E{{D}_{t+1}^0|
    \vec{S_t}}|\vec{D}_t}$ $=$ $\nfrac{1}{N}\E{\mu S_t^0 + (1-\mu)
  S_t^1|\vec{D}_t}$ $= \frac{(1-\mu)a_0D_t^0 + \mu D_t^1}{a_0D_t^0 +
  D_t^1}$.  The same reasoning works for $D_{t+1}^1$.

With some more work, this argument can be generalized to work for
arbitrary $L$.  The concentration guarantee we obtain is of the
following form: there are quantities $\epsilon_t$ and $p_t$ which are
both $o_N(1)$ such that given $\vec{D}_t$, $\abs{\vec{D}_{t+1} -
  \E{\vec{D}_{t+1}|\vec{D}_t}} \leq \epsilon_t$ with probability at
least $1 - p_t$.  In the next step, we chain these step-wise bounds
inductively in order to remove the conditioning and show that for all
$t\leq t_0$, $\vec{D}_{t}$ is concentrated around $\vec{m}_t$.  An
important component of the induction is the observation that $\vec{r}$
is Lipschitz continuous, which allows us to control the propagation of
the errors $\epsilon_t'$ in each step.  By the induction hypothesis,
we have that $\abs{\vec{D}_t - \vec{m}_t} \leq \epsilon_t'$ with
probability at least $1-p_t'$, where $\epsilon'_t$ and $p'_t$ are both
$o_N(1)$.  Assuming the Lipschitz constant of $\vec{r}$ to be $K$,
this implies that $\E{\vec{D}_{t+1}|\vec{D}_t} = \vec{r}(\vec{D}_t)$
is within distance $K\epsilon_t'$ of $\vec{m}_{t+1} =
\vec{r}(\vec{m}_t)$ with probability at least $1-p_t'$.  Applying the
convergence result from the first step, we then have that with
probability at least $ 1- p_{t+1}' = 1-p_t' - p_t$,
$\abs{\vec{D}_{t+1} - \vec{m}_{t+1}} \leq \epsilon_{t+1}' =
K\epsilon_t' + \epsilon_t$ of $\vec{m}_{t+1}$.  The quantities $p_t',
\epsilon_t'$ for $t \leq t_0$ can be chosen to be $o_N(1)$, which is
sufficient to show the required convergence.  The details appear in
Appendix \ref{sec:proof-theor-refthm:c}.  We now give an overview of
the proofs of our computational results.

\paragraph{\bf Computing the stationary distribution in the class invariant case (Theorem \ref{thm:projected-chains1}).}
Recall that the state space of the RSM Markov chain is roughly $N^{2^L}.$ However, if the fitness function is class invariant, we can show that the number of {\em distinct} coordinates in the state space is about $N^L.$  
To see this, first we define an equivalence relation on the  states of the RSM Markov chain. We say that   $f,g,$  which are functions from $\{0,1\}^L$ to $\{0,1,\ldots,N\}$ satisfying $\sum_\sigma f(\sigma)=N$ and $\sum_\sigma g(\sigma)=N,$ are \emph{equivalent} (denoted $f \equiv g$) if they  have the same statistics for every Hamming class, i.e., for every $0 \leq i \leq N,$
$$ \sum_{\inb{\sigma\in\inb{0,1}^L|\w\sigma = i}} f(\sigma)  = \sum_{\inb{\sigma\in\inb{0,1}^L|\w\sigma = i}} g(\sigma).$$  
Thus, the state space of the RSM Markov chain gets partitioned into about $(N+1)^{L+1}$ different classes.
Then, due to the fact that the fitness function is class invariant, it can be shown that  the transition probability of $f$ to any other equivalence class is the same as that of $g$ to the same class. Hence, one only needs to compute the transition probability from one equivalence class to another. This probability is a large binomial sum and one has to be careful in its computation and keep track of the number of bits required to represent each entry of this Markov chain over the equivalence classes. Once we have the transition matrix of this Markov chain, one can compute its largest eigenvector which corresponds to the stationary state. We show that, if one does this carefully, one can compute the eigenvector in time roughly $N^{O(L^2)}.$ The details appear in Section \ref{sec:proof-theor-refthm:p}.

 \paragraph{\bf Algorithm to compute the error threshold.}
Once we have the ability to either compute the stationary state of the
RSM process or derive independent samples from its  stationary state
(which allows us to estimate the relative frequencies of the genomes
at stationarity with a good precision by taking an average of the
sampled states), the algorithm to estimate  the error threshold is
simple. The idea is to start with a small value ($ \ll \nfrac{1}{L}$)
of $\mu,$ and to estimate/compute the stationary distribution of the RSM
process for the current value of $\mu.$ The algorithm then checks if the estimate of the 
stationary distribution is close to the uniform distribution on the
genomes in the measure of closeness of one's choice. If so, it stops and
outputs the current value of $\mu$ as an estimate of the error
threshold. Else, it increases $\mu$ by a very small amount and repeats the above steps. In
case direct computation of the stationary state of the RSM process is
computationally prohibitive, independent samples from the stationary
distribution of the RSM process are derived by simulating the RSM
process up to its mixing time. The number of samples required can be
estimated from a simple application of Chernoff bound on the random
variable corresponding to the stationary state distribution of the RSM
process. Hence, to establish bonds on the running time of the error threshold estimating algorithm, it is important to be able to bound the time it takes for the RSM process so that $\vec{D}_t$  comes close to the stationary state, $\vec{D}_\infty.$ Our next result is towards this. 

\paragraph{\bf Mixing time result (Theorem \ref{thm:mixing-informal}).} Since the stationary distribution
of the RSM chain is not very well understood, it is not clear how to
apply conductance-based geometric tools  or the canonical
paths method (see, for example, \citep{jerrum88}) in order to prove the mixing time result.
We are thus
led to more combinatorial coupling based methods.  Here one starts
with an integer valued metric $d$ on the state space of the Markov
chain, and then one runs two copies $X_t$ and $Y_t$ of the chain.  To
show fast mixing, it is then sufficient to prove that $X_t$ and $Y_t$
can be coupled so that $\E{d(X_{t+1},Y_{t+1})|(X_t,Y_t)} \leq \alpha <
1$. In general, defining a coupling can be tricky because one
needs to carefully argue that the marginals of the coupling agree with
the original Markov chain.

 We  define the coupling in two phases:
the first phase includes the replication and selection steps and the
second phase includes only the mutation step.  We begin with the
easier mutation step.  Let $I_t$ and $J_t$ denote the state of the two
RSM chains after the replication and selection steps.  For most
natural choices of the distance metric $d$, it is possible to couple
the mutation step using the standard coupling for the random walk on
the hypercube so that $\E{d(X_{t+1},Y_{t+1})|(I_t, J_t)} \leq
(1-2\mu)d(I_t, J_t)$.  The challenge however lies in controlling
$\E{d(I_t, J_t)|(X_t, Y_t)}$ while coupling the replication and
selection steps, since because of the global nature of the replication
and selection steps, $\E{d(I_t, J_t)|(X_t, Y_t)}$ can become quite
large.  We control this increase by a careful choice of the metric
$d$, and by appealing to the path coupling methods of Bubley and
Dyer~\citep{bubley97}.  The path coupling theorem says that for integer
valued $d$, it is sufficient to ensure $\E{d(X_{t+1},Y_{t+1})|(X_t,
  Y_t)} \leq \alpha \leq 1$ \emph{only} for states $X_t$ and $Y_t$
satisfying $d(X_t, Y_t) = 1$ in order to establish fast mixing. Our coupling is then defined as follows.
Fix a permutation of the $N$ genomes in the chain $X_t$: $d(X_t, Y_t)$
is then the minimum over all possible permutations of the $N$ genomes
in $Y_t$ of the sum of the Hamming distances between the genomes at the
same positions in the two permutations.  The main technical step is to
show that for this $d$, the replication and selection steps can be
coupled in such a way that starting from $X_t$ and $Y_t$ satisfying
$d(X_t, Y_t) = 1$, $\E{d(I_t, J_t) | (X_t, Y_t)}$ after these steps is
at most $\frac{N}{N-1}\frac{\max_\sigma a_\sigma}{\min_\tau a_\tau}L$.
It is in this step that we use the form of the distance metric $d$
crucially; the details of the coupling are somewhat technical and
involve arguing carefully that the coupling is valid, and are given in
Appendix \ref{sec:proof-theor-refthm:m}.  We then combine this with
the coupling for the mutation step described above to show contraction
in the expected distance under the condition
$(1-2\mu)\frac{N}{N-1}\frac{\max_\sigma a_\sigma}{\min_\tau a_\tau}L <
1$.

\section{Discussion and Future Perspectives}
\label{sec:disc-comp-with}

\subsection{Previous Work}
The notion of the quasispecies and the existence
of an error threshold were recognized first by Eigen and his coworkers
in the 1970s and 1980s \citep{eigen71,EMS89}.  Translation of these ideas into intervention strategies requires overcoming two key limitations of the quasispecies model. First,
the model  assumes an infinite population size, whereas
realistic population sizes can be quite small. With HIV, for instance,
the effective population size is estimated to be $\sim 10^3 - 10^6$
\citep{Kouyos07,Balagam11}. Second, the theory assumes a single-peak
fitness landscape, whereas realistic landscapes can be far more complex
\citep{Bonhoeffer04,Hinkley}.
Efforts over the
last several decades have attempted to overcome these limitations of the quasispecies model 
\citep{NS89,BS93,WH96,Thomas97,alves98,saakian06,TH07,PMD10} (see Wilke
\citep{WilkeCommentary} for a recent review). The finite population
case, however, has remained difficult to solve in full generality.
Most studies resort to simulations or
use approximate or heuristic approaches to describe the finite
population case, and we discuss some of these here.  

Nowak and Schuster~\citep{NS89} used a birth-death process to model the
underlying evolution in finite populations and using simulations
predicted that the error threshold scales as $1/\sqrt{N}.$ Their
model, however, does not converge to the quasispecies model as $N$ goes to
infinity. Alves and Fontanari~\citep{alves98} present a model which
employs a two-stage sampling with replacement in the selection
process: first sampling uniformly from the population, and then
sampling from the obtained sample with biases proportional to the
fitness. They note, however, that sampling with replacement destroys
the negative correlation between the selection of two individuals of
the same species induced by the finite population constraint when
selection is implemented using sampling \emph{without} replacement.
They find that the error threshold scales as $1/N.$ Further, they only
analyze a heuristic deterministic approximation to their model, and do
not consider rigorously the question of convergence of their original
model to the quasispecies model. The closest to our convergence result is that by  van Nimwegen
\etal.~\citep{nimwegen99} who show convergence to a deterministic
Eigen-like dynamics but again employ sampling with replacement and use
special cases of additive fitness landscapes. With finite populations,
there can be a significant statistical difference between sampling with
and without replacement, the latter (which we employ) being more
realistic. It is well known that as $N \rightarrow \infty$ the
difference between sampling with and without replacement shrinks, but
then as we prove, so does the difference between our population
genetics model and the quasispecies model. Their convergence proof
has a similar structure as ours but we are able to use Chernoff bound
type inequalities which are much stronger than the second moment
inequalities used by them. Consequently, our  convergence results are
quantitatively stronger.  We additionally prove convergence of the
error threshold and fast mixing, questions not considered by
\citep{nimwegen99}. More recently,
Musso~\citep{musso11:} presented the transition matrix for sampling
with replacement in the case $L=1$ and also claimed convergence to
the quasispecies model in the deterministic limit.  No attempt, however,
is made in~\citep{musso11:} to make this latter claim rigorous.
Another class of studies relies on approximations and heuristics
inspired from physics, and in particular statistical mechanics, to
render the finite population case mathematically tractable (e.g.,
\citep{BK98,Saakian08,PMD10}).

While previous studies have focused extensively on
the fractional distribution of genomes at stationarity, little is known of the time to reach the
stationary state. Campos and Fontanari \citep{CF99}  show that in the limit of infinitely large genome lengths ($L \rightarrow \infty$)  and population sizes ($N \rightarrow \infty$) and with the single peak fitness  
landscape, the timescale associated with the decline of the master sequence is 
$\nfrac{1}{\ln (qa)}$ where $q$ 
is the probability that a genome is replicated without error, and $a$ 
is the relative fitness
of the master sequence. Further, they show that with finite populations, this timescale is proportional to $\sqrt{N}.$  The mixing time when $L$ and $N$ are both finite and when the fitness landscape
is more general than the single peak remains unknown. The latter mixing time has practical
significance in the modeling of the action of mutagenic drugs, as it respresents the duration of therapy required 
to ensure completion of the transition to the error catastrophe. Our study presents
conditions when the mixing is rapid and hence the transition to error catastrophe occurs quickly.
Further, for computational studies that attempt to realize this transition \emph{in silico}, our study presents
an algorithm that allows efficient Monte Carlo sampling-based estimation of the error threshold.

\subsection{Applications of the RSM Model}
\label{sec:appl-finite-popul}

The motivation behind the RSM model and the algorithms
discussed here is to get a basic framework for understanding the
evolution of viruses of current interest such as HIV. Making concrete
predictions relevant to the clinical setting requires super-imposing
the specifics of the viruses of concern on the present framework. This
often involves subtle modifications of the RSM process along with
validation against data.  For example, in related recent work, two of the authors and
their co-workers applied the RSM model to mimic the within-host
genomic evolution of HIV-1~\citep{SimulationsPaper12}.
It has been shown before ~\citep{Balagam11} that these simulations
quantitatively capture data of the evolution of viral genomic
diversity in patients over extended durations ($\sim 10$ years)
following infection and the approach is extended in
\citep{SimulationsPaper12} to estimate the error threshold of
HIV-1.   We envision that similar adaptation of our model will prove useful in
elucidating the evolution and treatment guidelines for other asexual
haploid organisms of interest. 

\subsection{Critique of the RSM Model}
We note that our structural and computational results are independent of the nature of the
fitness landscape so long as there are no lethal mutations ($a_\sigma \neq 0$ for any $\sigma$). 
Our model, however, does not consider lethal
mutations. While letting some $a_{\sigma}$ be $0$ does not affect the
quasispecies model due to the constant rescaling involved, it
introduces an absorbing state in the RSM Markov chain, thus making it
non-ergodic, and causing the population to eventually decrease to zero.  While
Wilke and others \citep{WilkeCommentary, WK93, TH07} have commented on
the role of lethal mutations in extreme cases, establishing their full
implications lies beyond the scope of the present paper.
Although lethal mutations do occur, it turns out that in many important scenarios such as the evolution of HIV-1, a Hamming class invariant landscape without lethal mutations appears to capture key features of the underlying
fitness interactions \citep{Bonhoeffer04}, rendering our RSM framework applicable. 

Finally, we note that our assumption of a fixed population size, $N,$ is consistent with the
widely accepted population genetics-based models of evolution, where a constant effective
population size is employed to quantify the strength of stochastic effects \citep{citeulike:1993435}. Note that allowing $N$ to
vary with time (generations), does not increase the complexity in our model. The distinction
between an infinite population model and a finite population model arises from the culling of the
population in the latter model in order to maintain a finite population size. A fixed $N$ or varying $N$
will only result in different extents of culling in different generations, but will not change the overall
structure of the model. The advantage in keeping $N$ constant for our present study is that it allows
easier examination of the convergence to the quasispecies model.

\subsection{Open Problems}
Our study of the quasispecies and RSM models has
revealed several interesting and important problems. We list the main
ones here.

\paragraph{\bf Structure of the Quasispecies.}
Perhaps the most attractive feature of finite population models as
opposed to the quasispecies model is that they can be used to study
the effect of random genetic drift on inter-patient variations.
Inter-patient variations in disease progression and response to
treatments are known to be significant with HIV infection
\citep{Nijhuis98,Gonzalez05}. The collection of viral particles in an
infected individual may be thought of as one realization of the random
viral evolutionary process, and $\lim_{t \to \infty} \Var
{D_{t}^{\sigma}} $ then provides an estimate of inter-patient
variations in viral evolution due to the effect of the finite
population size.  Thus, in addition to the structure of the
quasispecies in the finite population case, defined by the expected
frequencies $\lim_{t \to \infty}\E{\vec{D}_t}$ when $N<\infty
$, the variance of the frequencies $\lim_{t \to \infty}\Var{D_{t }^{\sigma}
}$ as a function of the population size $N$ is also an important quantity to be studied.

\paragraph{\bf Error Threshold.} In the quasispecies model with the
single-peak fitness landscape, $\mu _{c} $ has been found, without a rigorous proof,  to be
$O(1/L)$, so that an error catastrophe occurs for $\mu \ll 0.5$ (e.g.,
see \citep{EMS89}). Further, the transition is sharp, so that a small
increase in $\mu $ from below to above $\mu _{c} $ induces a dramatic
change in the quasispecies structure. With other fitness landscapes,
such as the multiplicative landscape, however, the quasispecies
approaches the uniform distribution gradually as $\mu $ approaches
$0.5$ \citep{WH96}. Further, lethal mutations, where $a_{\sigma}=0$ for
some $\sigma$'s, appear to show the existence of an error threshold
only if multiple mutations in a single replication are allowed
\citep{WK93,TH07,WilkeCommentary}. Thus, the conditions under which a
sharp transition leading to an error catastrophe at $\mu _{c} \ll 0.5$
would occur remain to be established.  Second, the dependence of $\mu
_{c} $ on $N$ remains to be identified. While some simulations suggest
a $1/\sqrt{N}$ dependence \citep{NS89,BS93}, others find the dependence
to go as $1/N$ \citep{alves98}. As pointed out before, knowledge of
$\mu _{c} $ for finite $N$ is important in the modeling of
antiviral strategies based on mutagenic drugs.

\paragraph{\bf Mixing Time.}
The main outstanding question here is to get a tight
bound on the mixing time of the RSM Markov chain for a full range of evolutionary parameters. 
We notice that our result shows a good mixing
time bound only under certain conditions on the parameters.  Though we
conjecture that the chain is rapidly mixing for other values of the
parameters too, we believe that  novel methods would be needed to
extend our results in this direction.
 Apart from being useful in determining the time required for simulations to produce samples from the stationary distribution, the mixing time bounds  also  have biological
significance.  For example, when modeling the effect of a mutagenic
drug under the RSM model, the convergence rate would
models the minimum required duration of treatment before the error catastrophe occurs. 

\section{Formal Statements of Main Results}
\label{sec:form-stat-main}
\subsection{Preliminaries and Definitions}
\label{sec:prel-defin}
In this section we present rigorous statements of our results. Several definitions may be found repeated here in the interest of the readability of this section.
We recall that genomes of length $L$ are denoted by
$L$-bit 0-1 strings.  We will denote the Hamming distance between
genomes $\sigma$ and $\tau$ by $\ham{\sigma, \tau}$, and the Hamming
weight of a genome $\sigma$ by $\w{\sigma}$.  A \emph{population} is
defined as a multiset of genomes of the same length.  While discussing
the RSM model, we will fix the size of the population to be $N$.
\paragraph{\bf The Markov Chain for the RSM Model.}
We will denote the evolution of the RSM
process using a time-indexed sequence of vector valued random
variables $\inp{\vec{N}_t}_{t=0}^\infty$. The entries of $\vec{N}_t$
are indexed by genomes $\sigma$, and the entry
$\nt{t}{\sigma}$ denotes the number of genomes of type $\sigma$ at
time $t$.  At every time $t$, 
$\sum_{\sigma\in \inb{0,1}^L}\nt{t}{\sigma} =
N$.  The random variables $\dt{t}{\sigma} \defeq \nfrac{\nt{t}{\sigma}}{N}$
denote the \emph{fractional population} of the genome $\sigma$ at time
$t$.
\begin{description}
\item[Reproduction Step and the Fitness Landscape]
  In the \emph{reproduction} step, each genome $\sigma$ produces
  $a_{\sigma}$ copies of itself, so that the number of genomes of type
  $\sigma$ after this step is $I_{t}^\sigma \defeq
  a_\sigma\nt{t}{\sigma}$, and the total number of genomes is
  $I_t\defeq \sum_{\sigma}a_{\sigma}\nt{t}{\sigma}$.  
  The
matrix $A$ defined by $A_{\sigma\sigma} \defeq a_\sigma$ and $A_{\sigma\tau} \defeq0$ for $\sigma \neq \tau$  is called the
  \emph{fitness landscape}.  
  {
  The fitness landscape is said to be
  \emph{class-invariant} if $a_\sigma$ depends only on the Hamming
  weight  of $\sigma$.}
  By a slight abuse of notation, we
  will denote by $a_i$ the fitness of all genomes with Hamming weight
  $i$ in the class invariant case.

\item[Selection and Mutation Steps and the Mutation Rate]
  In the \emph{selection} step, $N$ genomes are sampled 
  without replacement from the  genomes obtained
  after the reproduction step.  In the \emph{mutation} step, each bit
  of each of the $N$ genomes obtained after the selection step is
  flipped with a probability $\mu $, called the
  \emph{mutation rate}.  The \emph{mutation transition matrix} $Q$
  defined by $Q_{\sigma\tau}  \defeq
  \mu^{\ham{\sigma,\tau}}(1-\mu)^{L-\ham{\sigma,\tau}}$ gives the
  probability that a genome of type $\sigma$ mutates to one of type
  $\tau$ in the mutation step.
\end{description}
\noindent The RSM process as described above is a Markov chain on the
state space of functions $f: \inb{0,1}^{L}\longrightarrow
\mathbb{N}$, satisfying $\sum_{\sigma\in\inb{0,1}^{L}}f(\sigma) = N$.
The  transition matrix $\mathcal{M}$ of this chain is
described  in
Section \ref{sec:start-state-trans}.

\begin{fact}
  \label{fct:ergodic-chain}
  When the mutation rate $\mu \in (0,1)$ and $a_{\sigma} >0$ for all
  $\sigma,$ the Markov chain $\mathcal{M}$ corresponding to the RSM
  process is ergodic, and hence has a unique stationary distribution.
\end{fact}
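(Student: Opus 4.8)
The plan is to prove ergodicity directly by showing that the one-step transition matrix $\mathcal{M}$ of the RSM chain is \emph{strictly positive}: that is, for every ordered pair of states $f, g$ in the state space, the one-step transition probability $\mathcal{M}(f,g)$ is strictly positive. Once this is established, irreducibility is immediate (every state is reachable from every other state in a single step), and aperiodicity follows from the existence of self-loops (taking $g = f$ gives $\mathcal{M}(f,f) > 0$). A finite Markov chain that is irreducible and aperiodic is ergodic and, by the standard fundamental theorem of finite Markov chains (equivalently, the Perron--Frobenius theorem applied to the primitive stochastic matrix $\mathcal{M}$), possesses a unique stationary distribution. Thus the entire statement reduces to the positivity of $\mathcal{M}$.

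To establish positivity, the key observation is that $\mu \in (0,1)$ forces the mutation transition matrix to have all entries positive: for every pair of genomes $\sigma, \tau \in \{0,1\}^L$, we have $Q_{\sigma\tau} = \mu^{\ham{\sigma,\tau}}(1-\mu)^{L - \ham{\sigma,\tau}} > 0$, since both $\mu$ and $1-\mu$ are strictly positive and the exponents are nonnegative. This means that in the mutation step a single genome can mutate into any genome with positive probability, which is what drives the whole argument.

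Next I would fix arbitrary states $f$ (current) and $g$ (target) and exhibit a positive-probability path from $f$ to $g$ in a single step. Since $a_\sigma \geq 1$ for all $\sigma$, the reproduction step deterministically produces an intermediate population of total size $I = \sum_\sigma a_\sigma f(\sigma) \geq \sum_\sigma f(\sigma) = N$, so the selection step, which samples $N$ genomes without replacement, has at least one feasible outcome, and any fixed feasible selection outcome occurs with positive probability. Fixing such a selection outcome yields a multiset of $N$ genomes $\tau_1, \ldots, \tau_N$. Listing the $N$ genomes of the target multiset $g$ as $\rho_1, \ldots, \rho_N$ in an arbitrary order, and using that mutations of distinct genomes are independent while each single mutation $\tau_i \to \rho_i$ has probability $Q_{\tau_i \rho_i} > 0$, the probability that the selected population mutates into exactly $g$ is at least $\prod_{i=1}^N Q_{\tau_i \rho_i} > 0$. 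Multiplying by the positive probability of the chosen selection outcome gives $\mathcal{M}(f, g) > 0$.

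Since $f$ and $g$ are arbitrary (and in particular may coincide), this shows $\mathcal{M}$ is strictly positive, which is all that is needed. I do not expect a serious obstacle here: the only point requiring a little care is the selection step, where one must confirm that at least one selection outcome is feasible and carries positive probability, but this is guaranteed by the hypothesis $a_\sigma \geq 1$, which ensures $I \geq N$. All the real work is done by the positivity of $Q$, which is exactly the role played by the hypothesis $\mu \in (0,1)$; if instead $\mu \in \{0,1\}$ the mutation step would become deterministic and this one-step argument would break down.
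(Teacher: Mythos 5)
Your proposal is correct and follows essentially the same route as the paper, which merely sketches the argument by asserting that irreducibility and a positive self-loop probability "can be verified easily" before invoking the fundamental theorem of finite Markov chains. You supply the details the paper omits (one-step positivity of $\mathcal{M}$ via positivity of $Q$ for $\mu\in(0,1)$ and feasibility of the selection step since $I\geq N$), which is a valid and slightly stronger instantiation of the same idea.
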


\noindent
This is a simple consequence of the fact that $\mu$ and $A$ are
positive.  See Section \ref{sec:proofs-omitted-from-1} for a proof.

\paragraph{\bf Important Statistics of the RSM Process and the Projected
    RSM Process.}
The transition matrix $\M$ of the RSM Markov chain is of
dimension $\binom{N+2^L-1}{N} \times \binom{N+2^L-1}{N}$.  When the fitness landscape $A$ is class invariant, one can
get a projected Markov chain with a significantly smaller state space
which can still be used to compute the average fitness and the average
population of each fitness class at stationarity.  Consider
equivalence classes indexed by functions $h: [0,L] \longrightarrow N$ with
$\sum_{i=0}^L h(i) = N$, such that a function $f$ in the state space
of $\mathcal{M}$ is in the equivalence class $[h]$ if and only if for
every $i \in [0,L]$, $
\sum_{\inb{\sigma\in\inb{0,1}^L|\w\sigma = i}} f(\sigma) = h(i).
$
We then have the following lemma the proof of which is in
Section \ref{sec:proofs-omitted-from-1}.
\begin{lemma}
  \label{lem:proj-strong}
  Let $f, g$ belong to the same equivalence class $h$ as defined
  above, and let $[h']$ be another equivalence class.  We then have
  $\mathcal{M}(f,[h']) = \mathcal{M}(g, [h'])$.  
\end{lemma}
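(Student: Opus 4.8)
The plan is to prove the stronger statement that the law of the Hamming-class statistics of the population produced by one step of the RSM chain depends on the current population only through its own Hamming-class statistics. Since $\M(f,[h'])$ is by definition the probability that, started from $f$, the step yields a population lying in the class $[h']$ --- equivalently, a population whose vector of Hamming-class totals equals $h'$ --- this stronger statement immediately gives $\M(f,[h']) = \M(g,[h'])$ whenever $f \equiv g$. To organize the argument I would introduce the projection $\Phi$ sending a population $f$ to the vector with entries $\Phi(f)(i) \defeq \sum_{\w{\sigma}=i} f(\sigma)$ for $0 \le i \le L$, so that $f \equiv g$ is exactly $\Phi(f) = \Phi(g) = h$, and then I would track the effect of each of the three sub-steps (reproduction, selection, mutation) on $\Phi$.

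First I would dispose of reproduction, which is deterministic: by class invariance the intermediate population $I$ satisfies $I(\sigma) = a_{\w{\sigma}} f(\sigma)$, so its class-$i$ total is $\Phi(I)(i) = a_i h(i)$, a function of $h$ alone. For the selection step I would show that the joint law of the class-level selected counts $(s_i)_{i=0}^L$, where $s_i$ denotes the number of selected genomes of weight $i$, is the multivariate hypergeometric distribution with group sizes $\inp{a_i h(i)}_i$ and $N$ draws; concretely, $\Pr{\Phi(S) = (s_i) \mid \vec{N}_t = f} = \binom{I}{N}^{-1}\prod_{i=0}^L \binom{a_i h(i)}{s_i}$ where $I = \sum_{i} a_i h(i)$ and $S$ is the post-selection population. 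This rests on the elementary fact that merging the categories of a sample drawn without replacement again yields a multivariate hypergeometric law on the merged categories; the point is that this probability depends on $f$ only through $h$.

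For the mutation step I would argue that the weight-to-weight transition is pattern-independent: if a genome of weight $i$ has each of its bits flipped independently with probability $\mu$, its new weight equals $(i-K)+M$ with $K \sim \mathrm{Bin}(i,\mu)$ and $M \sim \mathrm{Bin}(L-i,\mu)$ independent, so the probability $T_{ij}$ of moving from weight $i$ to weight $j$ depends only on $i$, $j$, $\mu$ and $L$, and not on the underlying string. Since the $N$ selected genomes mutate independently, the conditional law of $\Phi(\vec{N}_{t+1}) = h'$ given $S$ is a function of $\Phi(S)$ alone --- a convolution of independent multinomials with parameters $s_i$ and probability vectors $(T_{ij})_j$ --- which I will denote $\psi(h', \Phi(S))$. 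Conditioning on $S$ and then summing over its class statistics, the law of total probability gives $\Pr{\Phi(\vec{N}_{t+1}) = h' \mid \vec{N}_t = f} = \sum_{(s_i)} \psi\inp{h',(s_i)}\,\Pr{\Phi(S) = (s_i) \mid \vec{N}_t = f}$, in which both factors depend on $f$ only through $h$; since the left-hand side is exactly $\M(f,[h'])$, the lemma follows.

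I expect the delicate point to be the mutation step --- specifically, verifying that the induced weight transition is genuinely pattern-independent and, more importantly, that the class statistics after mutation form a well-defined random function of the pre-mutation class statistics $(s_i)$ rather than of the finer data contained in $S$. The selection step requires the grouping property of hypergeometric sampling, which is routine once the sampling categories are identified with the Hamming classes, while the reproduction step is immediate from class invariance.
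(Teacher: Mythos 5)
Your proposal is correct and follows essentially the same route as the paper's proof: the paper also splits the step into a reproduce--select phase (handled via the grouping property of the multivariate hypergeometric distribution with bag sizes $a_i h(i)$) and a mutation phase (handled via the pattern-independent weight-to-weight transition probabilities $\mathcal{Q}_{ij}$ and a multinomial sum), and then composes the two projected kernels by summing over the intermediate equivalence classes, which is exactly your law-of-total-probability step. The only cosmetic difference is that the paper isolates the composition step as a standalone fact about projections of composed Markov chains.
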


\noindent Thus, we can consider the projected Markov chain $\M_w$ with
state space 
{$$\Omega_w = \inb{[h]\left| \sum_{i=0}^Lh(i) =
    N\right.}.$$}
Notice that $|\Omega_w| = \binom{N + L }{L}$.
Also, if $\pi_w$ is the stationary distribution of $\M_w$, then by the
projection property
$$
\pi_w([h]) = \sum_{f \in [h]}\pi(f).
$$
This property implies that the expected populations for every Hamming
class of genomes and, hence, the expected average fitness at
stationarity, are the same for $\M_w$ and $\M$.

\paragraph{\bf Mixing Time.}
We will denote by $\pi$ the stationary distribution of the RSM
process, and let $\vec{N}_{\infty}$ be a random variable distributed
according to $\pi$.  We know that the distributions of the random
variables $\vec{N}_t$ converge in total variation distance (and hence
in distribution) to $\pi$, due to the ergodicity of the RSM process.
We fix our notation for mixing times in this section.  

 \begin{definition}
  The total variation distance between two probability distributions
  $\mathcal{D}_{1}$ and $\mathcal{D}_{2}$ on the sample space $\Omega$
  is defined by $\|\mathcal{D}_{1} - \mathcal{D}_{2} \|_{TV} = \max_{A
    \subseteq \Omega} |\mathcal{D}_{1}(A) - \mathcal{D}_{2}(A)|.$
\end{definition}

\begin{definition}
  Fix a Markov chain $\mathcal{N}$ on a state space $S$.  We define
{$$ d(t) \defeq  \max_{\alpha \in S} \|
   \mathcal{N}^{t}(\alpha,\cdot) - \pi \|_{TV}.$$}
For $0\leq \epsilon\leq 1/2$, the mixing time of $\mathcal{N}$ is
defined by 
{$$ \tau_{\rm mix}(\e) \defeq \min \{t : d(t) \leq \eps
\}.$$}
\end{definition}

 \noindent Notice that by the projection property, the mixing time of the
projected RSM chain $\mathcal{M}_w$ is at most the mixing time of the
original RSM chain $\mathcal{M}$.

\paragraph{\bf Error Thresholds.}
In the following definition, we specifically
emphasize the dependence of the random variables
$\vec{N}_{t},\vec{D}_{t}$ on $\mu,N$ by denoting them as
$\vec{N}_{t,\mu,N}$ and $\vec{D}_{t,\mu,N}$.  We denote by
$\vec{D}_{\infty,\mu,N}$ a version of $\vec{D}_{t,\mu,N}$  distributed according to the
stationary distribution of the RSM process, and by $\vec{U}$ the
uniform distribution over genomes.  
Given a distance function $d,$ one can define the error threshold with respect to $d$ as follows. 
\begin{definition} [Error Threshold for the RSM Model] 

\label{def:error-thresholds-finite}
Let $\eps \geq 0.$  
$$\mu_{c}^d(\eps,N) \defeq \min \left\{ \mu \in (0,1):
    d(\E{\vec{D}_{\infty,\mu,N}},\vec{U}) \leq \eps 
  \right\},$$ where $\vec{U}$ is the uniform distribution over all genomes of length $L.$
\end{definition}

\noindent
Of particular interest is the function $d^h:$ for any two
distributions $\mathcal{D}_1$ and $\mathcal{D}_2$ over genomes,
$d^h(\mathcal{D}_1, \mathcal{D}_2)$ denotes
$\abs{\sum_{\sigma}w(\sigma)(\mathcal{D}_1(\sigma) -
  \mathcal{D}_2(\sigma))}$. The corresponding $\mu$ will carry the superscript $h.$

\subsection{Convergence to the Quasispecies Model}
\label{sec:conv-eigens-model}
Our first main result is that the RSM model converges to the quasispecies model.
\begin{theorem}
\label{thm:convg-claim}
Fix a fitness landscape $A$ with positive entries and a mutation
transition matrix $Q$.  Consider the RSM started with the initial
state $\vec{D}_0$ and consider the evolution of the quasispecies model started
with the initial state $\vec{m}_0 = \vec{D}_0$.   Then for any fixed
time $t_0$,
{\begin{equation}
    \lim_{N\rightarrow\infty}\E{\vec{D}_{t_0}|\vec{D}_0} =
    \vec{m}_{t_0},  
\end{equation}}
  where $\vec{m}_{t_0}$ is the state of evolution of the
  quasispecies model at time $t_0$ starting from $\vec{m}_0.$
\end{theorem}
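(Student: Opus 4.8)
The plan is to show that a single step of the RSM process is, in the large-$N$ limit, a concentrated version of one application of the quasispecies update map $\vec{r}$, and then to chain these single-step statements together by induction on $t_0$ while controlling error propagation through the Lipschitz continuity of $\vec{r}$. The starting observation is that the conditional expectation of one RSM step reproduces the deterministic quasispecies dynamics exactly: computing $\E{\vec{D}_{t+1} \mid \vec{D}_t}$ by taking expectations over the selection and mutation steps yields $\E{\vec{D}_{t+1} \mid \vec{D}_t} = \vec{r}(\vec{D}_t)$, with $\vec{r}$ as in Equation~\eqref{eq:r-def}. Thus if $\vec{D}_t$ were exactly $\vec{m}_t$, its conditional expectation would be exactly $\vec{m}_{t+1}$; the entire difficulty is to show that the stochastic fluctuations around this expectation vanish as $N \to \infty$ and do not accumulate over the fixed number of steps $t_0$.

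Next I would establish a single-step concentration bound of the form: given $\vec{D}_t$, with probability at least $1 - p_t$ one has $\abs{\vec{D}_{t+1} - \vec{r}(\vec{D}_t)} \le \epsilon_t$, where $\epsilon_t$ and $p_t$ are both $o_N(1)$. I would prove this in the two phases of the step separately. For the selection step, the number $S_t^\sigma$ of type-$\sigma$ genomes selected is a sum of indicator variables, one per copy present after reproduction, that are \emph{not} independent but are negatively correlated, since selection is without replacement. A direct Chernoff bound is therefore unavailable, but flipping the selection of any single copy changes each $S_t^\sigma$ by at most one, so the method of bounded differences gives concentration of $\vec{S}_t/N$ around its conditional mean. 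For the mutation step, conditioned on $\vec{S}_t$ the contributions of distinct genomes are independent, so the count of type-$\sigma$ genomes after mutation is a sum of independent indicators and an ordinary Chernoff bound applies. Combining the two phases and taking a union bound over the $2^L$ coordinates, with $L$ fixed, yields the claimed single-step guarantee.

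Finally I would remove the conditioning by induction on $t \le t_0$, using that $\vec{r}$ is Lipschitz continuous on the simplex with some constant $K$ depending only on $A$, $Q$, and $L$. Suppose inductively that $\abs{\vec{D}_t - \vec{m}_t} \le \epsilon_t'$ with probability at least $1 - p_t'$, where $\epsilon_t', p_t' = o_N(1)$. On this event, Lipschitz continuity gives $\abs{\vec{r}(\vec{D}_t) - \vec{m}_{t+1}} = \abs{\vec{r}(\vec{D}_t) - \vec{r}(\vec{m}_t)} \le K \epsilon_t'$, and combining this with the single-step concentration conditioned on $\vec{D}_t$ via a union bound yields $\abs{\vec{D}_{t+1} - \vec{m}_{t+1}} \le K\epsilon_t' + \epsilon_t =: \epsilon_{t+1}'$ with probability at least $1 - (p_t' + p_t) =: 1 - p_{t+1}'$. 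Since $t_0$ is fixed, the recursions $\epsilon_{t+1}' = K\epsilon_t' + \epsilon_t$ and $p_{t+1}' = p_t' + p_t$ keep both quantities $o_N(1)$, so $\vec{D}_{t_0}$ concentrates on $\vec{m}_{t_0}$; taking expectations, and using that $\vec{D}_{t_0}$ lives in the bounded simplex so the low-probability event contributes negligibly, gives $\lim_{N\to\infty} \E{\vec{D}_{t_0} \mid \vec{D}_0} = \vec{m}_{t_0}$. The main obstacle is the selection-step concentration: because sampling without replacement destroys independence, one must exploit negative correlation through bounded differences rather than a naive Chernoff bound, and one must verify that the bounded-difference constants, and hence the resulting $\epsilon_t, p_t$, are genuinely $o_N(1)$ uniformly in the conditioning state, so that the inductive error propagation stays controlled.
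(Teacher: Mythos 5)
Your proposal is correct and follows essentially the same route as the paper: the identity $\E{\vec{D}_{t+1}\mid\vec{D}_t}=\vec{r}(\vec{D}_t)$, single-step concentration via bounded differences (Azuma) for the without-replacement selection step and Chernoff for the mutation step with a union bound over genome types, and an induction that propagates errors through the Lipschitz constant of $\vec{r}$ via the recursion $\epsilon_{t+1}'=K\epsilon_t'+\epsilon_t$, $p_{t+1}'=p_t'+p_t$. The paper phrases the induction in terms of intervals $[l_t^\sigma,u_t^\sigma]$ containing both $m_t^\sigma$ and $D_t^\sigma$ and picks the explicit choice $\epsilon(N)=N^{-1/3}$, but this is only a cosmetic difference from your formulation.
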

\noindent
The proof of the above theorem is relegated to Section
\ref{sec:proof-theor-refthm:c}.
As a corollary to the theorem above, one can show that there is  convergence of a finitary version of  the error threshold
$\mu_c^h(\epsilon,N)$  to the error threshold  $\mu_c^h(\epsilon)$ for
the quasispecies model, as the population size goes to infinity.
Formally, we have the following:
\begin{corollary}
\label{coroll:conv-error-threshold}
Fix a mutation rate $\mu \leq 1/2$ and an error parameter $\epsilon$.
For every $\delta > 0$, there exists a time $t_0 > 0$ such that for $t
> t_0$, one can find an $N_{\delta,t}$ such that for $N >
N_{\delta,t}$,
{\begin{eqnarray}
    d^h\inp{\E{\vec{D}_{t,\mu,N}}, \vec{U}} \geq
    \epsilon - \delta,&\text{ when $\mu <
      \mu_c^h(\epsilon)$, and}\nonumber\\
    d^h\inp{\E{\vec{D}_{t,\mu,N}}, \vec{U}} \leq
    \epsilon + \delta&\text{ when $\mu =
      \mu_c^h(\epsilon)$,}\nonumber
  \end{eqnarray}}
Here we use the subscripts
$\mu$ and $N$ to emphasize the dependence of the distribution of
$\vec{D}_t$ on $\mu$ and $N$.
\end{corollary}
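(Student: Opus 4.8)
The plan is to chain two convergence facts through the triangle inequality for $d^h$ and then compare against the quasispecies stationary vector $\vec{v}_\mu$, whose distance from $\vec{U}$ is pinned down by the definition of $\mu_c^h(\epsilon)$. The first fact is the convergence of the quasispecies model itself: for the fixed $\mu$, $\vec{m}_t \to \vec{v}_\mu$ as $t \to \infty$, independent of $\vec{m}_0 = \vec{D}_0$. The second is Theorem \ref{thm:convg-claim}, which says that at each \emph{fixed} time $t$, $\E{\vec{D}_{t,\mu,N}} \to \vec{m}_t$ as $N \to \infty$. Inserting $\vec{m}_t$ and $\vec{v}_\mu$ as intermediate points between $\E{\vec{D}_{t,\mu,N}}$ and $\vec{U}$ reduces the claim to making both intermediate distances small.

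Concretely, I would first record that $d^h$ is a pseudometric: since $d^h(\mathcal{D}_1,\mathcal{D}_2) = \abs{\sum_\sigma w(\sigma)(\mathcal{D}_1(\sigma) - \mathcal{D}_2(\sigma))}$ is the absolute value of a linear functional of $\mathcal{D}_1 - \mathcal{D}_2$, it satisfies the triangle inequality. Because $0 \leq w(\sigma) \leq L$, we also have $d^h(\mathcal{D}_1,\mathcal{D}_2) \leq L\,\norm[1]{\mathcal{D}_1 - \mathcal{D}_2}$, so the coordinate-wise convergence delivered by Theorem \ref{thm:convg-claim} upgrades, for fixed $L$, to convergence in $d^h$. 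Given $\delta > 0$, I would use the quasispecies convergence to pick $t_0$ so that $d^h(\vec{m}_t, \vec{v}_\mu) < \delta/2$ for all $t > t_0$; this $t_0$ depends only on $\mu,\epsilon,\delta$. Then, for any fixed $t > t_0$, Theorem \ref{thm:convg-claim} together with the $\ell_1$-to-$d^h$ bound lets me choose $N_{\delta,t}$ so that $d^h(\E{\vec{D}_{t,\mu,N}}, \vec{m}_t) < \delta/2$ for $N > N_{\delta,t}$. The triangle inequality then gives $d^h(\E{\vec{D}_{t,\mu,N}}, \vec{v}_\mu) < \delta$. Finally, the definition of $\mu_c^h(\epsilon)$ as the minimum of $\inb{\mu : d^h(\vec{v}_\mu,\vec{U}) \leq \epsilon}$ gives $d^h(\vec{v}_\mu,\vec{U}) > \epsilon$ when $\mu < \mu_c^h(\epsilon)$ and $d^h(\vec{v}_\mu,\vec{U}) \leq \epsilon$ when $\mu = \mu_c^h(\epsilon)$; combining with the previous bound through the forward and reverse triangle inequalities yields $d^h(\E{\vec{D}_{t,\mu,N}},\vec{U}) > \epsilon - \delta$ in the first case and $\leq \epsilon + \delta$ in the second.

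The delicate point---and the nearest thing to an obstacle---is the quantifier order, since the two limits are taken in different variables: the quasispecies limit is in $t$ whereas Theorem \ref{thm:convg-claim} holds only at a fixed $t$. I would therefore fix $t_0$ first and only afterwards choose $N_{\delta,t}$, allowing it to depend on the chosen $t$; this matches the nesting of quantifiers in the statement and avoids needing convergence in $N$ that is uniform in $t$. A secondary subtlety is that the $\mu < \mu_c^h(\epsilon)$ branch relies on the \emph{strict} inequality $d^h(\vec{v}_\mu,\vec{U}) > \epsilon$, which holds precisely because any $\mu$ strictly below the minimum defining $\mu_c^h(\epsilon)$ lies outside the set where $d^h(\vec{v}_\mu,\vec{U}) \leq \epsilon$.
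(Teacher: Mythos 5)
Your proposal is correct and follows essentially the same route as the paper's proof: fix $t_0$ via the quasispecies convergence $\vec{m}_t \to \vec{v}_\mu$, then for each fixed $t > t_0$ choose $N_{\delta,t}$ via Theorem \ref{thm:convg-claim}, chain the two bounds by the triangle inequality, and finish with the definition of $\mu_c^h(\epsilon)$ as a minimum. The only cosmetic difference is that you justify the passage from coordinate-wise convergence to convergence in $d^h$ by the explicit bound $d^h \leq L\norm[1]{\cdot}$ where the paper simply invokes continuity of $d^h$; your handling of the quantifier order and of the strict inequality in the $\mu < \mu_c^h(\epsilon)$ branch matches the paper's argument.
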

\noindent Although we will prove our results for the error threshold  in terms
of the average Hamming distance, it is easy to translate our results
to other common dispersal measures as described in Section
\ref{sec:relat-betw-error}.  
\noindent The proof of the above Corollary follows easily from Theorem
\ref{thm:convg-claim} and is given in Section
\ref{sec:error-threshold-convergence}.  We note here that extending
the above corollary to get convergence of finite population error
thresholds depends upon proving a strengthened version of our
convergence result (Theorem \ref{thm:convg-claim}), which we leave as
an open problem. 
In fact, on the basis of simulation results, we
conjecture that for fixed $\epsilon$, $\mu_c^h(\epsilon,N)$
monotonically increases to $\mu_c^h(\epsilon)$.

\subsection{Computational Results}
\label{sec:algor-estim-error}
\subsubsection{Mixing Time Bounds on the RSM Process}
We give a coupling argument in Section  \ref{sec:proof-theor-refthm:m}
which allows us to prove the following theorem.
\begin{theorem}
\label{thm:mixing-main}
  Fix  $0< \mu \leq \nfrac{1}{2}$, and a fitness landscape $A$.  Let
{$$
K(A, \mu) \defeq (1-2\mu) \frac{N}{N - 1}\frac{\max_\sigma
  {a_\sigma}}{\min_\tau a_\tau}L. 
$$}
When $K(A, \mu) < 1$,  we have $\tau_{\rm mix}(\epsilon) =
O\inp{\frac{\log(NL/\epsilon)}{\log (1/K)}}$.  
\end{theorem}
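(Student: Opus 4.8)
The plan is to prove per-step contraction of a suitable metric under a coupling, and then invoke the path coupling theorem of Bubley and Dyer \citep{bubley97} to convert contraction into the stated mixing-time bound. Since the stationary distribution of $\M$ is not explicitly understood, spectral or conductance-based tools are hard to apply, so I work purely with coupling. Fix the integer-valued \emph{matching metric} $d$ on populations: for two size-$N$ populations $X$ and $Y$, let $d(X,Y)$ be the minimum over all bijections between the $N$ genomes of $X$ and the $N$ genomes of $Y$ of the sum of the Hamming distances between matched genomes. Its minimum nonzero value $1$ is attained exactly when $X$ and $Y$ agree on $N-1$ genomes and the remaining pair differs in a single bit, and its diameter is at most $NL$. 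Because $d$ is integer-valued, the path coupling theorem reduces the task to exhibiting a one-step coupling with $\E{d(X_{t+1},Y_{t+1}) \mid (X_t,Y_t)} \leq K(A,\mu)$ whenever $d(X_t,Y_t)=1$.

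I would build the one-step coupling in two phases and analyze each separately. The mutation phase is routine: given the two post-selection populations $I_t$ (from $X$) and $J_t$ (from $Y$), fix a matching realizing $d(I_t,J_t)$ and mutate matched genomes using the standard maximal coupling of the symmetric bit-flip walk on the hypercube, so that shared bits are flipped together. Each unit of Hamming distance between a matched pair then contracts in expectation by $(1-2\mu)$, and summing over the matching gives $\E{d(X_{t+1},Y_{t+1}) \mid (I_t,J_t)} \leq (1-2\mu)\,d(I_t,J_t)$.

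The crux, and the step I expect to be the main obstacle, is coupling the reproduction-and-selection phase so as to bound $\E{d(I_t,J_t) \mid (X_t,Y_t)}$, where $I_t,J_t$ denote the two post-selection populations, starting from an adjacent pair with $d(X_t,Y_t)=1$. Such $X_t$ and $Y_t$ share $N-1$ genomes and differ only in one genome $\sigma$ (in $X_t$) versus $\sigma'$ (in $Y_t$) with $\ham{\sigma,\sigma'}=1$. After reproduction the two intermediate pools coincide on all copies descended from the $N-1$ shared genomes but differ in the block of $a_\sigma$ copies of $\sigma$ versus $a_{\sigma'}$ copies of $\sigma'$; in particular they can have different total sizes when $a_\sigma\neq a_{\sigma'}$. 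Because selection samples $N$ genomes \emph{without replacement}, it is global, and a naive coupling could propagate the single initial discrepancy into many selected positions. I would instead couple the two samples to agree on as many drawn copies as possible, drawing shared copies in lockstep and forcing a mismatch only when a draw lands in the differing block on one side, or when the pool-size difference prevents the shared draws from coinciding. Two kinds of discrepancy then arise: selected copies matched as $\sigma\leftrightarrow\sigma'$, each contributing Hamming distance $1$; and a smaller number of mismatches among shared genomes forced by the pool-size difference, each contributing at most $L$. The number of selected copies of the differing genome has a hypergeometric law with mean at most $\max_\sigma a_\sigma/\min_\tau a_\tau$ (since the reproduction pool contains at least $N\min_\tau a_\tau$ genomes, of which $N$ are selected), and careful bookkeeping of the selection probabilities bounds the expected total number of mismatched positions by $\frac{N}{N-1}\frac{\max_\sigma a_\sigma}{\min_\tau a_\tau}$; as each contributes at most $L$ to $d$, this gives $\E{d(I_t,J_t)\mid(X_t,Y_t)} \leq \frac{N}{N-1}\frac{\max_\sigma a_\sigma}{\min_\tau a_\tau}L$. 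Verifying that this construction is a genuine coupling whose marginals reproduce the RSM selection law exactly is the delicate part, and is where the matching form of $d$ is used.

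Composing the two phases by the tower property gives $\E{d(X_{t+1},Y_{t+1}) \mid (X_t,Y_t)} \leq (1-2\mu)\frac{N}{N-1}\frac{\max_\sigma a_\sigma}{\min_\tau a_\tau}L = K(A,\mu)$ for every adjacent pair. When $K=K(A,\mu)<1$, path coupling extends this to $\E{d(X_{t+1},Y_{t+1})} \leq K\,\E{d(X_t,Y_t)}$ from arbitrary starting pairs, hence $\E{d(X_t,Y_t)} \leq NL\cdot K^t$ since $d$ has diameter at most $NL$. As $d \geq 1$ whenever the two chains have not coalesced, the coupling inequality gives $\|\M^t(\alpha,\cdot)-\pi\|_{TV} \leq NL\cdot K^t$, which drops below $\eps$ once $t = O\inp{\frac{\log(NL/\eps)}{\log(1/K)}}$, establishing the claimed bound on $\tau_{\rm mix}(\eps)$.
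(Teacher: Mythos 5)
Your proposal follows essentially the same route as the paper's proof: the same matching metric $d_{\rm match}$, the same reduction via Bubley--Dyer path coupling to adjacent pairs, the same $(1-2\mu)$ hypercube coupling for mutation, and the same bound $\frac{N}{N-1}\frac{\max_\sigma a_\sigma}{\min_\tau a_\tau}L$ on the expected post-selection distance. The only substantive pieces you defer --- the explicit construction certifying that the selection coupling has the correct marginals (the paper uses a single random permutation of a bag of balls labelled $(\sigma,(x,y))$, $(\sigma,x)$, $(\sigma,y)$) and the verification that the shortest-path metric induced by unit-distance adjacency coincides with $d_{\rm match}$ itself --- are exactly the lemmas the paper supplies.
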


\subsubsection{Computing the Stationary Distribution in the Class Invariant Case}
\label{sec:algor-estim-error-1}
\begin{theorem}
\label{thm:projected-chains}
For every $A$ which is class invariant, and $\mu$ which can be
represented using $b$ bits, there is an algorithm running in time $T$
given by 
$$T \defeq \tilde{O}\inp{bL^3 + \inp{\frac{(N+L)^L}{L!}}^{3}(NbL + L^2 + N\max a_\sigma) + NbL\inp{\frac{(N + L)^L}{L!}}^{2}\inp{e(1 +
      \frac{N}{L(L+1)})}^{L(L+1)}}$$ 
      which
  computes  $\pi_w$ for  the Markov chain
  $\M_w$ described above.  For fixed $L$, $T= O(N^{O(L^2)})$. 
\end{theorem}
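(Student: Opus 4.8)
The plan is to compute $\pi_w$ in two stages: first explicitly build the transition matrix of the projected chain $\M_w$ on the state space $\Omega_w = \inb{[h] : \sum_{i=0}^L h(i) = N}$ of size $\binom{N+L}{L}$, and then solve for its stationary distribution as the (normalized) leading left eigenvector, tracking the bit complexity of every quantity along the way. Lemma \ref{lem:proj-strong} already guarantees that the transition probability $\M_w([h],[h'])$ is well defined (independent of the representative $f \in [h]$), so the entire computation can be carried out at the level of the $(L+1)$-dimensional class-count vectors $h$ rather than the exponentially larger functions $f$.

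For the first stage I would decompose a single transition $[h] \to [h']$ into its three physical steps. Reproduction is deterministic at the level of classes: the weight-$i$ count becomes $a_i h(i)$ (using class invariance, $a_\sigma = a_{\w\sigma}$), with total intermediate size $I = \sum_{i} a_i h(i) \le N \max_\sigma a_\sigma$. Selection draws $N$ genomes without replacement, so the vector of selected class-counts $\vec{s} = (s_0,\dots,s_L)$ follows a multivariate hypergeometric law. Mutation acts independently on each selected genome, and by class invariance a weight-$i$ genome becomes some weight-$j$ genome with probability $P_{ij} = \sum_{\ell} \binom{i}{\,i-j+\ell\,}\binom{L-i}{\ell}\,\mu^{\,i-j+2\ell}(1-\mu)^{\,L-i+j-2\ell}$, a single-index binomial sum. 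Writing $n_{ij}$ for the number of weight-$i$ selected genomes that mutate to weight $j$, the transition probability becomes a sum, over all nonnegative integer matrices $(n_{ij})$ with row sums $s_i$ and column sums $h'(j)$, of a multinomial weight times the hypergeometric selection probability. I would first precompute the powers of $\mu$ and $1-\mu$ and the relevant binomial coefficients and assemble the matrix $(P_{ij})$; this set-up costs about $\tilde{O}(bL^3)$ and accounts for the first term of $T$. The key combinatorial estimate is that the number of matrices $(n_{ij})$ summed over is at most the number of ways of placing $N$ items into the roughly $L(L+1)$ free cells of an $(L+1)\times(L+1)$ table with fixed margins, which by the standard bound $\binom{n+k}{k} \le \inp{e(1 + n/k)}^{k}$ is at most $\inp{e\inp{1 + \tfrac{N}{L(L+1)}}}^{L(L+1)}$. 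Multiplying this by the per-summand bit cost (dominated by the product of $N$ mutation probabilities, i.e.\ $O(NbL)$ bits) and by the $\binom{N+L}{L}^2$ matrix entries yields exactly the third term of $T$.

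In the second stage, $\pi_w$ is the solution of the linear system $\pi_w(\M_w - I) = 0$ together with $\sum_{[h]} \pi_w([h]) = 1$, which has dimension $n \defeq \binom{N+L}{L} \approx (N+L)^L/L!$. I would solve it over the rationals by a fraction-free (Bareiss-style) Gaussian elimination in $O(n^3)$ arithmetic operations, bounding the bit length of every intermediate minor by Hadamard's inequality so that it never exceeds $O(n\beta)$, where $\beta = O(NbL + L^2 + N\max_\sigma a_\sigma)$ is the bit length of a single entry of $\M_w$ (the $NbL$ from the mutation product, the $L^2$ from the binomial and multinomial coefficients, and the $N\max_\sigma a_\sigma$ from the reproduced population size entering the hypergeometric weights). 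This produces the second term $\binom{N+L}{L}^3(NbL + L^2 + N\max_\sigma a_\sigma)$. Finally, for fixed $L$ one has $\binom{N+L}{L} = N^{O(L)}$ and $\inp{e(1+N/(L(L+1)))}^{L(L+1)} = N^{O(L^2)}$, and since the latter dominates, the overall bound is $T = O(N^{O(L^2)})$, as claimed.

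The step I expect to be the main obstacle is the bit-complexity bookkeeping rather than any conceptual difficulty. Two places need care: bounding the number of summands in the transition-probability formula by the contingency-table count that produces the $\inp{e(1+N/(L(L+1)))}^{L(L+1)}$ factor (and verifying that the per-summand arithmetic stays within $O(NbL)$ bits), and controlling the blow-up of bit lengths during the linear solve so that the entries of $\pi_w$ and all intermediate quantities remain bounded by $O(n\beta)$ bits. A naive Gaussian elimination over $\mathbb{Q}$ would risk an exponential blow-up, which is precisely what the fraction-free elimination together with the Hadamard bound is designed to prevent.
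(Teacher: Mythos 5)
Your proposal is correct and follows essentially the same route as the paper: work on the projected state space of size $\binom{N+L}{L}$, precompute the $(L+1)\times(L+1)$ class-level mutation matrix in $\tilde{O}(bL^3)$, bound the number of summands in each transition entry by the contingency-table count $\inp{e(1+\tfrac{N}{L(L+1)})}^{L(L+1)}$ via the same $\binom{n+k}{k}$ estimate, and finish with an exact Gaussian-elimination solve of the stationarity equations while tracking bit lengths. The only difference is organizational --- the paper factors $\M_w = PR$ into separate reproduce-select and mutation matrices and multiplies them, whereas you fold the hypergeometric selection weight directly into a single sum over tables with free row sums --- and your explicit appeal to fraction-free (Bareiss/Hadamard) elimination is, if anything, slightly more careful than the paper's bare invocation of Gaussian elimination.
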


\noindent
The proof of the above theorem appears in Section
\ref{sec:proof-theor-refthm:p}, and is based on the projected RSM
process discussed in Section \ref{sec:prel-defin}.
The above theorem immediately gives a grid-search based algorithm that 
given a grid resolution $\delta$ and $\epsilon > 0$, outputs a
approximation $\mu_0$ to the error threshold in time
$T\cdot\nfrac{1}{2\delta}$ such that $\mu_0 \geq \mu_c^h(\epsilon, N)$
and $d^h(\vec{D}_{\infty, \mu_0 - \delta}, \vec{U}) > \epsilon$.   We now consider Markov Chain Monte Carlo based grid-search methods.  

\subsubsection{Markov Chain Monte Carlo Methods}
\label{sec:algor-estim-error-2}
The general strategy for Monte Carlo based grid search methods for
determining error thresholds is described in the algorithm \textsc{ErrorThreshold}
in Figure \ref{fig:algorithm} in the Appendix.  We will denote the
mixing time $\tau_{\rm mix}(\epsilon)$ for parameters $L, N, A$ and
$\mu$ as $\tau(L, N, A, \mu, \epsilon)$.  We consider 
the projected chain $\M_w$ described above which contains enough
information to compute the average Hamming weight, and whose state can be
maintained as a tuple in $\{0,1,\ldots,N\}^{L+1}$.

\begin{theorem}
\label{thm:algor-estim-error}
Let $A$ be class invariant, and consider the error threshold
$\mu_c^h(\epsilon, N)$.  Suppose the algorithm \textsc{ErrorThreshold} is run with
input grid resolution $\delta$, accuracy parameter $\delta_1$, and
error probability $\delta_2$. Let $T$ be the maximum over $k$ of the
quantity $\tau(L,N,A,k\delta,\delta_1/(2L))$ where $k \leq
1/(2\delta)$ is a positive integer.  The algorithm {\sc
  ErrorThreshold} runs in time $T \cdot s \cdot
\tilde{O}(\ceil{1/(2\delta)}NL\max_\sigma a_\sigma)$, where 
{$$
s = \ceil{\frac{8L^4}{\delta_1^2}\log\inp{2\ceil{1/(2\delta)}(L+1)/\delta_2}},
$$}
and with
probability at least $1-\delta_2$, produces an output $\mu_0$ satisfying $\mu_0 \geq \mu_c^h(\epsilon + \delta_1/2)$ and \\
$d^h(\vec{D}_{\infty, \mu_0 - \delta}, \vec{U}) \geq
\epsilon -\delta_1/2$.
\end{theorem}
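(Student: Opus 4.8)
The plan is to analyze the grid-search procedure \textsc{ErrorThreshold} directly from Definition~\ref{def:error-thresholds-finite}, reducing everything to a single uniform estimation guarantee at the grid points. The algorithm scans $\mu$ over the grid $\inb{\delta, 2\delta, \ldots}\cap(0,\nfrac{1}{2}]$ in increasing order; at each value $\mu=k\delta$ it draws $s$ approximately stationary samples of the projected chain $\M_w$ (e.g.\ by $s$ independent runs each of length its mixing time) and from them forms an estimate $\hat{d}(\mu)$ of $d^h\inp{\E{\vec{D}_{\infty,\mu,N}},\vec{U}}$, stopping and outputting the first $\mu_0$ for which $\hat{d}(\mu_0)\le\eps$. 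Working with $\M_w$ is legitimate: it is ergodic (Fact~\ref{fct:ergodic-chain}), and by the projection property following Lemma~\ref{lem:proj-strong} its stationary average over each Hamming class equals that of $\M$, so the weighted statistic defining $d^h$ can be read off from $\M_w$. The single claim I would isolate is the \emph{uniform estimation guarantee}: with probability at least $1-\delta_2$, \emph{every} grid value $\mu$ satisfies $\abs{\hat{d}(\mu)-d^h\inp{\E{\vec{D}_{\infty,\mu,N}},\vec{U}}}\le\nfrac{\delta_1}{2}$. Granting this, both output conditions are immediate and require no monotonicity of $d^h$ in $\mu$: at the stopping point $\hat{d}(\mu_0)\le\eps$ forces $d^h\le\eps+\nfrac{\delta_1}{2}$, so $\mu_0$ lies in the set defining $\mu_c^h(\eps+\nfrac{\delta_1}{2})$ and hence $\mu_0\ge\mu_c^h(\eps+\nfrac{\delta_1}{2})$; while at the previous grid point $\mu_0-\delta$ the algorithm did \emph{not} stop, so $\hat{d}(\mu_0-\delta)>\eps$ and therefore $d^h\inp{\E{\vec{D}_{\infty,\mu_0-\delta,N}},\vec{U}}\ge\eps-\nfrac{\delta_1}{2}$. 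Termination holds because at $\mu=\nfrac{1}{2}$ mutation fully randomizes each genome, making the stationary profile exactly $\vec{U}$ and thus $d^h=0$.

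It remains to prove the uniform estimation guarantee, which is where the two precision parameters enter. I would split the error into a \emph{bias} term from imperfect mixing and a \emph{fluctuation} term from finite sampling. For the bias, running $\M_w$ for $\tau(L,N,A,k\delta,\nfrac{\delta_1}{2L})$ steps yields a sample whose law $\nu$ satisfies $\norm[TV]{\nu-\pi_w}\le\nfrac{\delta_1}{2L}$ by definition of mixing time; since the population-average Hamming weight $\bar{W}\defeq\sum_i i\cdot\nfrac{h(i)}{N}$ is a state function valued in $[0,L]$, the standard bound $\abs{\mathbb{E}_\nu\insq{\bar{W}}-\mathbb{E}_{\pi_w}\insq{\bar{W}}}\le L\norm[TV]{\nu-\pi_w}$ controls the per-sample bias by $\nfrac{\delta_1}{2}$. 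For the fluctuation, I would estimate the profile \emph{classwise}: letting $\hat{p}_i$ be the empirical mean over the $s$ samples of the class-$i$ fraction $\nfrac{h(i)}{N}\in[0,1]$, Hoeffding's inequality bounds each $\abs{\hat{p}_i-\mathbb{E}_\nu\insq{\hat{p}_i}}$. To control the \emph{weighted} discrepancy $\abs{\sum_i i(\hat{p}_i-\mathbb{E}_\nu\insq{\hat{p}_i})}$ that enters $d^h$, I take a union bound over the $L+1$ classes, requiring each $\hat{p}_i$ accurate to order $\nfrac{\delta_1}{L^2}$; since $\sum_{i=0}^L i=\nfrac{L(L+1)}{2}$, this produces the stated sample count $s=\Theta\inp{\nfrac{L^4}{\delta_1^2}\log(\cdots)}$, with the logarithm absorbing the joint union bound over the $L+1$ classes and the at most $\ceil{\nfrac{1}{2\delta}}$ grid points, each allotted failure probability $\nfrac{\delta_2}{(L+1)\ceil{1/(2\delta)}}$. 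Combining the bias and fluctuation estimates (with the constants $\nfrac{1}{2}$ and $8$ chosen precisely so that their sum stays within the budget) gives the uniform guarantee.

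The running time is then a product of three independent factors. By the simulation of $\M_w$ underlying Theorem~\ref{thm:projected-chains}, one step of $\M_w$ costs $\tilde{O}(NL\max_\sigma a_\sigma)$, dominated by the reproduction blow-up to at most $N\max_\sigma a_\sigma$ genomes, the sampling of $N$ of them without replacement, and the flipping of $L$ bits per selected genome, all in $O(\log N)$-bit arithmetic. At each of the $\ceil{\nfrac{1}{2\delta}}$ grid points we draw $s$ samples, each needing at most $T=\max_{k\le 1/(2\delta)}\tau(L,N,A,k\delta,\nfrac{\delta_1}{2L})$ steps; multiplying yields the claimed $T\cdot s\cdot\tilde{O}\inp{\ceil{\nfrac{1}{2\delta}}NL\max_\sigma a_\sigma}$.

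The main obstacle is the fluctuation analysis under the $d^h$ metric. A naive classwise mixing-bias bound, summed against the weights $i$, loses an extra factor of $L$; the fix is to bound the bias through the \emph{aggregate} statistic $\bar{W}$ while bounding the fluctuation \emph{classwise}. Reconciling these two viewpoints inside the single budget $\nfrac{\delta_1}{2}$, and thereby pinning down the constants in $s$ and in the mixing precision $\nfrac{\delta_1}{2L}$, is the delicate part of the argument; the definitional sandwich and the running-time accounting are then routine.
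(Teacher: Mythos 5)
Your proposal follows essentially the same route as the paper's proof: a definitional sandwich at the stopping point and its predecessor (needing no monotonicity of $d^h$ in $\mu$), reduced to a uniform estimation guarantee proved via Hoeffding bounds on the per-Hamming-class fractions of the projected chain $\M_w$, a union bound over the $L+1$ classes and the $\ceil{1/(2\delta)}$ grid points (which is exactly where the factor $\log\inp{2\ceil{1/(2\delta)}(L+1)/\delta_2}$ in $s$ comes from), and the same three-factor running-time accounting with per-step cost $\tilde{O}(NL\max_\sigma a_\sigma)$.

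The one place where your argument differs in substance is the treatment of the mixing bias, and there your constants do not close. The paper runs each realization to $\tau(L,N,A,\mu,\eps')$ with $\eps'=\delta_1/(2L^2)$ (this is the value set in the algorithm; the $\delta_1/(2L)$ appearing in the theorem statement is inconsistent with the algorithm and the proof) and folds the bias \emph{per class} into the concentration lemma, so that each class fraction is within $O(\delta_1/L^2)$ of its stationary value and the weighted sum $\sum_i i\cdot\delta_1/(2L^2)=(L+1)\delta_1/(4L)\leq\delta_1/2$ fits the budget. You instead charge the bias to the aggregate statistic $\bar{W}\in[0,L]$ at total-variation precision $\delta_1/(2L)$, which already costs $L\cdot\delta_1/(2L)=\delta_1/2$ --- the \emph{entire} budget --- before any sampling fluctuation is added; with your classwise fluctuation term of order $\delta_1/4$ on top, the total error is roughly $3\delta_1/4$, and the claimed uniform guarantee $\abs{\hat{d}(\mu)-d^h}\leq\delta_1/2$ does not follow. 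This is a constant-factor slip rather than a wrong idea (run the chain to TV precision $\delta_1/(4L)$, or adopt the paper's per-class precision $\delta_1/(2L^2)$, and everything closes), and you do flag the budget reconciliation as the delicate step; but the parenthetical claim that the constants $\nfrac{1}{2}$ and $8$ were ``chosen precisely so that their sum stays within the budget'' is not correct as written and needs to be replaced by an explicit split of $\delta_1/2$ between the bias and the fluctuation.
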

\noindent
The proof of the above theorem appears in Section
\ref{sec:proof-theor-refthm}, where we also point out some technical
subtleties about the definition of error thresholds.

\paragraph{\bf Acknowledgments.} We thank Rajesh Balagam for helping
us with several simulations based on which this theoretical study was
initiated.  This work was initiated while Piyush Srivastava was an
intern at Microsoft Reseach, Bangalore.

\bibliography{bio}
\bibliographystyle{alpha}
\newpage

\appendix

\section{Starting State and Transition Matrix of the RSM Markov
  Chain}
\label{sec:start-state-trans}
As stated before, the RSM Markov chain starts with the ``fittest''
possible population with all the weight concentrated on the master
sequence, so that $\nt{0}{M} = N$ and $\nt{0}{\sigma} = 0$ for all
$\sigma \neq M$.  We now proceed to set up some notation for writing
out the transition matrix $\mathcal{M}$.

\begin{definition}
  \textbf{(Multivariate Geometric distribution).}  Let
  $g(\sigma)$ denote the number of genomes of type $\sigma$ in an urn.
  Consider the process of choosing, without replacement, $N$ genomes
  from this urn. Then $\phyp{g}{f}{N}$ denotes the probability of
  obtaining $f(\sigma)$ genomes of type $\sigma$ for each $\sigma$.
  We have,
\begin{equation}
  \label{eq:phyp}
  \phyp{g}{f}{N} \defeq
  \frac{\prod_{\sigma\in\inb{0,1}^L}\binom{g(i)}{f(i)}}{\binom{\inner{g}{\ones}}{N}}
\end{equation}
\end{definition}

\begin{definition}
  \textbf{(Multivariate Binomial Distribution).} Let
  $f(\sigma)$ denote the number of genomes of type $\sigma$.  Consider
  a stochastic process in which each genome of type $\sigma$
  independently mutates into a genome $\tau$ (possibly equal to
  $\sigma$) with probability $Q(\sigma,\tau)$.  We denote by
  $\pbin{f}{D}{Q}$ the probability that $D(\sigma, \tau)$ genomes of
  type $\sigma$ mutate to type $\tau$ under this process. We have
  \begin{equation*}
    \pbin{f}{D}{Q} \defeq
    \prod_{\sigma\in\inb{0,1}^L}\binom{f(\sigma)}{\inb{D(\sigma,\tau)|\tau\in\inb{0,1}^L}}\prod_{\tau\in\inb{0,1}^L}Q(\sigma,\tau)^{D(\sigma,\tau)}
  \end{equation*}
\end{definition}
\noindent
We can now write the entries of $\mathcal{M}$.  For $f, g :
\inb{0,1}^L \longrightarrow N$ satisfying $\inner{f}{\ones} = \inner{g}{\ones}
= N$, we denote by $\mathcal{M}(f,g)$ the conditional probability of
obtaining $g$ starting from $f$ in one step of the RSM process.  Given
a function $f \inb{0,1}^L\longrightarrow N$, we denote by $Af$ the function
such that $Af(\sigma) = a_\sigma f(\sigma)$.  Then, we have

\begin{equation*}
  \mathcal{M}(f,g) = \sum_{h:
    \inner{h}{\ones}=N}\phyp{Af}{h}{N}\sum_{D: \ones D = g; D \ones = h}\pbin{h}{D}{Q},
\end{equation*}
where $Q$ and $A$ are as defined above.
\pagebreak

\section{Proofs Omitted from Section \ref{sec:form-stat-main}}
\label{sec:proofs-omitted-from-2}
\subsection{Proofs Omitted from Section \ref{sec:prel-defin}}
\label{sec:proofs-omitted-from-1}
\begin{proof}[Proof Sketch of Fact \ref{fct:ergodic-chain}.]
  When $\mu \in (0,1) $ and $a_{\sigma}>0$ for all $\sigma,$ it can be
  verified easily that this chain is irreducible and also has a
  non-zero self-loop probability at every point in the state space.
  Thus, the chain is \emph{ergodic} and hence by the Fundamental
  theorem of Markov chains, has a unique stationary distribution to
  which it converges as $t\rightarrow\infty$.
\end{proof}

We now give a proof of Lemma \ref{lem:proj-strong}.  

\begin{proof}[Proof of Lemma \ref{lem:proj-strong}]
We will show that under class invariance, we can project the RSM
Markov chain so that its state space consists of equivalence classes
indexed by functions $h: [0,L] \longrightarrow N$ with $\sum_{i=0}^L h(i) = N$,
such that a function $f$ in the state space of $\mathcal{M}$ is in the
equivalence class $[h]$ if and only if for every $i \in [0,L]$,
$$
\sum_{\inb{\sigma\in\inb{0,1}^L|\w\sigma = i}} f(\sigma) = h(i).
$$

We will find it convenient to consider the reproduction and selection
phases separately from the mutation phase, show that the projection
described above can be done for both of them, and then combine the two
results using the following general fact about projected Markov
chains, the proof of which we include for completeness.

\begin{fact}
\label{fct:compose-projections}
  Let $P$ and $R$ be the transition kernels of two Markov chains on
  the same state space $\Omega$, and let $S$ denote the composition
  $PR$ of the two chains.  Suppose that there is a partition of
  $\Omega$ into equivalence classes $\Omega'$, such that for any $f
  \equiv f'$, and any equivalence class $[g]$, we have
$$
P(f, [g]) = P(f',[g])\text{ and }R(f, [g]) = R(f',[g]).
$$
Then, we also have $S(f, [g]) = S(f'([g])$, for all $f,f'$ and $g$ as
described above.
\end{fact}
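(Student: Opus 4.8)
The plan is to prove the statement by a direct manipulation of the defining sum for the composed kernel, applying the two hypotheses one at a time. Recall that for a kernel $P$ and an equivalence class $[g]$, the quantity $P(f,[g])$ abbreviates $\sum_{g'\in[g]}P(f,g')$, and that the composition $S=PR$ is defined by $S(f,g') = \sum_{h\in\Omega}P(f,h)R(h,g')$. Since $\Omega$ is finite, all sums below are finite and no convergence issues arise.

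First I would write out the class-transition probability and unfold the composition:
$$
S(f,[g]) \;=\; \sum_{g'\in[g]}S(f,g') \;=\; \sum_{g'\in[g]}\sum_{h\in\Omega}P(f,h)R(h,g'),
$$
and then interchange the two finite sums to obtain $S(f,[g]) = \sum_{h\in\Omega}P(f,h)\,R(h,[g])$, where the inner sum over $g'\in[g]$ has been folded back into the class-transition probability $R(h,[g])$.

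The key step is to partition the intermediate sum over $h$ according to the equivalence classes $[k]$ of $\Omega$, giving
$$
S(f,[g]) \;=\; \sum_{[k]}\;\sum_{h\in[k]}P(f,h)\,R(h,[g]).
$$
By the hypothesis applied to $R$, the value $R(h,[g])$ depends only on the class $[k]$ containing $h$ and not on the representative $h$; denoting this common value by $R([k],[g])$, I can pull it out of the inner sum, and the remaining inner sum is precisely $P(f,[k])$. This yields $S(f,[g]) = \sum_{[k]}R([k],[g])\,P(f,[k])$. Finally, by the hypothesis applied to $P$, we have $P(f,[k]) = P(f',[k])$ whenever $f\equiv f'$, so substituting gives $S(f,[g]) = \sum_{[k]}R([k],[g])\,P(f',[k]) = S(f',[g])$, which is the claim.

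I do not expect a genuine obstacle: the argument is a one-line Fubini-style interchange followed by grouping the intermediate states by class. The only point requiring a little care is the middle step, where one must first verify that $R(\cdot,[g])$ is genuinely constant on each equivalence class before factoring it out — this is exactly the content of the hypothesis on $R$ and is what makes the block structure of the composed chain well defined. The same reasoning extends verbatim to any finite composition of kernels, each of which respects the partition, by a routine induction on the number of factors.
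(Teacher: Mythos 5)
Your proof is correct and follows essentially the same route as the paper's: unfold $S(f,[g])$ as a sum over intermediate states, group those states by equivalence class, use the hypothesis on $R$ to factor out the class-transition probability, and then invoke the hypothesis on $P$. The only cosmetic difference is that you make the initial Fubini-style interchange explicit, which the paper leaves implicit.
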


\begin{proof}
  The proof is by direct computation.  We have,
  \begin{eqnarray}
    S(f, [g]) &=& \sum_{q'\in\Omega}P(f,q')R(q',[g])\nonumber\\
    &=& \sum_{[q]\in\Omega'}\sum_{q'\in[q]}P(f,q')R(q',[g])\nonumber\\
    &=& \sum_{[q]\in \Omega'}R(q,[g])\sum_{q'\in[q]}P(f,q')\nonumber\\
    &=& \sum_{[q]\in\Omega}R(q,[g])P(f,[q])\label{eq:4}\\
    &=& \sum_{[q]\in\Omega}R(q,[g])P(f',[q])\label{eq:5}
  \end{eqnarray}
Just as in the derivation of equation (\ref{eq:4}) above, we get 
$S(f', [g]) = \sum_{[q]\in\Omega'}R(q,[g])P(f',[q])$, and hence, by
equation (\ref{eq:5}), we have $S(f',[g]) = S(f,[g])$, as claimed.
\end{proof}

In order to use the last fact, we now decompose the matrix of the RSM
process into the following two Markov chains on $\Omega$:
\begin{enumerate}
\item \textbf{The Reproduce-Select Chain.} We denote the transition
  matrix of this chain as $P$, such that $P(f,g)$ is the probability
  of obtaining the state $g$ starting from state $f$ after the
  reproduction and selection phases.  Notice that 
$$
P(f, g) = \phyp{Af}{g}{N}.
$$
Assume that $A$ is class invariant and let $A(i)$ denote the
reproduction rate for a genome of Hamming weight $i$.  For an
equivalence class $[h]$ of $\Omega$ as defined above, we consider the
probability $P(f,[h])$, with $f \in [h']$.  By the definition of
$\phyp{}{}{}$, this is the probability of drawing $h(i)$ genomes of
Hamming weight $i$ for $0\leq i \leq L$, when $N$ genomes are drawn
without replacement from a bag containing $A(i)\sum_{\sigma:\w\sigma =
  i}f(\sigma) = A(i)h'(i)$ genomes of weight $i$.  By definition, this
probability depends only on $h$ and the equivalence class $h'$ of
$f$, and hence $P(f, [h]) = P(g, [h])$ when $A$ is class invariant and
$f\equiv g$.

\item \textbf{The Mutation Chain.} We will directly write down the
  entries $R(f, [h'])$ for the probability of obtaining a state in the
  equivalence class $[h']$ starting from a state $f$ in the
  equivalence class $[h]$.  We will show now that we can write $R(f,
  [h'])$ in terms only of $h$ and $h'$, and hence $R(f, [h']) = R(g,
  [h'])$ for $f\equiv g$.  Denote by $\mathcal{Q}_{ij}$ the
  probability that a string $\sigma$ of Hamming weight $i$ transforms
  into some string of Hamming weight $j$ in the mutation step, and
  notice that this probability is well defined because of the
  definition of the mutation transition matrix $Q$.  Since $f \in
  [h]$, there are $h(i)$ strings of Hamming weight $i$ initially, for
  $0\leq i \leq L$.  Denote by $d_{ij}$ the number of strings of
  weight $i$ which transform into strings of weight $j$ in the
  mutation step.  Then, we have
  \begin{equation}
    R(f,[h']) = \sum_{\substack{d: \sum_{j}d_{ij} = h(i)\\\sum_{i}d_{ij} =
        h'(j)}}\prod_{0\leq i\leq L}\binom{h(i)}{\inb{d_{ij}|0\leq j \leq L}}
    \prod_{0 \leq j \leq L}\mathcal{Q}_{ij}^{d_{ij}}.\label{eq:6}
\end{equation}
Since $R(f, [h'])$ depends only upon $h$ and $h'$, we get that $R(f,
[h']) = R(g, [h'])$ for $f\equiv g$. 
\end{enumerate}

\noindent Combining the above two discussions and using Fact
\ref{fct:compose-projections}, we see that when $A$ is class
invariant, the transition matrix $\mathcal{M}$ of the RSM process
satisfies $\mathcal{M}(f, [h']) = \mathcal{M}(g, [h'])$ whenever
$f\equiv g$.  This completes the proof of  Lemma \ref{lem:proj-strong}.
\end{proof}

\subsubsection{Relationships between Error Thresholds}
\label{sec:relat-betw-error}

We first define error thresholds according to various dispersal
measures.  
\begin{definition}[Error Thresholds] 
\label{def:error-thresholds-finite}
Let $\eps \geq 0$, and $\vec{U}$ be the uniform distribution over the set of
genomes. 
\begin{enumerate}
\item $\mu_{c}^{\rm ex,1}(\eps,N) \defeq \min \left\{ \mu \in (0,1):
    \left\|\E{\vec{D}_{\infty,\mu}} - \vec{U} \right\|_{1} \leq \eps
  \right\}$.
\item $\mu_{c}^{\rm h}(\eps,N) \defeq  \min \left\{ \mu \in (0,1):
    \left| \sum_{\sigma}\w{\sigma}\E{D_{\infty,\mu}^{\sigma}} -
      2^{-L}\sum_{\sigma}{\w{\sigma}} \right| \leq \eps \right\}$.
\item $\mu_{c}^{\rm sh}(\eps,N) \defeq \min \left\{ \mu \in (0,1):
    \left| \mathbf{H}(\E{\vec{D}_{\infty,\mu}}) - \mathbf{H}(\vec{U})
    \right| \leq \eps \right\}$.  Here $\mathbf{H}$ denotes the
  Shannon entropy, using the base $e$. 
\end{enumerate}
\end{definition}
\begin{definition}[Error Threshold for the Quasispecies Model]
  Let $\mu \in (0,1)$ and let $\vec{v}_{\mu}$ denote the the
  stationary expected fraction vector with $\ell_{1}$ norm $1.$ The
  error thresholds are defined as follows.  
\begin{enumerate}
\item  $ \mu_{c}^{\rm ex,1}(\eps) \defeq \min \{ \mu \in (0,1): \|\vec{v}_{\mu}  - \vec{U} \|_{1}\leq \eps \}.$ 
\item  $ \mu_{c}^{\rm h}(\eps) \defeq \min \{ \mu \in (0,1): |\sum_{\sigma}v_{\mu}(\sigma) w_{H}(\sigma)  - {2^{-L}}\sum_{\sigma}w_{H}(\sigma)  |\leq \eps \}.$ 
\item $ \mu_{c}^{\rm sh}(\eps) \defeq \min \{ \mu \in (0,1):|
  \mathbf{H}(\vec{v}_{\mu}) - \mathbf{H}(\vec{U}) | \leq \eps \},$ where $\vec{H}$ denotes the Shannon entropy, using the base $e$. 
\end{enumerate}
 \end{definition}
\noindent Our results are mostly stated in terms of the error threshold $\mu_c^h$.
However, we now describe how the different definitions above are related
to each other.
The following inequalities relate the different distance measures
that we have considered.  The first of these follows from the
definition of the $\ell_1$ norm, while the second is the well known
Pinsker's inequality relating the $\ell_1$ norm to the entropy.
\begin{eqnarray}
  \abs{\E{\sum_{\sigma}\w{\sigma}D_{\infty,\mu}^{\sigma}} - \E{\sigma
      \leftarrow U}{\sum_{\sigma}\frac{\w{\sigma}} {|\Omega|}}} &\leq&
  L\norm[1]{\E{\vec{D}_{\infty,\mu}}-\vec{U}}\label{eq:1}\\
  \norm[1]{\E{\vec{D}_{\infty,\mu}}-\vec{U}} &\leq& \sqrt{2\left| \mathbf{H}(\E{D_{\infty,\mu}})  - \mathbf{H}(\vec{U}) \right|}\label{eq:2}
\end{eqnarray}

\noindent This gives us the following relationship between the error-thresholds:
\begin{eqnarray}
\mu_c^{h}(\epsilon, N) &\leq& \mu_c^{\textrm{ex},1}(\epsilon/L, N)\nonumber\\
\mu_c^{\textrm{ex,1}}(\epsilon, N) &\leq & \mu_c^{\rm sh}(\epsilon^2/2,N)\label{eq:9}
\end{eqnarray}
\noindent Using the fact that the distributions involved are defined
over a state space of size $2^L$, we can show the following weak converse
to inequality (\ref{eq:2}):
\begin{equation*}
  \left| \mathbf{H}(\E{\vec{D_{\infty,\mu}}})  - \mathbf{H}(\vec{U}) \right| \leq 2^L  \norm[1]{\E{\vec{D}_{\infty,\mu}}-\vec{U}}^2.
\end{equation*}
\noindent This gives us a further relationship between the error thresholds:
\begin{equation*}
  \mu_c^{\textrm{ex,1}}(\epsilon, N) \geq \mu_c^{\rm sh}(2^L\epsilon^2,N)
\end{equation*}
However, we notice that one cannot in general close the loop in
inequalities (\ref{eq:1}) and (\ref{eq:2}) (and hence in inequalities
(\ref{eq:9})) above.  To see this, consider for example the following
two distributions $P$ and $Q$ for $L > 1.$
\begin{enumerate}
\item $P$: puts total weight $1-\epsilon$ on weight $1$ strings and weight $\epsilon$ on the string $\vec{0}$, so that the average Hamming weight is $1-\epsilon.$
\item $Q$: puts total weight $(1-\epsilon)/L$ on weight $L$ strings,
  and weight $1-(1-\epsilon)/L$ the string $\vec{0}$, so that the
  average Hamming weight is still $1-\epsilon$.
\end{enumerate}
The average Hamming weight in both cases is $1-\epsilon,$ so that in
that metric, the distance between $P$ and $Q$ is zero.  However, the
total variation distance between $P$ and $Q$ is at least $1-\epsilon.$

\subsection{Proof of  Theorem \ref{thm:convg-claim}}
\label{sec:proof-theor-refthm:c}
In the rest of this section, we will use the following concentration
inequalities about the multivariate hypergeometric distribution:
\begin{fact}
\label{fct:conc-hyp}
Consider the hypergeometric distribution $\phyp{g}{f}{N}$ defined in
equation (\ref{eq:phyp}) above. Let $D^\sigma$ be the random variable
denoting the fraction of genomes of type $\sigma$ which are drawn in
the process starting with $g(\tau)$ genomes of each type $\tau$.  We
then have:
  \begin{enumerate}
  \item $\E{D^\sigma} = \frac{g(\sigma)}{\inner{g}{\ones}}$.
  \item The following concentration inequality holds for $\epsilon
    \geq 0$:
    \begin{equation*}
      \Pr{|D^\sigma - \E{D^\sigma}| > \epsilon} \leq
      2\exp\inp{-\epsilon^2N}.
    \end{equation*}
  \end{enumerate}
\end{fact}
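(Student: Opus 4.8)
The plan is to establish the two parts separately: the expectation by a symmetry argument and the concentration by reducing sampling \emph{without} replacement to sampling \emph{with} replacement. First I would write $ND^\sigma = \sum_{i=1}^N X_i$, where $X_i$ is the indicator that the $i$-th of the $N$ genomes drawn (without replacement) from the urn is of type $\sigma$. Although the $X_i$ are dependent, the sequence of draws is exchangeable, so each $X_i$ has the same marginal law, namely that of a single uniformly chosen element of the urn being of type $\sigma$; this probability is $g(\sigma)/\inner{g}{\ones}$. By linearity of expectation, $\E{ND^\sigma} = N\,g(\sigma)/\inner{g}{\ones}$, and dividing by $N$ yields part (1), $\E{D^\sigma} = g(\sigma)/\inner{g}{\ones}$.

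For part (2), the difficulty is exactly that the $X_i$ are not independent, so a Chernoff bound cannot be applied directly to $\sum_i X_i$. The key observation I would use is that sampling without replacement is never less concentrated than sampling with replacement. Concretely, I would invoke the classical convex-order domination due to Hoeffding: if $X_1,\dots,X_N$ are drawn without replacement from the finite population consisting of $g(\tau)$ copies of each type $\tau$, and $Y_1,\dots,Y_N$ are drawn with replacement from the same population, then $\E{\phi\inp{\sum_i X_i}} \leq \E{\phi\inp{\sum_i Y_i}}$ for every convex $\phi$. Taking $\phi(x) = e^{sx}$ shows that the moment generating function of $\sum_i X_i$ is dominated by that of $\sum_i Y_i$. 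Since the $Y_i$ are i.i.d.\ Bernoulli variables with mean $p \defeq g(\sigma)/\inner{g}{\ones}$, the standard Chernoff--Hoeffding estimate for bounded independent variables applies to the with-replacement sum, and by the domination of moment generating functions it transfers verbatim to $\sum_i X_i$. This gives $\Pr{|D^\sigma - \E{D^\sigma}| > \epsilon} \leq 2\exp\inp{-2N\epsilon^2}$, which is even stronger than the stated bound $2\exp\inp{-N\epsilon^2}$.

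The main obstacle is thus the dependence among the draws, and the crux of the argument is the reduction to the independent case via Hoeffding's domination (equivalently, the negative association of the selection indicators). I would remark that a self-contained alternative is available through the method of bounded differences: writing $ND^\sigma$ as a function of the exchangeable draw sequence $Z_1,\dots,Z_N$ and passing to the Doob martingale $M_k = \E{ND^\sigma \mid Z_1,\dots,Z_k}$, one bounds the martingale increments and applies Azuma's inequality. However, the naive increment bound of $1$ only produces an exponent of order $\epsilon^2 N/2$, which is too weak to reach the claimed constant; matching it this way would require a more careful Serfling-type estimate of the increments. The Hoeffding domination route is therefore the cleaner path to the stated inequality, and it is the one I would present.
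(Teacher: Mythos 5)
Your proof is correct, but it takes a different route from the paper's. The paper does not write out a proof of this fact at all: it simply asserts that the inequality ``is a consequence of Azuma's inequality'' and defers to the book of Dubhashi and Panconesi, i.e.\ it relies on the martingale / method-of-bounded-differences treatment of the hypergeometric tail (consistent with the overview section, which appeals to the negative correlation of the selection indicators and the method of bounded differences). You instead reduce sampling without replacement to sampling with replacement via Hoeffding's convex-order domination of the two sums, transfer the moment generating function bound, and apply the standard Chernoff--Hoeffding estimate to the i.i.d.\ Bernoulli sum. Both routes are classical and both appear in the cited reference; yours has two advantages worth noting. First, it is fully self-contained modulo Hoeffding's 1963 lemma and delivers the sharper constant $2\exp\inp{-2\epsilon^2 N}$, which implies the stated $2\exp\inp{-\epsilon^2 N}$ with room to spare. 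Second, your side remark is a genuinely useful observation: the naive Doob-martingale argument with increment bound $1$ on $ND^\sigma$ only yields exponent $\epsilon^2 N/2$ from Azuma, which falls short of the constant the paper claims, so the paper's one-line attribution to ``Azuma's inequality'' is, taken literally, slightly loose and implicitly depends on a sharper increment analysis or on the negative-association machinery in the cited book. Your expectation computation via exchangeability and linearity is also correct and matches what any proof of part (1) would do.
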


Similarly for the multivariate binomial distribution, we have the
following:
\begin{fact}
\label{fct:conc-bino}
Consider $N$ genomes with $f(\sigma)$ genomes of each type $\sigma$.
Let $D^\sigma$ be the random variable denoting the fraction of genomes
of type $\sigma$ after a mutation step. Then
  \begin{enumerate}
  \item $\E{D^\sigma} = \frac{1}{N}\sum_{\sigma}f(\tau)Q_{\tau\sigma}$.
  \item The following concentration inequality holds for $\epsilon
    \geq 0$:
    \begin{equation*}
      \Pr{|D^\sigma - \E{D^\sigma}| > \epsilon} \leq
      2\exp\inp{-2\epsilon^2N}
    \end{equation*}
  \end{enumerate}
\end{fact}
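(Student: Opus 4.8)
The plan is to reduce this to a bound on a sum of independent, bounded random variables and then invoke Hoeffding's inequality. The essential point is that, unlike the selection step, the mutation step acts on each genome \emph{independently}. So I would first condition on the fixed pre-mutation multiset $f$: after the selection step there are exactly $N$ genomes, which I label $1,\dots,N$, and I write $\tau_i$ for the (now deterministic) type of the $i$-th genome, so that each type $\tau$ appears exactly $f(\tau)$ times. In the mutation step the $i$-th genome independently becomes a genome of type $\sigma$ with probability $Q_{\tau_i\sigma}$. Introducing the indicator $X_i$ of the event that the $i$-th genome ends up of type $\sigma$, the $X_i$ are therefore independent Bernoulli random variables, and by definition of $D^\sigma$ as the fraction of type-$\sigma$ genomes after mutation we have $N D^\sigma = \sum_{i=1}^N X_i$.

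For the first part, linearity of expectation gives $\E{N D^\sigma} = \sum_{i=1}^N Q_{\tau_i\sigma}$. Grouping the summands by the pre-mutation type collapses this to $\sum_{\tau} f(\tau) Q_{\tau\sigma}$, since exactly $f(\tau)$ of the indices $i$ satisfy $\tau_i = \tau$ and each contributes $Q_{\tau\sigma}$. Dividing through by $N$ yields $\E{D^\sigma} = \frac{1}{N}\sum_{\tau} f(\tau) Q_{\tau\sigma}$, which is the claimed mean (the sum being over the pre-mutation type $\tau$).

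For the concentration statement I would observe that $D^\sigma = \frac{1}{N}\sum_{i=1}^N X_i$ is an average of $N$ independent random variables, each taking values in the interval $[0,1]$. Hoeffding's inequality for averages of independent bounded variables then gives directly $\Pr{\abs{D^\sigma - \E{D^\sigma}} > \epsilon} \leq 2\exp\inp{-2\epsilon^2 N}$, matching the stated bound. I do not expect any genuine obstacle here: in contrast with the hypergeometric case of Fact \ref{fct:conc-hyp}, where sampling without replacement induces negative correlations and forces one to use a method-of-bounded-differences argument, the per-genome independence of mutation is exactly what lets a black-box Hoeffding/Chernoff bound apply. The only point requiring a little care is to make the conditioning explicit, namely that all randomness is taken given the fixed input $f$, so that the types $\tau_i$ are deterministic and the $X_i$ are genuinely independent; once this is pinned down the remaining computation is routine.
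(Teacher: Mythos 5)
Your proposal is correct and follows the same route the paper intends: the paper simply remarks that Fact \ref{fct:conc-bino} ``is essentially a restatement of the Chernoff--Hoeffding bound,'' and your argument---conditioning on $f$, writing $ND^\sigma$ as a sum of $N$ independent $[0,1]$-valued indicators, and applying linearity plus Hoeffding---is exactly the standard derivation being invoked. The constant $2\exp(-2\epsilon^2 N)$ checks out, and you correctly read the sum in the mean as being over the pre-mutation type $\tau$.
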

Fact \ref{fct:conc-hyp} is a consequence of Azuma's inequality, and a
proof can be found in the book by Dubhashi and Panconesi
\citep{dubhashi09}.  Fact \ref{fct:conc-bino} is essentially a
restatement of the Chernoff-Hoeffding bound.  Combining the above
bounds, we can deduce the following concentration inequality for each
step of the RSM process:
\begin{lemma}
\label{lem:concentration}
  Consider a state $\vec{N_t}$ of the RSM process.  We then have
  \begin{enumerate}
  \item $\E{\dt{t+1}{\sigma}|\vec{N_t}} =
    \frac{(\vec{D_t}AQ)_\sigma}{\inner{\vec{D_t}}{A}} =
    r^\sigma(\vec{D_t})$, with $r^\sigma$ as defined in equation
    (\ref{eq:r-def}).
  \item Let $\epsilon_1$ and $\epsilon_2$ be arbitrary positive
    constants. Then with probability (conditional on $\vec{N_t}$) at
    least $1 - 2^{2L+1}(\exp\inp{-\epsilon_1^2N} +
    \exp\inp{-2\epsilon_2^2N})$, we have $|\dt{t+1}{\sigma} -
    \E{\dt{t+1}{\sigma}|\vec{N_t}}| \leq (\epsilon_1 + \epsilon_2)$
    for every $\sigma$.  In particular, choosing $\epsilon_1 =
    \epsilon_2 = \epsilon/2$, we get that with probability
    (conditional on $\vec{N}_t$) at least $1 -
    2^{2L+2}\exp\inp{-\epsilon^2N/4}$, we have $|\dt{t+1}{\sigma} -
    \E{\dt{t+1}{\sigma}|\vec{N_t}}| \leq \epsilon$ for every $\sigma$.
  \end{enumerate}
\end{lemma}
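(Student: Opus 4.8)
The plan is to decompose one RSM step into its two stochastic sub-steps and handle each with the matching concentration fact. I introduce the intermediate random vector $\vec{S}_t$ recording the fractional populations after the reproduction and selection phases, so that $\vec{D}_{t+1}$ is obtained from $\vec{S}_t$ by the mutation phase alone. Conditioning on $\vec{N}_t$, reproduction turns the population into $a_\sigma N_t^\sigma$ genomes of type $\sigma$, and selection draws $N$ of the $I_t = N\inner{\vec{D}_t}{A}$ of them without replacement; this is exactly the hypergeometric sampling of Fact \ref{fct:conc-hyp} with $g = A\vec{N}_t$. The mutation phase is then the multivariate binomial process of Fact \ref{fct:conc-bino} applied to $\vec{S}_t$.

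For part (1), I would read off the expectation formula of Fact \ref{fct:conc-hyp} to get $\E{S_t^\tau/N \mid \vec{N}_t} = a_\tau D_t^\tau / \inner{\vec{D}_t}{A}$, and the expectation formula of Fact \ref{fct:conc-bino} to get $\E{D_{t+1}^\sigma \mid \vec{S}_t} = \sum_\tau (S_t^\tau/N)\,Q_{\tau\sigma}$. Combining these by the tower property gives $\E{D_{t+1}^\sigma \mid \vec{N}_t} = \sum_\tau \frac{a_\tau D_t^\tau}{\inner{\vec{D}_t}{A}} Q_{\tau\sigma}$, which, after noting that $Q$ is symmetric so that $\sum_\tau a_\tau Q_{\tau\sigma} D_t^\tau = (QA\vec{D}_t)_\sigma$, is exactly $r^\sigma(\vec{D}_t)$ as defined in equation \eqref{eq:r-def}.

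For part (2), I split the deviation by the triangle inequality:
$$\abs{D_{t+1}^\sigma - \E{D_{t+1}^\sigma \mid \vec{N}_t}} \leq \abs{D_{t+1}^\sigma - \E{D_{t+1}^\sigma \mid \vec{S}_t}} + \abs{\E{D_{t+1}^\sigma \mid \vec{S}_t} - \E{D_{t+1}^\sigma \mid \vec{N}_t}}.$$
The first term is the mutation fluctuation, at most $\epsilon_2$ with failure probability $2\exp(-2\epsilon_2^2 N)$ by Fact \ref{fct:conc-bino}; since this bound is uniform over $\vec{S}_t$, it survives averaging over the selection randomness. The second term equals $\abs{\sum_\tau (S_t^\tau/N - \E{S_t^\tau/N \mid \vec{N}_t})\,Q_{\tau\sigma}}$, and if every selection deviation is at most $\epsilon_1$ (failure probability $2\exp(-\epsilon_1^2 N)$ per type by Fact \ref{fct:conc-hyp}), then using that $\sum_\tau Q_{\tau\sigma} = 1$ (a column sum of the symmetric stochastic matrix $Q$) this term is at most $\epsilon_1 \sum_\tau Q_{\tau\sigma} = \epsilon_1$. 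Union-bounding over the at most $2^L$ types $\tau$ that must be controlled in the selection step and the at most $2^L$ types $\sigma$ in the mutation step, summed over the $2^L$ coordinates where we require the final bound, yields the stated failure probability $2^{2L+1}(\exp(-\epsilon_1^2 N) + \exp(-2\epsilon_2^2 N))$; setting $\epsilon_1 = \epsilon_2 = \epsilon/2$ gives the final clean form.

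I expect the main obstacle to be the bookkeeping of the conditioning in part (2): the mutation concentration is naturally stated conditional on $\vec{S}_t$, whereas the target deviation is conditional on $\vec{N}_t$, and the selection error must be pushed through the mutation-expectation map $\tau \mapsto \sum_\tau (\cdot)\,Q_{\tau\sigma}$ without being amplified. The identity $\sum_\tau Q_{\tau\sigma} = 1$ is precisely what keeps the propagated selection error at $\epsilon_1$ rather than growing with the number of source types, and checking that the union bound correctly absorbs both phases is the delicate step.
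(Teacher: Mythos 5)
Your proposal is correct and follows essentially the same route as the paper's proof: split the step into the hypergeometric selection phase and the multivariate binomial mutation phase, apply the two concentration facts with a union bound over genome types, and combine via the triangle inequality using $\sum_\tau Q_{\tau\sigma}=1$ so the selection error propagates without amplification. The only cosmetic difference is that both you and the paper state the looser $2^{2L+1}$ prefactor when the union bound actually yields $2^{L+1}$ per phase, which is harmless.
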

\begin{proof}
  For ease of notation, let $g_\sigma = a_\sigma\nt{t}{\sigma}$.  Let
  $I^\sigma$ be the random variable denoting the fraction of genomes
  of type $\sigma$ left after the selection step.  Thus, we have
$$
\E{I^\sigma|\vec{N}_t} = \frac{g_\sigma}{\inner{\vec{g}}{\ones}}\text{
  and } \E{\dt{t+1}{\sigma}|\vec{I}, \vec{N}_t} = \sum_{\tau}I^\tau
Q_{\tau,\sigma}.
$$

Using a union bound over all genome types with the concentration
inequality in Fact \ref{fct:conc-hyp}, we get that with probability
at least $1 - 2^{L+1}\exp\inp{-\epsilon_1^2N}$ conditioned on
$\vec{N}_t$, we have 
\begin{equation*}
\abs{I^\sigma \in \frac{g_\sigma}{\inner{\vec{g}}{\ones}}}\leq
\epsilon_1\text{, for all }\sigma.
\end{equation*}
We denote the above event by $\mathcal{E}$.  Now, we consider the
concentration of $\vec{D}_t$ conditioned on $\vec{I}$.  Using a union
bound over all genome types along with the concentration inequality in
Fact \ref{fct:conc-bino}, we get with probability at least $1 -
2^{L+1}\exp\inp{-2\epsilon_2^2N}$ conditioned on $\vec{I}$, we have 
\begin{equation*}
  \abs{\dt{t+1}{\sigma} - \sum_\tau{I^\tau}Q_{\tau\sigma}} \leq
  \epsilon_2\text{, for all }\sigma.
\end{equation*}
We denote the above event by $\mathcal{F}$.  With probability at least
$1 - 2^{2L+1}(\exp\inp{-\epsilon_1^2N} + \exp\inp{-2\epsilon_2^2N})$,
conditioned on $\vec{N}_t$, both $\mathcal{E}$ and $\mathcal{F}$
occur, and then we have, for all $\sigma$,
\begin{eqnarray}
  \abs{\dt{t+1}{\sigma} - \E{\dt{t+1}{\sigma}|\vec{N}_t}} & = &
  \abs{\sum_\tau Q_{\tau\sigma}(\dt{t+1}{\sigma} -
    \frac{g_\tau}{\inner{g}{\ones}})}\nonumber\\
& \leq & \abs{\dt{t+1}{\sigma} -\sum_{\tau}I^\tau Q_{\tau\sigma}} +  \sum_\tau Q_{\tau\sigma}\abs{I^\tau - \frac{g_\tau}{\inner{g}{\ones}}}\nonumber\\
&=&\epsilon_2 + \epsilon_1\sum_\tau Q_{\tau\sigma} = \epsilon_1 + \epsilon_2,\nonumber
\end{eqnarray}
which is what we sought to prove.
\end{proof}
Before proceeding, we need the following lemma:
\begin{lemma}
\label{lem:lipschitz}
Fix a fitness landscape $A$ with positive entries and a mutation
transition matrix $Q$ with $\mu < 1/2$.  The functions $r^\tau$
defined in equation (\ref{eq:r-def}) are Lipschitz with Lipschitz
constant
$$K = \frac{\max_\tau{a_\tau}}{\min_\tau{a_\tau}}\inp{(1-\mu)^L -
  \mu^L}$$ on the set of probability distributions over genomes.
\end{lemma}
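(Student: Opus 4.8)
The plan is to bound the partial derivatives of each $r^\sigma$ uniformly on the domain of interest and then integrate along line segments. The natural domain here is the set of probability distributions over $\inb{0,1}^L$, which is convex, and on which $r^\sigma$ is smooth: the denominator $\inner{A\vec{x}}{\ones} = \sum_\tau a_\tau x_\tau$ never vanishes because every $a_\tau \geq 1$, so on the simplex it is bounded below by $\min_\tau a_\tau > 0$. Consequently it suffices to bound $\norm[\infty]{\nabla r^\sigma}$ pointwise; the Lipschitz constant $K$ will then be obtained with respect to the $\ell_1$ norm on the input via H\"older's inequality.

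First I would compute the gradient directly. Writing $D(\vec{x}) \defeq \sum_\tau a_\tau x_\tau$ and using the symmetry $Q_{\tau\sigma} = Q_{\sigma\tau}$, the quotient rule applied to the definition in \eqref{eq:r-def} yields the clean expression
\[
  \pdiff{r^\sigma}{x_j} = \frac{a_j}{D(\vec{x})}\inp{Q_{\sigma j} - r^\sigma(\vec{x})}.
\]
The next step is to bound each of the two factors separately. On the simplex, $D(\vec{x}) \geq \min_\tau a_\tau$ while $a_j \leq \max_\tau a_\tau$, so $a_j/D(\vec{x}) \leq \max_\tau a_\tau/\min_\tau a_\tau$. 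For the second factor, observe that $r^\sigma(\vec{x}) = \sum_\tau Q_{\sigma\tau}\, w_\tau$ with weights $w_\tau \defeq a_\tau x_\tau/D(\vec{x}) \geq 0$ summing to $1$, so $r^\sigma(\vec{x})$ is a convex combination of the entries $\inb{Q_{\sigma\tau}}_\tau$ of a single row of $Q$. Since $\mu < 1/2$, each entry $Q_{\sigma\tau} = \mu^{\ham{\sigma,\tau}}(1-\mu)^{L-\ham{\sigma,\tau}}$ lies in $[\mu^L,(1-\mu)^L]$, attaining its maximum at $\tau=\sigma$ and its minimum at the complementary string. Hence both $r^\sigma(\vec{x})$ and $Q_{\sigma j}$ lie in this interval, giving $\abs{Q_{\sigma j} - r^\sigma(\vec{x})} \leq (1-\mu)^L - \mu^L$. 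Multiplying the two bounds shows $\abs{\pdiff{r^\sigma}{x_j}} \leq K$ for every coordinate $j$, i.e. $\norm[\infty]{\nabla r^\sigma} \leq K$ uniformly on the simplex.

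Finally, for any two distributions $\vec{x},\vec{y}$, integrating $\nabla r^\sigma$ along the segment joining them (which remains in the simplex by convexity, so the bound above applies throughout) and invoking H\"older's inequality gives $\abs{r^\sigma(\vec{x}) - r^\sigma(\vec{y})} \leq \norm[\infty]{\nabla r^\sigma}\,\norm[1]{\vec{x}-\vec{y}} \leq K\norm[1]{\vec{x}-\vec{y}}$, which is the desired Lipschitz estimate. I expect the only genuinely delicate step to be the identification of $r^\sigma$ as a convex combination of one row of $Q$, together with the use of $\mu < 1/2$ to locate the extreme entries of that row and thereby confine both $r^\sigma$ and $Q_{\sigma j}$ to an interval of length $(1-\mu)^L - \mu^L$; the factor $\max_\tau a_\tau/\min_\tau a_\tau$ then falls out routinely from the lower bound on $D(\vec{x})$ available on the simplex.
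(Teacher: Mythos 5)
Your proposal is correct and follows essentially the same route as the paper: the same quotient-rule computation giving $\pdiff{r^{\sigma}}{x_j} = \frac{a_j}{\sum_\tau a_\tau x_\tau}(Q_{\sigma j} - r^\sigma(\vec{x}))$, the same observation that $r^\sigma(\vec{x})$ and $Q_{\sigma j}$ both lie in $[\mu^L,(1-\mu)^L]$, and the same passage from the gradient bound to the $\ell_1$-Lipschitz estimate (the paper cites the mean value theorem where you integrate along the segment and apply H\"older, which is the same argument). Your identification of $r^\sigma(\vec{x})$ as a convex combination of a single row of $Q$ is a slightly sharper way of phrasing the paper's bound $\min_{\sigma,\tau}Q_{\sigma\tau}\le r^{\sigma'}(\vec{x})\le\max_{\sigma,\tau}Q_{\sigma\tau}$, but the substance is identical.
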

\begin{proof}
  For a probability distribution $\vec{x}$ over genomes, we have
  \begin{eqnarray*}
    \abs{\pdiff{r^{\sigma'}(\vec{x})}{x_\sigma}} & = & \frac{a_\sigma}{\sum_{\tau}a_\tau x_\tau}
      \abs{Q_{\sigma\sigma'} - r^{\sigma'}(\vec{x})}\nonumber\\
      &\leq&\frac{\max_\tau{a_\tau}}{\min_\tau{a_\tau}}\inp{(1-\mu)^L - \mu^L},\nonumber
  \end{eqnarray*}
  where the last line follows by noticing that fact that for all
  $\vec{x}$ and all $\sigma'$, $\min_{\sigma,\tau}Q_{\sigma\tau} \leq
  r^{\sigma'}(\vec{x}) \leq \max_{\sigma,\tau}Q_{\sigma\tau}$, and
  $\min_{{\sigma\tau}}Q_{\sigma\tau} = \mu^L$, while
  $\max_{{\sigma\tau}}Q_{\sigma\tau} = (1-\mu)^L$.  Thus, by the mean
  value theorem, for any probability distributions $\vec{x}$ and
  $\vec{y}$ over genomes, we get
$$
\abs{r(\vec{x}) - r(\vec{y})} \leq K\norm[1]{\vec{x} - \vec{y}}. 
$$
\end{proof}
\begin{proof}[Proof (of Theorem \ref{thm:convg-claim})]
  Fix a time $t_0$.  In the rest of the proof, we drop the
  conditioning on the initial state being concentrated on the master
  sequence for ease of notation.  We will prove the following claim by
  induction for $0\leq t\leq t_0$:
  \begin{claim}
    \label{claim:induc}
    For every $\sigma \in \inb{0,1}^L$ and $0\leq t \leq t_0$, there
    exist $l_t^\sigma, u_t^\sigma$ and $p_t$ satisfying the conditions
    \begin{enumerate}
    \item $0\leq l_t^\sigma \leq u_t^\sigma \leq 1$, and $s_t
      \defeq \max_\sigma{u_t^\sigma - l_t^\sigma}$ and $p_t$ are $o_N(1)$.  Also,
      $\xt{t}{\sigma}$ lies in the interval $\insq{l_t^\sigma,
        u_t^\sigma}$.\label{claim:induc:item:1}
    \item With probability at least $1-p_t$, $\dt{t}{\sigma}$ lies in
      the interval $\insq{l_t^\sigma, u_t^\sigma}$ for all
      $\sigma$. \label{claim:induc:item:2}
    \end{enumerate}
  \end{claim}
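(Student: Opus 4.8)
The plan is to prove Claim \ref{claim:induc} by induction on $t$, using the one-step concentration bound of Lemma \ref{lem:concentration} together with the Lipschitz continuity of $\vec{r}$ from Lemma \ref{lem:lipschitz} to chain the per-step estimates. It is convenient to center the intervals on the quasispecies trajectory: I would look for a half-width $\rho_t = o_N(1)$ and set $l_t^\sigma = \max\inb{0, \xt{t}{\sigma} - \rho_t}$ and $u_t^\sigma = \min\inb{1, \xt{t}{\sigma} + \rho_t}$, which makes part (\ref{claim:induc:item:1}) — in particular $\xt{t}{\sigma} \in \insq{l_t^\sigma, u_t^\sigma}$ and $s_t \leq 2\rho_t$ — automatic, and reduces part (\ref{claim:induc:item:2}) to the statement that $\abs{\dt{t}{\sigma} - \xt{t}{\sigma}} \leq \rho_t$ for every $\sigma$ with probability at least $1 - p_t$. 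For the base case $t=0$ we have $\vec{D}_0 = \vec{m}_0$ deterministically, so $\rho_0 = 0$ and $p_0 = 0$ work.

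For the inductive step, suppose the claim holds at time $t$, and let $G_t$ be the event that $\abs{\dt{t}{\sigma} - \xt{t}{\sigma}} \leq \rho_t$ for every $\sigma$, so that $G_t$ is measurable with respect to $\vec{N}_t$ and $\Pr{G_t^c} \leq p_t$. On $G_t$ we have $\norm[1]{\vec{D}_t - \vec{m}_t} \leq 2^L \rho_t$, and since $\vec{m}_{t+1} = \vec{r}(\vec{m}_t)$ while $\E{\dt{t+1}{\sigma}|\vec{N}_t} = r^\sigma(\vec{D}_t)$ by the first part of Lemma \ref{lem:concentration}, the Lipschitz bound of Lemma \ref{lem:lipschitz} gives $\abs{\E{\dt{t+1}{\sigma}|\vec{N}_t} - \xt{t+1}{\sigma}} \leq K 2^L \rho_t$ on $G_t$. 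To this I would add the fluctuation term: by the second part of Lemma \ref{lem:concentration}, conditioned on any $\vec{N}_t$, with probability at least $1 - q_t$, where $q_t \defeq 2^{2L+2}\exp\inp{-\epsilon_t^2 N/4}$, we have $\abs{\dt{t+1}{\sigma} - \E{\dt{t+1}{\sigma}|\vec{N}_t}} \leq \epsilon_t$ for all $\sigma$. Because this failure probability is uniform in $\vec{N}_t$, a union bound over the two bad events yields $\abs{\dt{t+1}{\sigma} - \xt{t+1}{\sigma}} \leq \epsilon_t + K 2^L \rho_t$ for every $\sigma$ with probability at least $1 - (p_t + q_t)$. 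Thus the induction closes with $\rho_{t+1} \defeq \epsilon_t + K 2^L \rho_t$ and $p_{t+1} \defeq p_t + q_t$.

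It then remains to choose the free parameters $\epsilon_t$ so that both recursions stay $o_N(1)$ up to the fixed horizon $t_0$. Taking, say, $\epsilon_t = N^{-1/4}$ makes $\epsilon_t = o_N(1)$ and $q_t = o_N(1)$ for each fixed $t$; since $L$, $K$ and $t_0$ are all independent of $N$, unrolling $\rho_{t+1} = \epsilon_t + K 2^L \rho_t$ from $\rho_0 = 0$ expresses $\rho_{t_0}$ as a finite sum of $o_N(1)$ terms multiplied by the constant factors $(K 2^L)^{j}$, whence $\rho_{t_0} = o_N(1)$, and likewise $p_{t_0} = \sum_{j < t_0} q_j = o_N(1)$. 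This is really the only delicate point: the factor $K 2^L$ per step could be large, but because $t_0$ is fixed the accumulated blow-up $(K 2^L)^{t_0}$ is a constant, so the errors do not explode with $N$. Finally, to deduce Theorem \ref{thm:convg-claim} from the claim at $t = t_0$, I would bound $\abs{\E{\dt{t_0}{\sigma}} - \xt{t_0}{\sigma}} \leq \E{\abs{\dt{t_0}{\sigma} - \xt{t_0}{\sigma}}} \leq \rho_{t_0} + p_{t_0}$, using that $\dt{t_0}{\sigma} \in \insq{0,1}$ always, and let $N \to \infty$.
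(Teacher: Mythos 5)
Your proposal is correct and follows essentially the same route as the paper's proof: the same induction on $t$, the same one-step concentration bound (Lemma \ref{lem:concentration}) for the fluctuation term, the same use of the Lipschitz constant $K$ of $\vec{r}$ to propagate the error (yielding the recurrence $\rho_{t+1} = K2^L\rho_t + \epsilon_t$, which matches the paper's $s_{t+1}\leq 2^LKs_t+\epsilon(N)$), and the same union-bound accumulation of failure probabilities over the fixed horizon $t_0$. The only cosmetic difference is that you center the intervals symmetrically on the quasispecies trajectory $\xt{t}{\sigma}$ rather than defining them via the min/max of $r^\sigma$ over the previous box, which changes nothing of substance.
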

  We first see how to finish the proof of Theorem \ref{thm:convg-claim}
  assuming Claim \ref{claim:induc}.  From item
  \ref{claim:induc:item:2} in Claim \ref{claim:induc}, and using
  $\xt{t}{\sigma} \in \insq{l_t^\sigma, u_t^\sigma}$ ,we get
\begin{equation}
  \label{eq:11}
  |\E{\dt{t_0}{\sigma}} - \xt{t_0}{\sigma}| \leq p_{t_0} + (1-p_{t_0})|u_{t_0}^\sigma -
  l_{t_0}^\sigma| \text{, for all $\sigma$.}
\end{equation}
Now item \ref{claim:induc:item:1} of the claim implies that the right
hand side of equation (\ref{eq:11}) goes to $0$ as
$N\rightarrow\infty$, which concludes the proof of Theorem
\ref{thm:convg-claim}, assuming Claim \ref{claim:induc}.
\end{proof}
We now proceed to prove Claim \ref{claim:induc}.
\begin{proof}[Proof of Claim \ref{claim:induc}]
  At $t$ = 0, we can set $l_t^\sigma = u_t^\sigma = \xt{t}{\sigma}$,
  and $p_t = 0$.  By the definition of the starting state, this
  satisfies the conditions claimed in the claim.  Now suppose that we
  have shown that with probability $1-p_t$, we have $\dt{t}{\sigma}
  \in \insq{l_t^\sigma, u_t^\sigma}$  for all $\sigma$.  We call the
  latter event $\mathcal{E}_t$.   Recall that 
$$
\E{\dt{t+1}{\sigma}|\vec{D}_t} = r^\sigma(\vec{D}_t),
$$
and define
$$
{l'}_{t+1}^\sigma = \min_{\inb{\vec{y} | y^\sigma\in\insq{l_t^\tau, u_t^\tau}}}r^\sigma(\vec{y});\;\;
  {u'}_{t+1}^\sigma = \max_{\inb{\vec{y} | y^\tau\in\insq{l_t^\tau, u_t^\tau}}}r^\sigma(\vec{y})
$$
Notice that $\xt{t+1}{\sigma} \in \insq{{l'}_{t+1}^\sigma,
  {u'}_{t+1}^\sigma}$. Also, because of the Lipschitz condition on the
function $r^\sigma$ shown in Lemma \ref{lem:lipschitz}, we have
${u'}_{t+1}^\sigma - {l'}_{t+1}^\sigma \leq 2^LKs_t$.  Now, we condition
on the event $\mathcal{E}_t$ defined above, and in this case, we have
$$\E{\dt{t+1}{\sigma}|\mathcal{E}_t} \in \insq{{l'}_{t+1}^\sigma,
    {u'}_{t+1}^\sigma}\text{, for all }\sigma.
$$
Choose $\epsilon(N) = N^{-1/3} = o_N(1)$, and set $l_{t+1}^\sigma =
{l'}_{t+1}^\sigma - \epsilon/2$, $u_{t+1}^\sigma = {u'}_{t+1}^\sigma +
\epsilon/2$.  Using the concentration result quoted in Lemma
\ref{lem:concentration}, we get that conditioned on $\mathcal{E}_t$,
with probability at least $1-p(N)$ where $p(N) =
\exp(-\Omega(N^{1/3}))= o_N(1)$,
$$
\dt{t+1}{\sigma} \in \insq{l_{t+1}^\sigma, u_{t+1}^\sigma}\text{, for
  all }\sigma.
$$
Now, we saw above that $\mathcal{E}_t$ occurs with probability at
least $1-p_t$. Hence, by a union bound, we get that with probability
at least $1-p_{t+1}$, where $p_{t+1} = p_{t} + p(N)$,
$$
\dt{t+1}{\sigma} \in \insq{l_{t+1}^\sigma, u_{t+1}^\sigma}\text{, for
  all }\sigma.
$$
This proves the induction hypothesis, except that we need to make sure
that $s_t, p_t$ are $o_N(1)$.  We first consider $s_t$. From above, we
have the following recurrence for $s_t$:
\begin{equation}
s_{t+1} \leq 2^LKs_t + \epsilon(N); \;\; s_0 = 0.\label{eq:7}
\end{equation}
This satisfies $s_t = O_N(\epsilon) = o_N(1)$ for all $t \leq t_0$, by
the choice of $\epsilon$.  Similarly, we have $p_t = tp(N) = o_N(1)$ for
$t \leq t_o$ by the choice of $p(N)$.  This proves Claim
\ref{claim:induc}.
\end{proof}

\subsection{Proof of Corollary \ref{coroll:conv-error-threshold}}
\label{sec:error-threshold-convergence}

We begin by noticing that for $0 < \mu < 1/2$, we can choose a time
$t_0$ such that for $t > t_0$, the state $\vec{m}_t$ of the
quasispecies model satisfies
\begin{equation}
\abs{d^h(\vec{m}_t, \vec{U}) - d^h(\vec{v}_\mu,
  \vec{U})} \leq \delta/2,\label{eq:14}
\end{equation}
where $\vec{v}$ is the unique stationary vector of the quasispecies
model.  Now fix $t > t_0$. Since the distance function $d^h$ is
continuous, Theorem \ref{thm:convg-claim} allows us to choose an
$N_\delta$ such that for $N > N_\delta$,
\begin{equation}
  \label{eq:15}
  \abs{d^h(\vec{m}_t, \vec{U}) - d^h(\E{\vec{D}_{t,\mu,N}},
  \vec{U})} \leq \delta/2.
\end{equation}
Combining equations (\ref{eq:14}) and (\ref{eq:15}), we get 
$$
\abs{d^h(\vec{v}, \vec{U}) - d^h(\E{\vec{D}_{t,\mu,N}},
  \vec{U})} \leq \delta.
$$
Thus, when $\mu < \mu_c^h(\epsilon)$, we have
\begin{displaymath}
d^h\inp{\E{\vec{D}_{t,\mu,N}}, \vec{U}} \geq
    \epsilon - \delta,\text{ when $\mu <
      \mu_c^{h}(\epsilon)$, and},
  \end{displaymath}
and when $\mu = \mu_c^h(\epsilon)$,
\begin{displaymath}
  d^h\inp{\E{\vec{D}_{t,\mu,N}}, \vec{U}} \leq
  \epsilon + \delta \text{ when $\mu =
    \mu_c^{h}(\epsilon)$.}
\end{displaymath}

\subsection{Proof of Theorem \ref{thm:mixing-main}}
\label{sec:proof-theor-refthm:m}

Before proving Theorem \ref{thm:mixing-main}, we first set up some
notation for the coupling argument.

\begin{definition} A coupling of two probability distributions
  $\mathcal{D}_{1}$ and $\mathcal{D}_{2}$ is a pair of random
  variables $(X, Y)$ defined on a single probability space such that
  the marginal distribution of $X$ is $\mathcal{D}_{1}$ and the
  marginal distribution of $Y$ is $\mathcal{D}_{2}.$ \end{definition}

\begin{definition} A coupling of Markov chains with transition matrix
$\mathcal{M}$ is defined to be a process $(X_{t},
Y_{t})_{t=0}^{{\infty}}$ with the property that both $(X_{t})$ and
$(Y_{t})$ are Markov chains with transition matrix $\mathcal{M},$
although the two chains may possibly have different starting
distributions.  
\end{definition}
 
Any coupling of Markov chains with transition matrix $\mathcal{M}$ can
be modified so that the two chains stay together at all times after
their first simultaneous visit to a single state: more precisely, such
that if $ X_{s} = Y_{s},$ then $X_{t} = Y_{t}$ for $t \geq s.$ In the
following, we only consider such couplings.  The following well known
facts are the basis of coupling based methods for proving mixing time
bounds.

\begin{theorem} Let $\{(X_{t}, Y_{t})\}$ be a coupling satisfying the
  definition above for which $X_{0} = \alpha$ and $Y_{0}= \beta.$ Let
  $\tau_{\rm couple}$ be the first time the chains meet: $ \tau_{\rm
    couple} \defeq \min \{t : X_{t} = Y_{t}\}.$ Then
$$ \|\mathcal{M}^{t}(\alpha, \cdot) - \mathcal{M}^{t}(\beta, \cdot)\|_{{TV}} \leq \Pr{\tau_{\rm couple} > t | X_{0}=\alpha,Y_{0}=\beta}.$$
 \end{theorem}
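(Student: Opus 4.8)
The plan is to establish this as the standard coupling inequality, using nothing beyond the definition of total variation distance (stated earlier in this section) and the defining marginal property of a coupling of Markov chains. All probabilities and expectations below are taken in the joint probability space on which the coupling $(X_t, Y_t)$ is defined, with the initial conditions $X_0 = \alpha$, $Y_0 = \beta$ fixed, so that by the coupling property $X_t$ is distributed as $\mathcal{M}^t(\alpha, \cdot)$ and $Y_t$ as $\mathcal{M}^t(\beta, \cdot)$.

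First I would fix an arbitrary event $A \subseteq \Omega$ and rewrite the difference of the two marginals as the expectation of a difference of indicators:
$$ \mathcal{M}^t(\alpha, A) - \mathcal{M}^t(\beta, A) = \Pr{X_t \in A} - \Pr{Y_t \in A} = \E{\ones\insq{X_t \in A} - \ones\insq{Y_t \in A}}. $$
The integrand is bounded in absolute value by $1$ and, crucially, vanishes on the event $\inb{X_t = Y_t}$. Next I would invoke the standing convention (stated immediately before the theorem) that we consider only couplings that are sticky: once $X_s = Y_s$ the two chains remain equal for all later times, so that $\inb{\tau_{\rm couple} \le t} \subseteq \inb{X_t = Y_t}$, equivalently $\inb{X_t \ne Y_t} \subseteq \inb{\tau_{\rm couple} > t}$. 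Since the integrand is supported on $\inb{X_t \ne Y_t}$ and bounded there, this yields
$$ \mathcal{M}^t(\alpha, A) - \mathcal{M}^t(\beta, A) \le \Pr{X_t \ne Y_t} \le \Pr{\tau_{\rm couple} > t}. $$
Applying the same inequality to the complementary event $\Omega \setminus A$ (or using symmetry in $\alpha, \beta$) gives the matching lower bound, so $\abs{\mathcal{M}^t(\alpha, A) - \mathcal{M}^t(\beta, A)} \le \Pr{\tau_{\rm couple} > t}$ for every $A$. Finally I would take the maximum over all $A \subseteq \Omega$ and appeal directly to the definition of $\norm[TV]{\cdot}$ to conclude.

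There is no genuine obstacle here; the argument is elementary measure-theoretic bookkeeping and the bound holds for every coupling simultaneously. The one place where the hypotheses are actually used — and hence the only step that requires care — is the inclusion $\inb{X_t \ne Y_t} \subseteq \inb{\tau_{\rm couple} > t}$, which relies precisely on the stickiness convention adopted for the couplings under consideration. Without that convention the coordinates could meet and subsequently separate, in which case $\tau_{\rm couple} \le t$ would no longer force $X_t = Y_t$ and the right-hand side would fail to dominate the indicator discrepancy.
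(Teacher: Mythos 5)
Your proof is correct. The paper itself states this result without proof, introducing it only as one of the ``well known facts'' underlying coupling-based mixing-time bounds, so there is no argument in the paper to compare against; your write-up is the standard proof of the coupling inequality, and you correctly isolate the one point where a hypothesis is genuinely needed, namely that the stickiness convention adopted just before the theorem gives the inclusion $\inb{X_t \neq Y_t} \subseteq \inb{\tau_{\rm couple} > t}$, without which the stated bound in terms of $\tau_{\rm couple}$ (as opposed to $\Pr{X_t \neq Y_t}$) could fail.
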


 \begin{lemma}[Coupling Lemma] Let $X,Y$ be random variables defined
   on a finite sample space $\Omega$ and let $\mathcal{C}$ be any
   coupling of $C$ and $Y.$ Then
$$ \min _{\mathcal{C}} \Psymb_{\mathcal{C}}[X \neq Y] = \|X-Y\|_{TV}.$$

\end{lemma}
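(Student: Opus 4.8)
The plan is to prove the stated identity by establishing the two matching inequalities $\min_{\mathcal{C}}\Psymb_{\mathcal{C}}[X \neq Y] \geq \|X-Y\|_{TV}$ and $\min_{\mathcal{C}}\Psymb_{\mathcal{C}}[X \neq Y] \leq \|X-Y\|_{TV}$ separately. Throughout, let $\mathcal{D}_1$ and $\mathcal{D}_2$ denote the distributions of $X$ and $Y$ on $\Omega$, so that by the definition of total variation distance, $\|X-Y\|_{TV} = \max_{A \subseteq \Omega}|\mathcal{D}_1(A) - \mathcal{D}_2(A)|$. First I would prove the lower bound, which holds for every coupling simultaneously. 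Fix an arbitrary coupling $\mathcal{C} = (X,Y)$ and an arbitrary event $A \subseteq \Omega$. Writing $\mathcal{D}_1(A) - \mathcal{D}_2(A) = \Psymb[X \in A] - \Psymb[Y \in A]$ and splitting each probability according to whether the other coordinate lands in $A$, the contributions with both coordinates inside $A$ cancel, leaving $\mathcal{D}_1(A) - \mathcal{D}_2(A) = \Psymb_{\mathcal{C}}[X \in A,\, Y \notin A] - \Psymb_{\mathcal{C}}[X \notin A,\, Y \in A] \leq \Psymb_{\mathcal{C}}[X \in A,\, Y \notin A] \leq \Psymb_{\mathcal{C}}[X \neq Y]$. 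Taking the maximum over $A$ gives $\|X-Y\|_{TV} \leq \Psymb_{\mathcal{C}}[X \neq Y]$, and since $\mathcal{C}$ was arbitrary, this bounds the minimum from below.

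The harder direction, which I expect to be the main obstacle, is exhibiting a \emph{single} coupling attaining equality, since one must write down the joint law explicitly and then check that both of its marginals are exactly $\mathcal{D}_1$ and $\mathcal{D}_2$. I would first record the pointwise identity $\|X-Y\|_{TV} = 1 - \sum_{\omega}\min(\mathcal{D}_1(\omega), \mathcal{D}_2(\omega))$, obtained by taking the maximizing event to be $A = \{\omega : \mathcal{D}_1(\omega) \geq \mathcal{D}_2(\omega)\}$ and using that the positive and negative parts of $\mathcal{D}_1 - \mathcal{D}_2$ carry equal total mass. Set $\theta \defeq \sum_{\omega}\min(\mathcal{D}_1(\omega), \mathcal{D}_2(\omega))$, so that $\theta = 1 - \|X-Y\|_{TV}$. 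The optimal coupling is then defined by a two-stage sampling procedure: with probability $\theta$ draw a common value $\omega$ from the mass function proportional to $\min(\mathcal{D}_1, \mathcal{D}_2)$ and set $X = Y = \omega$; with the remaining probability $1 - \theta$ draw $X$ and $Y$ \emph{independently} from the normalized residual distributions proportional to $\mathcal{D}_1 - \min(\mathcal{D}_1, \mathcal{D}_2)$ and $\mathcal{D}_2 - \min(\mathcal{D}_1, \mathcal{D}_2)$, respectively. The point requiring care is confirming that both residual masses are nonnegative (immediate from the definition of the minimum) and that each sums to $1 - \theta$, so that the second-stage draws are legitimate probability distributions.

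Finally I would verify that this construction meets the bound. For the marginal of $X$, the first stage contributes $\min(\mathcal{D}_1(\omega), \mathcal{D}_2(\omega))$ to the mass at $\omega$ and the second stage contributes $\mathcal{D}_1(\omega) - \min(\mathcal{D}_1(\omega), \mathcal{D}_2(\omega))$, which sum to $\mathcal{D}_1(\omega)$; the identical computation works for $Y$, so both marginals are correct and $\mathcal{C}$ is a valid coupling. To evaluate $\Psymb_{\mathcal{C}}[X \neq Y]$, observe that the residual distributions of $X$ and $Y$ are supported on the disjoint sets $\{\omega : \mathcal{D}_1(\omega) > \mathcal{D}_2(\omega)\}$ and $\{\omega : \mathcal{D}_2(\omega) > \mathcal{D}_1(\omega)\}$; hence in the second stage $X \neq Y$ with certainty, while in the first stage $X = Y$ with certainty. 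Therefore $\Psymb_{\mathcal{C}}[X \neq Y] = 1 - \theta = \|X-Y\|_{TV}$, which matches the lower bound from the first paragraph. Combining the two inequalities yields $\min_{\mathcal{C}}\Psymb_{\mathcal{C}}[X \neq Y] = \|X-Y\|_{TV}$, with the explicit coupling above attaining the minimum, completing the proof.
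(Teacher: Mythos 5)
Your proof is correct and complete: the lower bound via an arbitrary event $A$ and the explicit optimal coupling (common draw from the overlap $\min(\mathcal{D}_1,\mathcal{D}_2)$ with probability $\theta$, independent draws from the disjointly supported residuals otherwise) is the canonical argument, and you verify the marginals and the disjoint-support claim that makes the construction tight. The paper itself states this Coupling Lemma as a well-known fact and gives no proof, so there is nothing to compare against; your write-up supplies exactly the standard argument the authors implicitly rely on.
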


\begin{definition} Let $d: \Omega \times \Omega \longrightarrow
\mathbb{R}_{\geq 0}$ be a distance metric on the state space
$\Omega$ of the two Markov chains
$\{X_{t}\}_{t}$ and $\{Y_{t}\}_{t}.$ Suppose $\mathcal{C}$ is a
coupling such that for every $t \geq 0,$
$$ \E{d(X_{t+1},Y_{t+1})}  \leq \theta \cdot \E{d(X_{t},Y_{t})} $$
for every starting distributions $X_{0},Y_{0},$ then we call
$\mathcal{C}$ a $(\theta,d)$ coupling. Note that this implies that
$$ \E{d(X_{t},Y_{t})} \leq \theta^{t} \cdot D,$$
where $D=\max_{\sigma,\tau \in \Omega} d(\sigma,\tau).$

\end{definition}

\noindent Fix a integer valued distance function $d$. Let let
$\{X_{t}\}_{t}$ be a realization of the Markov chain starting from
$X_{0}$ and $Y_{t}$ be another realization starting from the
stationary distribution $\pi$ of the Markov chain.  If $\mathcal{C}$
is a $(\theta,d)$ coupling, then it follows from the Coupling Lemma
that $$ \|X_{t}-\pi \|_{TV}\stackrel{\rm Coupling}{\leq} \Pr{X_{t}\neq
  Y_{t}} \stackrel{d \; {\rm integral}}{=} \Pr{d(X_{t},Y_{t})} \geq 1]
\stackrel{{\rm Markov}}{\leq} \E{d(X_{t},Y_{t})} \leq \theta^{t}\cdot
D.$$
This implies that the mixing time $\tau_{\rm mix}(\epsilon) =
O\inp{\frac{\log~\inp{D/\epsilon}}{\log~(1/\theta)}}$ when $\theta < 1$.

\subsubsection{A Coupling for the RSM Process}
The coupling $\mathcal{C}$ we will construct will have two independent
parts $\mathcal{C}=(\mathcal{C}_{S}, \mathcal{C}_{M}),$
$\mathcal{C}_{S}$ for the Reproduce-Select phase and $\mathcal{C}_{M}$
for the mutation part.  We first describe the somewhat simpler
mutation coupling.

\paragraph{\bf Mutation Coupling $\mathcal{C}_{M}$.} Let
$X_{t}=\{\sigma_{1},\ldots,\sigma_{N}\}$ and $Y_{t}=\{\nu_{1},\ldots,
\nu_{N}\}.$ Let $M$ denote an arbitrary permutation on $[N]$ such that
$M(i)$ denotes the image of $i \in [N]$.  We define the distance
between the states as
$$d_{\rm match}(X_{t},Y_{t} )\defeq \min_{M} \left\{  \sum_{i=1}^{N} \ham{\sigma_{i},\nu_{M(i)}} \right\}.$$
The mutation coupling follows the following algorithm, with $M$ set to
be the permutation which achieves the minimum in the above definition.
\begin{enumerate}
\item For $i=1,\ldots, N$
\begin{enumerate}
\item For $j=1,\ldots, L$
\begin{enumerate}
\item Choose independently and uniformly at random $r_{j}$ from
$[0,1].$
\item If $\sigma_{i}(j) = \nu_{M(i)} (j)$
\begin{enumerate}
\item Flip $\sigma_{i}(j)$ and $\nu_{M(i)} (j)$ if and only if $r_{j} \geq
1-\mu.$
\end{enumerate}
\item Else
\begin{enumerate}
\item Let $r_{j,1}\defeq r_{j}$ and $r_{j,2}\defeq 1-r_{j}.$
\item Flip $\sigma_{i}(j)$ if and only if $r_{j,1} \geq 1-\mu.$
\item Flip $\nu_{M(i)}(j)$ if and only if $r_{j,2} \geq 1-\mu.$
\end{enumerate}
\end{enumerate}
\end{enumerate}
\end{enumerate}
\begin{lemma} For $\mu \leq 1/2,$ $\mathcal{C}_{M}$ is a
$((1-2\mu),d_{\rm match})$ coupling.
\end{lemma}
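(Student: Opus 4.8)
The plan is to establish two things about $\mathcal{C}_M$: that it is a genuine coupling (each chain's marginal reproduces the RSM mutation step), and that it contracts $d_{\rm match}$ by a factor $(1-2\mu)$ in conditional expectation, which then yields the $((1-2\mu), d_{\rm match})$ contraction after taking expectations. First I would check validity. Since distinct bit positions and distinct matched genome pairs draw independent randomness, it suffices to check the marginal flip probability at a single position $j$ of a single matched pair $\inp{\sigma_i, \nu_{M(i)}}$. When $\sigma_i(j) = \nu_{M(i)}(j)$, both bits flip exactly when $r_j \geq 1-\mu$, an event of probability $\mu$. When $\sigma_i(j) \neq \nu_{M(i)}(j)$, the bit $\sigma_i(j)$ flips when $r_{j,1} = r_j \geq 1-\mu$ and $\nu_{M(i)}(j)$ flips when $r_{j,2} = 1-r_j \geq 1-\mu$, each again of probability $\mu$. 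Hence in either case each individual bit flips independently with probability $\mu$, so both marginals agree with the RSM mutation step and $\mathcal{C}_M$ is a valid coupling.

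Next I would analyze the expected one-step change in distance, working with the fixed optimal matching $M$ that achieves $\dm{X_t, Y_t}$, and tracking each bit position of each matched pair separately. At a position where the two bits agree, the coupling flips both or neither, so they remain equal and the contribution to the Hamming distance stays $0$. At a position where the bits differ, the two flip events $\inb{r_j \geq 1-\mu}$ and $\inb{r_j \leq \mu}$ are disjoint precisely because $\mu \leq 1/2$: with probability $\mu$ only $\sigma_i(j)$ flips, with probability $\mu$ only $\nu_{M(i)}(j)$ flips, in both cases equalizing the two bits, and with the remaining probability $1-2\mu$ neither flips and the mismatch persists. Thus each mismatched bit contributes expected Hamming distance exactly $1-2\mu$ after the step.

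Summing over all positions and all matched pairs, and using that under $M$ the number of mismatched bits is exactly $\sum_{i}\ham{\sigma_i, \nu_{M(i)}} = \dm{X_t, Y_t}$, I get that the expectation of $\sum_i \ham{\sigma_i', \nu_{M(i)}'}$ (primes denoting the mutated genomes) equals $(1-2\mu)\dm{X_t, Y_t}$. Finally, since $\dm{X_{t+1}, Y_{t+1}}$ is the minimum of the matched Hamming-distance sum over \emph{all} permutations, it is bounded above by the sum for this particular permutation $M$, giving
$$ \E{\dm{X_{t+1}, Y_{t+1}} \mid (X_t, Y_t)} \leq \E{\sum_{i=1}^N \ham{\sigma_i', \nu_{M(i)}'} \mid (X_t, Y_t)} = (1-2\mu)\dm{X_t, Y_t}, $$
which is exactly the claimed contraction.

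The only genuinely delicate points are these last two: one must not assume that the optimal matching is preserved under mutation, but only invoke $M$ in the upper-bound direction (the minimizing permutation for $X_{t+1}, Y_{t+1}$ can only do better); and the disjointness of the two flip events, which is where the hypothesis $\mu \leq 1/2$ is essential, is what forces the per-mismatch expectation to be exactly $1-2\mu$ rather than something larger. Everything else is a routine bit-by-bit case analysis.
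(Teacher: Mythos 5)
Your proof is correct and follows essentially the same route as the paper's: verify the marginals via the $r \mapsto 1-r$ symmetry, reduce to a per-bit computation under the optimal matching, use disjointness of the two flip events (where $\mu \le 1/2$ enters) to get expected contribution $1-2\mu$ per mismatched bit, and then bound $\dm{X_{t+1},Y_{t+1}}$ by the sum under that fixed matching. Your explicit remark that the optimal matching is only invoked in the upper-bound direction is exactly the step the paper also relies on.
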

\begin{proof} To prove that $\mathcal{C}_{M}$ is a valid coupling one just
  needs to note that if $r$ is distributed uniformly at random in
  $[0,1]$ then so is $1-r.$ Hence, for $i=1,\ldots,N$ and
  $j=1,\ldots,L,$ each bit $\sigma_{i}(j)$ (respectively,
  $\nu_{i}(j))$) flips with probability exactly $\mu.$ Further, these
  flips are independent by construction.

  To prove that $\mathcal{C}_{M}$ is a $((1-2\mu),d_{\rm match})$
  coupling, let $X_{t},Y_{t}$ be the states of the two Markov chains
  with distance $d \defeq d_{\rm match}(X_{t},Y_{t}).$ By definition
  of $d_{\rm match},$ there is some matching $M^{\star}$ which
  achieves $d.$ Without loss of generality assume that $M^{\star}$ is
  identity, i.e., $M^{\star}(i)=i,$ for all $1 \leq i \leq N.$ Hence,
  $\sum_{i=1}^{N} d_{H} (\sigma_{i},\nu_{i})=d.$ Let $X_{t+1} \defeq
  (\sigma_{1}^{t+1},\ldots,\sigma_{N}^{t+1})$ and $Y_{t+1}\defeq
  (\nu_{1}^{t+1},\ldots,\nu_{N}^{t+1})$ be the output of
  $\mathcal{C}_{M}$ on input $(\sigma_{1},\ldots,\sigma_{N})$ and
  $(\nu_{1},\ldots,\nu_{N})$ respectively.  We will calculate
  $\E{\sum_{i=1}^{N} d_{H} (\sigma_{i}^{t+1}, \nu_{i}^{t+1})}$ and
  show that it is exactly $(1-2\mu) \cdot d.$ Hence, $d_{\rm match}
  (X_{t+1},Y_{t+1}) \leq (1-2 \mu) \cdot d,$ as $d_{\rm match}$ is
  defined as the minimum over all possible matchings.  By linearity of
  expectation it is sufficient to show that for all $i=1,\ldots,N,$
$$ \E{d_{H}(\sigma_{i}^{t+1},\nu_{i}(t+1)}= (1-2\mu) \cdot d_{H}(\sigma_{i},\nu_{i}).$$
Again by linearity of expectation it is sufficient to show the
following:
$$ \Ex{r_{j}}{d_{H}(\sigma_{i}^{t+1}(j),\nu_{i}^{t+1}(j))} = (1-2\mu)\cdot d_{H}(\sigma_{i}(j),\nu_{i}(j)).$$
This follows from observing that if $\sigma_{i}(j)=\nu_{i}(j),$ then
$\Pr{\sigma_{i}^{t+1}(j)=\nu_{i}^{t+1}(j)} = 1,$ while if
$\sigma_{i}(j) \neq \nu_{i}(j),$ then, as $\mu \leq 1/2,$
$\Pr{\sigma_{i}^{t+1}(j)=\nu_{i}^{t+1}(j)} = 2\mu.$ Hence,
$\Pr{\sigma_{i}^{t+1}(j)\neq \nu_{i}^{t+1}(j)} = 1-2\mu.$ This
completes the proof.

\end{proof}

\paragraph{\bf Coupling $\mathcal{C}_S$ for the Selection Process.}
We again consider two states $X_t = \inb{\sigma_1, \sigma_2, \ldots,
  \sigma_N}$ and $Y_t = \inb{\nu_1, \nu_2,\ldots, \nu_N}$. Our
distance function is still $d_{\rm match}$ defined above.  We first
note the $\dm{\cdot, \cdot}$ is actually a metric.
\begin{lemma}
  $\dm{\cdot, \cdot}$ is a metric.
\end{lemma}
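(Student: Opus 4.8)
The plan is to verify the four metric axioms for $\dm{\cdot,\cdot}$, regarding a state as a multiset of $N$ genomes. First I would note that the value $\dm{X,Y}$ does not depend on the orderings chosen to list the two multisets: reordering the $\sigma_i$ or the $\nu_j$ merely relabels the permutations $M$ over which we minimize, leaving the minimum unchanged, so $\dm{\cdot,\cdot}$ is well defined on states. Granting this, non-negativity is immediate since every Hamming distance $\ham{\sigma_i,\nu_{M(i)}}$ is non-negative. Symmetry follows from the symmetry of $d_H$ together with the group structure of the symmetric group: if $M$ attains the minimum for $\dm{X,Y}$, then by symmetry of $d_H$ and reindexing via $M^{-1}$ we have $\dm{X,Y} = \sum_i \ham{\sigma_i,\nu_{M(i)}} = \sum_j \ham{\nu_j,\sigma_{M^{-1}(j)}} \ge \dm{Y,X}$, and the reverse inequality follows identically.

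For the identity of indiscernibles, if $X = Y$ as multisets then some permutation $M$ pairs each $\sigma_i$ with an identical $\nu_{M(i)}$, giving $\dm{X,Y} = 0$. Conversely, if $\dm{X,Y} = 0$ then the optimal permutation $M$ forces each summand $\ham{\sigma_i,\nu_{M(i)}}$ to vanish, whence $\sigma_i = \nu_{M(i)}$ for every $i$ and the two multisets coincide.

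The main step, and the only place where real work is needed, is the triangle inequality. Given three states $X = \inb{\sigma_i}$, $Y = \inb{\nu_j}$, $Z = \inb{\omega_k}$, let $M$ attain the minimum in $\dm{X,Y}$ and let $P$ attain the minimum in $\dm{Y,Z}$. The idea is to use the composition $P\circ M$ as a (not necessarily optimal) matching between $X$ and $Z$. Applying the triangle inequality for $d_H$ to each pair gives $\ham{\sigma_i,\omega_{P(M(i))}} \le \ham{\sigma_i,\nu_{M(i)}} + \ham{\nu_{M(i)},\omega_{P(M(i))}}$, and summing over $i$ bounds $\sum_i \ham{\sigma_i,\omega_{P(M(i))}}$ by $\dm{X,Y}$ plus $\sum_i \ham{\nu_{M(i)},\omega_{P(M(i))}}$. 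The key observation is that since $M$ is a bijection of $[N]$, the index $M(i)$ ranges over all of $[N]$ exactly once, so this last sum equals $\sum_j \ham{\nu_j,\omega_{P(j)}} = \dm{Y,Z}$. Finally, because $P\circ M$ is only one feasible matching, $\dm{X,Z} \le \sum_i \ham{\sigma_i,\omega_{P(M(i))}}$, and combining these yields $\dm{X,Z} \le \dm{X,Y} + \dm{Y,Z}$, as required. The subtlety to watch is precisely this reindexing step, which is where the freedom to compose permutations and the bijectivity of $M$ are used; everything else reduces to the corresponding properties of the Hamming metric.
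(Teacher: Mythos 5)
Your proof is correct and follows essentially the same route as the paper: the triangle inequality is obtained by composing the two optimal permutations, applying the triangle inequality for $d_H$ termwise, and reindexing via the bijectivity of the first permutation. You additionally spell out well-definedness on multisets, symmetry, and the identity of indiscernibles, which the paper treats as immediate "by construction."
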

\begin{proof}
  By construction $\dm{X,Y} \geq 0$ with equality if and only if $X =
  Y$. Now consider states $X=\inb{\sigma_i}_{i=1}^N, Y =
  \inb{\tau_i}_{i=1}^N$ and $ Z= \inb{\nu_i}_{i=1}^N$ in the state
  space $\Omega$.  Let $\alpha$ and $\beta$ be permutations of $[N]$
  such that
$$
\dm{X,Y} = \sum_{i=1}^N\ham{\sigma_i, \tau_{\alpha(i)}}\;\;\text{and}\;\;
\dm{Y,Z} = \sum_{i=1}^N\ham{\tau_i, \nu_{\beta(i)}}\;\;
$$
Now we have
\begin{eqnarray}
  \dm{X,Z} & \leq &
  \sum_{i=1}^{N}\ham{\sigma_i,\nu_{\beta(\alpha(i))}}\nonumber\\
  &\leq & \sum_{i=1}^N\inp{\ham{\sigma_i, \tau_{\alpha(i)}} +
    \ham{\tau_{\alpha_i} + \nu_{\beta(\alpha(i))}}}\nonumber\\
  & = & \dm{X,Y} + \dm{Y,Z}.\nonumber
\end{eqnarray}
\end{proof}
We will use the following general path coupling result of Bubley and
Dyer~\citep{bubley97} to define the coupling $\mathcal{C}_S$.
\begin{theorem}[Path Coupling~\citep{bubley97}]
  Consider a Markov chain $M$ on state space $\Omega$ and a distance
  function $d$ on $\Omega$ such that $d'(x,x) = 0$ for all $x \in
  \Omega$.  Consider a connected undirected graph $G$ on $\Omega$ such
  that the length of each edge $\inb{x,y}$, if present in $G$, is
  $d(x,y)$, and let $d'$ be the shortest path metric on $G$.  Suppose
  there exists a coupling $\mathcal{C}$ for $M$ such that for some
  $\alpha < 1$, and all $X_t,Y_t \in \Omega$ which are adjacent in
  $G$,
$$
\Ex{\mathcal{C}}{d(X_{t+1},Y_{t+1})|X_t,Y_t} \leq \alpha d(X_t,Y_t).
$$
If every edge of $G$ is a shortest path under the metric $d'$
described above, then the coupling $\mathcal{C}$ can be extended to a
coupling $\mathcal{C'}$ such that 
$$
\Ex{\mathcal{C'}}{d'(X_{t+1},Y_{t+1}|X_t,Y_t} \leq \alpha d'(X_t,Y_t).
$$
for all $X_t, Y_t \in \Omega$.\label{thm:path-coupling}
\end{theorem}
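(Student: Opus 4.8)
The plan is to establish the contraction in the path metric $d'$ for \emph{every} pair $X_t, Y_t$, not just adjacent ones, by decomposing an arbitrary pair along a geodesic of $G$ and then stitching the per-edge couplings together into a single coupling $\mathcal{C}'$ of one step of $M$. Fix arbitrary states $X_t = x$ and $Y_t = y$ and choose a shortest path $x = z_0, z_1, \ldots, z_r = y$ in $G$ that realizes $d'(x,y)$. Because every edge of $G$ is assumed to be a shortest path, each edge length agrees with the induced path-metric distance between its endpoints, so that $d'(x,y) = \sum_{k=1}^{r} d(z_{k-1}, z_k)$. This additivity along the geodesic is exactly what will let me recombine the individual edge bounds at the end, so I would record it first.

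Next I would invoke the hypothesis on each edge $\inb{z_{k-1}, z_k}$ of this path. The coupling $\mathcal{C}$ supplies a joint law of a single step taken from the two endpoints, i.e.\ a pair $(z_{k-1}', z_k')$ whose marginals are $M(z_{k-1}, \cdot)$ and $M(z_k, \cdot)$ and which satisfies the contraction $\E{d'(z_{k-1}', z_k') \mid z_{k-1}, z_k} \leq \alpha\, d(z_{k-1}, z_k)$; here I read the hypothesis in the path metric, using that on an edge $d$ and $d'$ coincide (in the paper's application $d = d_{\rm match} = d'$ throughout, so this reading is immaterial). The crucial structural observation is that the coupling for the edge $\inb{z_{k-1}, z_k}$ and the coupling for the next edge $\inb{z_k, z_{k+1}}$ both induce the \emph{same} marginal law $M(z_k, \cdot)$ on the image $z_k'$ of the shared vertex.

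The hard part will be assembling these $r$ separate pairwise couplings into one joint distribution on the whole tuple $(z_0', z_1', \ldots, z_r')$ whose consecutive two-coordinate marginals agree with the prescribed edge couplings. This is precisely the situation handled by the standard gluing lemma for couplings sharing a marginal: since consecutive edge-couplings agree on the law of the shared middle coordinate, they may be composed one vertex at a time (conditioning on the shared coordinate and multiplying the resulting conditional laws) to produce a single joint law on $(z_0', \ldots, z_r')$ with the correct consecutive marginals. I would then \emph{define} $\mathcal{C}'$ on the pair $(x,y)$ by setting $X_{t+1} = z_0'$ and $Y_{t+1} = z_r'$; by construction the marginal of $X_{t+1}$ is $M(x, \cdot)$ and that of $Y_{t+1}$ is $M(y, \cdot)$, so $\mathcal{C}'$ is a genuine coupling of one step of $M$.

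Finally I would close the argument with the triangle inequality for the path metric $d'$. Along the glued tuple one has $d'(X_{t+1}, Y_{t+1}) = d'(z_0', z_r') \leq \sum_{k=1}^{r} d'(z_{k-1}', z_k')$; taking expectations under $\mathcal{C}'$, using linearity and the per-edge contraction from the previous step, gives $\Ex{\mathcal{C}'}{d'(X_{t+1}, Y_{t+1}) \mid x, y} \leq \alpha \sum_{k=1}^{r} d(z_{k-1}, z_k) = \alpha\, d'(x,y)$, where the last equality is the geodesic additivity recorded at the start. Since $x, y$ were arbitrary, this establishes the claimed contraction for $\mathcal{C}'$ on all of $\Omega \times \Omega$. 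The only genuinely delicate point is the gluing step; the remainder is bookkeeping resting on the geodesic decomposition and the triangle inequality for $d'$.
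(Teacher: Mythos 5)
Your argument is correct, but note that the paper offers no proof of this statement at all: it is imported verbatim by citation from Bubley and Dyer, so there is nothing internal to compare against. What you have written is essentially the original Bubley--Dyer argument: decompose an arbitrary pair along a geodesic realizing $d'(x,y)$ as the sum of its edge lengths, apply the hypothesized edge coupling to each consecutive pair, glue the $r$ two-coordinate couplings into a single joint law on $(z_0',\ldots,z_r')$ using the fact that consecutive couplings share the marginal $M(z_k,\cdot)$ (on a finite state space this composition is just conditioning on the shared coordinate and multiplying conditional laws, so no measure-theoretic care is needed), and finish with the triangle inequality for $d'$ plus linearity of expectation. You also correctly flag the one genuine ambiguity in the statement as written -- the left-hand side of the edge hypothesis reads $d(X_{t+1},Y_{t+1})$ even though the images need not be adjacent, whereas the conclusion is in $d'$ -- and you correctly observe that this is immaterial in the paper's application, where the preceding lemma shows $d_{\rm match}$ equals its own path metric. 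The only step deserving of the label ``delicate'' is indeed the gluing, and you have identified and resolved it properly.
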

We will first show now that the path metric resulting from an
application of the above theorem to $d_{\rm match}$ is $d_{\rm match}$
itself, since this is crucial for composing the $\mathcal{C}_S$
coupling with the coupling $\mathcal{C}_M$ described above.

\begin{lemma}
  Consider the state space $\Omega$ of the RSM Markov chain
  $\mathcal{M}$.  Let $G$ be the graph on $\Omega$ in which two
  vertices $X$ and $Y$ are adjacent if and only if $\dm{X,Y} = 1$.
  Then the path metric $d'$ constructed in Theorem
  \ref{thm:path-coupling} is identical with $d$, and each edge in $G$
  is a shortest path.
\end{lemma}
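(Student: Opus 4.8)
I need to show two things about the metric $d_{\rm match}$ and the graph $G$ (where $X \sim Y$ iff $\dm{X,Y}=1$): first, that every edge of $G$ is a shortest path under the induced path metric $d'$, and second, that $d'$ coincides with $d_{\rm match}$ on all of $\Omega$. Since $d'$ is by construction the shortest-path metric on $G$, these two facts amount to showing that $d_{\rm match}$ is itself a graph metric realized by paths of unit-distance steps in $G$.

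**The plan.** The key structural fact I would prove is that $d_{\rm match}$ is \emph{path-connected in unit steps}: for any two states $X,Y \in \Omega$ with $\dm{X,Y} = k$, there is a sequence of states $X = Z_0, Z_1, \ldots, Z_k = Y$ with $\dm{Z_{i}, Z_{i+1}} = 1$ and $\dm{Z_0, Z_i} = i$ for each $i$. Given this, the path metric $d'(X,Y)$ is at most $k = \dm{X,Y}$; and the triangle inequality (already established in the preceding lemma, which shows $\dm{\cdot,\cdot}$ is a metric) gives $d'(X,Y) \geq \dm{X,Y}$ along any path, since each edge $\inb{Z_i,Z_{i+1}}$ of $G$ has length exactly $\dm{Z_i,Z_{i+1}}$. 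Combining, $d' = d_{\rm match}$, and in particular every unit edge is trivially a shortest path.

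\textbf{Constructing the unit-step path.} This is the heart of the argument. Fix the optimal matching $M$ realizing $\dm{X,Y} = \sum_{i=1}^N \ham{\sigma_i, \nu_{M(i)}}$. Since this sum is $k > 0$, there is some index $i$ with $\ham{\sigma_i, \nu_{M(i)}} \geq 1$, so $\sigma_i$ and $\nu_{M(i)}$ differ in at least one bit position $j$. I would form $Z_1$ from $X$ by flipping bit $j$ of the single genome $\sigma_i$, moving it one Hamming step closer to $\nu_{M(i)}$ while leaving all other genomes untouched. The crucial verification is that $\dm{X, Z_1} = 1$ and $\dm{Z_1, Y} = k-1$: the first holds because the two states differ in exactly one genome by one bit; the second holds because the \emph{same} matching $M$ now witnesses a total Hamming distance of $k-1$, so $\dm{Z_1,Y} \leq k-1$, while the triangle inequality $\dm{Z_1,Y} \geq \dm{X,Y} - \dm{X,Z_1} = k-1$ forces equality. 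Iterating this single-bit-flip step $k$ times drives the distance to $Y$ down to zero and produces the desired path with every consecutive pair at distance $1$ and cumulative distances $\dm{X,Z_i}=i$.

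\textbf{The main obstacle.} The subtle point I expect to require care is confirming that $\dm{Z_1, Y}$ drops by \emph{exactly} one rather than merely by at most one — in other words, that flipping a bit along the optimal matching cannot also create a shortcut that lets a \emph{different} matching do better by more than one. This is exactly where the triangle inequality for $d_{\rm match}$ (the metric property proved in the previous lemma) does the work, pinning $\dm{Z_1,Y}$ from below. Once the single-flip reduction is established with both inequalities, the rest is a clean induction, and the identification $d' = d_{\rm match}$ together with the shortest-path property of each edge follows immediately. This lemma then licenses composing $\mathcal{C}_S$ with $\mathcal{C}_M$ via the path coupling theorem, since the induced metric is the same $d_{\rm match}$ used in the mutation coupling.
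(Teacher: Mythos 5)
Your proposal is correct and follows essentially the same route as the paper: both arguments get $d' \geq d_{\rm match}$ from the metric property and then close the gap by exhibiting, via a single bit flip along the optimal matching, an intermediate state at unit distance from one endpoint and at distance $k-1$ from the other, iterating (the paper phrases this as an induction on $k$ and modifies a genome of $Y$ rather than of $X$, which is immaterial). Your extra observation that the distance drops by \emph{exactly} one is a harmless strengthening; the paper only needs, and only proves, the inequality $\dm{Z,Y}\leq k-1$.
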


\begin{proof}
  For brevity we will denote $\dm{\cdot,\cdot}$ by $d$.  Notice that
  since each edge is of length $1$, it is also a shortest path by
  construction.  Since $d$ is a metric, we also have $d'(X,Y) \geq
  d(X,Y)$ for all $X,Y \in \Omega$.  We now proceed by induction to show
  that $d(X,Y) \geq d'(X,Y)$ for all $X,Y \in \Omega$.  Notice that
  when $d(X,Y) = 1$, this is true by definition of $d'$.  Now, suppose
  that $d(X, Y) \leq k - 1$ implies $d'(X,Y) \leq d(X,Y)$, and
  consider the case $d(X,Y) = k > 1$.  We claim that there exists a
  $Z$ such that $d(X, Z) \leq k-1$ and $d(Y,Z) \leq 1$.  The existence
  of such a $Z$ implies using the induction hypothesis that
\begin{eqnarray}
  d'(X,Y) &\leq& d'(X,Z) + d'(Z,Y)\nonumber\\
  &\leq& d(X,Z) + d(Z,Y) \leq k = d(X,Y).\nonumber 
\end{eqnarray}

It only remains to construct such a $Z$.  Let $X =
\inb{\sigma_i}_{i=1}^N$ and $Y = \inb{\nu_i}_{i=1}^N$.  Without loss
of generality, we may assume that $d(X,Y) = \sum_{i=1}^N\ham{\sigma_i,
  \nu_i}$.  Since $d(X, Y) = k > 1$, there exists a $j$ such that
$\ham{\sigma_j, \nu_j} \geq 1$. Let $s$ be a string obtained by
flipping a single bit of $\nu_j$ such that $\ham{\sigma_j, s} =
\ham{\sigma_j, \nu_j} - 1$.  Now let $Z = \inb{\tau_i}_{i=1}^N$, where
$\tau_i = \nu_i$ for $i\neq j$ and $\tau_j = s$.  By construction,
$d(Y,Z) \leq 1$ and $d(X, Z) \leq k-1$.

\end{proof}

\paragraph{\bf Claims about the Coupling.}
\label{sec:claims-about-coupl}
Suppose we find a coupling $C$ for the selection phase such that when
$\dm{X_t, Y_t} = 1$, then the intermediate states $I(X_t)$ and
$I(Y_t)$ satisfy $$\E{\dm{I(X_t), I(Y_t)}| X_t, Y_t} \leq \alpha,$$
then using Theorem \ref{thm:path-coupling} and the coupling for the
mutation phase described above, we get
$$\E{\dm{X_{t+1},Y_{t+1}|X_t,Y_t}} \leq \alpha(1-2\mu)\dm{X_t,Y_t}.$$
This will give us fast mixing as long as $\alpha(1-2\mu) < 1$.

We now describe such a coupling for the selection process, for general
$X_t$ and $Y_t$, which we will analyze only in the simple but
sufficient case when $\dm{X_t, Y_t} = 1$.  Suppose that $X_t$ and
$Y_t$ contain, respectively, $n_\sigma^x$ and $n_\sigma^y$ genomes of
type $\sigma$.  After reproduction, the number of genomes of type
$\sigma$ in the two chains is $a_\sigma n_\sigma^x$ and $a_\sigma
n_\sigma^y$ respectively.  Let the total number of genomes be $M_x =
\sum_\sigma a_\sigma n^x_\sigma$ and $M_y = \sum_\sigma a_\sigma
n^y_\sigma$ respectively.  Without loss of generality let us assume
that $M_x \geq M_y$, and set $M = M_x$.

We now construct a bag of $M$ balls as follows.  For each $\sigma$,
the bag has exactly $a_\sigma\min\inp{n^x_\sigma, n^y_\sigma}$ balls
with label $(\sigma, (x,y))$.  If $n^x_\sigma \geq n^y_\sigma$, then
the bag has exactly $a_\sigma(n^x_\sigma - n^y_\sigma)$ balls with
label $(\sigma, x)$, otherwise it has exactly $a_\sigma(n^y_\sigma -
n^x_\sigma)$ balls with label $(\sigma, y)$.  Thus the total number of
balls in the bag is $M$.

We now take a random permutation of the $M$ balls, and take the
intermediate state $I_X$ (respectively, $I_Y$) to be the multiset of
genomes given by the first $N$ balls carrying the label $(x,y)$ or $x$
(respectively, $(x,y)$ or $y$).  Notice that a ball carrying a label
$(x,y)$ can contribute a genome to both $I_X$ and $I_Y$.

\begin{claim}
  The above coupling is a valid Markovian coupling for the selection
  phase of the RSM chain.
\end{claim}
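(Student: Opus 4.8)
The plan is to verify that the construction defines a genuine coupling, i.e., that the marginal law of $I_X$ given $X_t$ coincides with the RSM selection kernel applied to $X_t$, and likewise for $I_Y$ given $Y_t$. Recall that the selection kernel on a state with $n_\sigma$ genomes of type $\sigma$ produces, after reproduction, $a_\sigma n_\sigma$ genomes of each type, and then returns a uniform sample of size $N$ drawn without replacement from this multiset. So I must show that each of $I_X$ and $I_Y$ has exactly this law.

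First I would handle the $X$-marginal by a ball-counting argument. Fix a type $\sigma$ and count the balls carrying the label $(x,y)$ or $x$: if $n_\sigma^x \geq n_\sigma^y$ these number $a_\sigma n_\sigma^y + a_\sigma(n_\sigma^x - n_\sigma^y) = a_\sigma n_\sigma^x$, and if $n_\sigma^x < n_\sigma^y$ they number $a_\sigma n_\sigma^x$ as well, since there are no $(\sigma, x)$ balls in this case. Summing over $\sigma$ gives $M_x = M$ such balls, so \emph{every} ball is $X$-relevant. Consequently $I_X$ is simply the multiset of genomes carried by the first $N$ balls of a uniform random permutation of all $M$ balls, where the number of balls carrying genome $\sigma$ is exactly $a_\sigma n_\sigma^x$. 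The first $N$ entries of a uniform random permutation form a uniform size-$N$ sample without replacement, which is precisely the selection kernel applied to $X_t$.

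The $Y$-marginal is the crux and the step I expect to require the most care, since here fewer than all balls are relevant. By the same count, the balls labelled $(x,y)$ or $y$ carrying genome $\sigma$ number exactly $a_\sigma n_\sigma^y$, for a total of $M_y \leq M$. The state $I_Y$ is obtained by reading off the first $N$ of these $Y$-relevant balls in the order induced by the permutation of all $M$ balls. Here I would invoke the elementary fact that restricting a uniform random permutation of a set to any fixed subset, preserving the induced order, yields a uniform random permutation of that subset. Applying this to the subset of $Y$-relevant balls shows that their induced order is uniform, so the first $N$ of them form a uniform size-$N$ sample without replacement from the multiset containing $a_\sigma n_\sigma^y$ copies of each $\sigma$, which is exactly the selection kernel applied to $Y_t$. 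One also checks that at least $N$ such balls always exist: since every $a_\sigma \geq 1$ is an integer, $M_y = \sum_\sigma a_\sigma n_\sigma^y \geq \sum_\sigma n_\sigma^y = N$, so the size-$N$ sample is well defined.

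Having matched both marginals with the RSM selection kernel, the pair $(I_X, I_Y)$ is a valid Markovian coupling of the selection step, which establishes the claim. The only genuinely non-routine ingredient is the induced-permutation fact used for the $Y$-marginal; everything else reduces to the ball-count identities yielding $a_\sigma n_\sigma^x$ and $a_\sigma n_\sigma^y$ above.
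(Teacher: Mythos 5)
Your proposal is correct and follows essentially the same route as the paper's own proof: both rest on the two facts that sampling $N$ items without replacement is the same as reading the first $N$ entries of a uniform random permutation, and that the order a uniform random permutation induces on any fixed subset of balls (here the $X$-relevant and $Y$-relevant ones) is itself uniform. Your version is slightly more explicit about the per-type ball counts $a_\sigma n_\sigma^x$ and $a_\sigma n_\sigma^y$ and about checking $M_y \ge N$, but the underlying argument is the same.
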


\begin{proof}
  Notice that sampling without replacement $a$ objects from a set of
  $b$ objects is equivalent to taking the first $a$ elements from a
  uniform random permutation of the $b$ objects.  Also note that given
  a subset $S$ of a set of $b$ objects, and a uniform random
  permutation $\alpha$ over the $b$ objects, the restriction of
  $\alpha$ to the elements of $S$ is a uniformly random permutation of
  the elements of $S$.  Now consider the set of $M$ labeled balls
  constructed above, and define $S_X$ (respectively, $S_Y$) to be the
  set of balls carrying a $(x,y)$ or $x$ (respectively, $x(x,y)$ or
  $y$) label.  By the observations above, see that the set $I_X$
  (respectively, $I_Y$) has the same distribution as if it was sampled
  without replacement from $S_X$ (respectively, $S_Y$).  This proves
  the claim.
\end{proof}
\begin{lemma}
Suppose $\dm{X_t, Y_t} = 1$.  Then under the above coupling, we have
$$
\E{\dm{I_X,I_Y}|X_t,Y_t} \leq \frac{1}{1-\frac{1}{N}}\frac{\max
  a_\sigma}{\min a_\sigma}L.
$$
\end{lemma}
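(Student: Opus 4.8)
The plan is to exploit the rigidity of states at $\dm{\cdot,\cdot}$-distance one and to produce an explicit matching between the two selected populations. First I would unpack what $\dm{X_t,Y_t}=1$ means: the matching $M^\star$ witnessing the distance pairs $N-1$ of the genomes perfectly, so $X_t$ and $Y_t$ share a common sub-multiset $C$ of $N-1$ genomes and differ in exactly one genome, say $\sigma^*$ in $X_t$ against $\nu^*$ in $Y_t$, with $\ham{\sigma^*,\nu^*}=1$. Translated to the ball construction this is very constrained: the counts $n^x_\tau$ and $n^y_\tau$ coincide for every $\tau\notin\inb{\sigma^*,\nu^*}$ and differ by one at $\sigma^*$ and at $\nu^*$, so every $x$-ball carries the genome $\sigma^*$ and there are exactly $a_{\sigma^*}$ of them, every $y$-ball carries $\nu^*$ and there are exactly $a_{\nu^*}$ of them, and all remaining balls are shared $(x,y)$-balls. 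By the validity claim already established, $I_X$ (resp. $I_Y$) is marginally a uniform $N$-subset of $S_X$ (resp. $S_Y$), and the two subsets are coupled through the common random permutation and the shared $(x,y)$-balls.

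Next I would bound $\dm{I_X,I_Y}$ for a fixed realization. Let $k_X$ be the number of $x$-balls that land in $I_X$ and $k_Y$ the number of $y$-balls in $I_Y$; then $I_X$ consists of $k_X$ copies of $\sigma^*$ together with $N-k_X$ shared genomes, and symmetrically for $I_Y$. I would match the shared $(x,y)$-balls selected into both $I_X$ and $I_Y$ to themselves at cost $0$; a short threshold argument on the permutation (the $(x,y)$-balls chosen by one selection are nested inside those chosen by the other) shows that the number of such common balls is exactly $N-\max(k_X,k_Y)$, leaving $\max(k_X,k_Y)$ genomes unmatched on each side. Here the hypothesis $\ham{\sigma^*,\nu^*}=1$ is crucial: I pair $\min(k_X,k_Y)$ of the leftover $\sigma^*$'s against leftover $\nu^*$'s, each contributing exactly $\ham{\sigma^*,\nu^*}=1$, and pair the remaining $\abs{k_X-k_Y}$ genomes arbitrarily, each contributing at most $L$. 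This yields the key pointwise estimate
$$\dm{I_X,I_Y}\ \le\ \min(k_X,k_Y)\ +\ L\,\abs{k_X-k_Y}.$$

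Finally I would take expectations. Since $I_X$ is a uniform $N$-subset of $S_X$, whose $M_x=\sum_\sigma a_\sigma n^x_\sigma$ balls contain exactly $a_{\sigma^*}$ $x$-balls, the hypergeometric mean (part 1 of Fact \ref{fct:conc-hyp}, applied to the ball urn) gives $\E{k_X}=N a_{\sigma^*}/M_x$, and likewise $\E{k_Y}=N a_{\nu^*}/M_y$. The $N-1$ common genomes contribute $(x,y)$-balls, so $M_x\ge (N-1)\min_\tau a_\tau$, whence $\E{k_X}\le \frac{1}{1-1/N}\frac{\max_\sigma a_\sigma}{\min_\tau a_\tau}$, and similarly for $\E{k_Y}$; in particular the cheap $\min(k_X,k_Y)$ term is already within the target.

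The main obstacle is the term $L\,\abs{k_X-k_Y}$: bounding $\E{\abs{k_X-k_Y}}$ by the crude $\E{k_X}+\E{k_Y}$ would lose a factor of two and overshoot the claimed constant, and indeed the weaker matching $\dm{I_X,I_Y}\le L\max(k_X,k_Y)$ is genuinely insufficient. The crux is therefore to show that the common random permutation forces $k_X$ and $k_Y$ to track each other, so that $\E{\abs{k_X-k_Y}}$ is controlled by (essentially) a single hypergeometric count on the shared balls rather than by the sum of the two means. Combined with the cheap $\min(k_X,k_Y)$ contribution this would give $\E{\dm{I_X,I_Y}\mid X_t,Y_t}\le \frac{1}{1-1/N}\frac{\max_\sigma a_\sigma}{\min_\tau a_\tau}L$. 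I expect this correlation estimate, making precise how the two selections agree on the shared $(x,y)$-balls, to be the delicate and technical heart of the argument.
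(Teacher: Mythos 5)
Your setup reproduces the paper's coupling exactly: the same bag of labelled balls, the same reduction to two states sharing $N-1$ genomes and differing in one genome by one bit, and your nesting observation (the shared balls selected into $I_X$ are an initial segment of the shared balls in permutation order, likewise for $I_Y$, so $\abs{I_X\cap I_Y}\geq N-\max(k_X,k_Y)$) is correct and even yields a sharper pointwise bound, $\dm{I_X,I_Y}\leq \min(k_X,k_Y)+L\abs{k_X-k_Y}$, than the paper uses. But the proof is not complete: the decisive step --- controlling $\E{\abs{k_X-k_Y}}$ by something better than $\E{k_X}+\E{k_Y}$ --- is explicitly deferred to an unproven ``correlation estimate,'' and as you yourself observe, without it you only reach roughly twice the claimed constant. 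That deferred estimate is the whole content of the lemma, so this is a genuine gap rather than a routine detail.

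The paper closes the argument without ever touching the joint distribution of $(k_X,k_Y)$. Let $I$ be the multiset of shared ($(x,y)$-labelled) balls occurring among the first $N$ positions of the \emph{global} permutation of the bag. Any such ball is automatically among the first $N$ balls of each label-restricted subsequence, so $\abs{I_X\cap I_Y}\geq\abs{I}$ and hence $\dm{I_X,I_Y}\leq L\,(N-\abs{I})$. The point is that $N-\abs{I}$ is simply the number of non-shared balls among the first $N$ positions, so its conditional expectation is $N$ times the fraction of non-shared balls in the bag, namely $N a_{\sigma_1}/(\abs{S}+a_{\sigma_1})\leq N a_{\sigma_1}/\abs{S}$ with $\abs{S}\geq (N-1)\min_\tau a_\tau$ --- a one-line computation requiring no correlation between the two selections. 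So the idea you are missing is to lower-bound the intersection by a statistic of the single global permutation whose expectation is directly computable, rather than by $N-\max(k_X,k_Y)$. Your factor-of-two worry does point at a real subtlety, but it is resolved in the bookkeeping of the bag rather than by a correlation estimate: in the paper's accounting only the $a_{\sigma_1}$ balls of the larger unshared group are counted as ``bad'' (which implicitly amounts to pairing the $a_{\nu_1}$ $y$-only balls with $x$-only balls rather than charging them separately); charging both groups independently is exactly what reintroduces the factor of two.
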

\begin{proof}
  For brevity, we will denote $X_t$ and $Y_t$ as $X =
  \inb{\sigma_i}_{i=1}^N$ and $Y = \inb{\nu_i}_{i=1}^N$ and
  $\dm{\cdot,\cdot}$ by $d$.  Since $d(X,Y) = 1$, we can assume
  without loss of generality that $\sigma_i = \nu_i$ for $i > 1$, and
  $\sigma_1$ and $\nu_1$ differ in exactly one bit.   

  We now consider the coupling described above, and let $S$ be the set
  of balls with label $(x,y)$.  Notice that $|S| = \sum_{1<i\leq
    N}a_{\sigma_i}$. Notice that $|S| \geq (N-1)\min_\sigma a_\sigma$.  We
  assume without loss of generality that $a_{\sigma_1} > a_{\nu_1}$,
  so that the number of balls $M$ in the bag is $|S| + a_{\sigma_1}$.
  Consider a random permutation $\alpha$ of the $M$ balls, and let $I$
  be the (random) multiset of balls with label $(x,y)$ occurring in
  the first $N$ positions of $\alpha$.  Notice that $|I_X \cap I_Y|
  \geq |I|$.  We observe that if the intersection of $I_X$ and $I_Y$
  (seen as multisets) is of size at least $|I|$, then $\dm{I_X,I_Y}
  \leq L(N-|I|)$, hence we have
  \begin{equation}
    \label{eq:coupling:1}
    \E{\dm{I_X, I_Y}|X,Y} \leq L(N-\E{|I|~|X,Y}). 
  \end{equation}
  Now, we have 
  \begin{eqnarray}
    \E{|I|~|X,Y} & = & \frac{|S|}{|S| + a_{\sigma_1}}N\nonumber\\
    & \geq & N\inp{1 - \frac{a_{\sigma_1}}{|S|}} \geq N\inp{1 - \frac{a_{\sigma_1}}{(N-1)\min_\tau a_\tau}}.\nonumber
  \end{eqnarray}
Plugging this into equation (\ref{eq:coupling:1}), we get 
\begin{eqnarray}
  \E{\dm{I_X, I_Y}|X,Y} & \leq & \frac{LNa_{\sigma_1}}{(N-1)\min_\tau
    a_{\tau}}\nonumber\\
  & \leq & \frac{1}{1-\frac{1}{N}}\frac{\max_\tau a_\tau}{\min_\tau a_\tau}L,\nonumber
\end{eqnarray}
which establishes our claim.
\end{proof}

Using the above discussion, we get fast mixing under the condition
that 
$$
(1-2\mu) \frac{1}{1-\frac{1}{N}}\frac{\max_\sigma a_\sigma}{\min_\sigma
  a_\sigma}L < 1.
$$
Formally, we have
\begin{theorem}
  Fix a mutation rate $\mu < 1/2$, and a fitness landscape $A$.  Define 
$$
K(A, \mu) = (1-2\mu)\frac{1}{1-\frac{1}{N}}\frac{\max_\sigma
  a_\sigma}{\min_\tau a_\tau}L. 
$$
When $K(A, \mu) < 1$,  we have $\tau_{\rm mix}(\epsilon) =
O\inp{\frac{\log(NL/\epsilon)}{\log (1/K)}}$.  
\end{theorem}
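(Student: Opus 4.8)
The plan is to assemble the theorem from the three ingredients already in place: the mutation coupling $\mathcal{C}_M$, shown to be a $((1-2\mu), \dm{\cdot,\cdot})$ coupling; the selection coupling $\mathcal{C}_S$, shown to contract $\dm{\cdot,\cdot}$ on adjacent states; and the path coupling theorem of Bubley and Dyer together with the lemma identifying the path metric on the adjacency graph with $\dm{\cdot,\cdot}$ itself. The only real work left is the composition, since the heavy lifting — the selection-coupling estimate — has already been carried out.

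First I would form the composite coupling $\mathcal{C} = \inp{\mathcal{C}_S, \mathcal{C}_M}$ for one full step of the RSM chain: run $\mathcal{C}_S$ on $\inp{X_t, Y_t}$ to produce intermediate states $I_X, I_Y$, then run $\mathcal{C}_M$ on $\inp{I_X, I_Y}$. Validity as a Markovian coupling of $\mathcal{M}$ follows because each phase is separately a valid coupling (already verified in the two claims above) and the two phases draw on independent randomness, so each coordinate's marginal is exactly one step of $\mathcal{M}$. Next, for states with $\dm{X_t, Y_t} = 1$, I would chain the two contraction bounds by conditioning on $I_X, I_Y$: the mutation lemma gives $\E{\dm{X_{t+1}, Y_{t+1}} \mid I_X, I_Y} \leq (1-2\mu)\dm{I_X, I_Y}$, and averaging over $I_X, I_Y$ using the selection lemma yields $\E{\dm{X_{t+1}, Y_{t+1}} \mid X_t, Y_t} \leq (1-2\mu)\frac{1}{1-\frac{1}{N}}\frac{\max_\sigma a_\sigma}{\min_\tau a_\tau}L = K(A,\mu)$. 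Since $\dm{X_t, Y_t} = 1$ here, this is precisely the per-edge contraction $\E{\dm{X_{t+1},Y_{t+1}} \mid X_t, Y_t} \leq K\,\dm{X_t, Y_t}$ demanded by the hypothesis of the path coupling theorem.

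Finally, because the induced path metric coincides with $\dm{\cdot,\cdot}$ and each edge of $G$ is a shortest path (the preceding lemma), the path coupling theorem upgrades the per-edge bound to all pairs, making $\mathcal{C}$ a genuine $(K, \dm{\cdot,\cdot})$ coupling. When $K < 1$ I would then invoke the standard coupling estimate $\E{\dm{X_t, Y_t}} \leq K^t D$ with diameter $D = \max_{X,Y}\dm{X,Y} \leq NL$ — each of the $N$ genomes differs in at most $L$ bits — and the Coupling Lemma together with Markov's inequality to conclude $\tau_{\rm mix}(\epsilon) = O\inp{\frac{\log(NL/\epsilon)}{\log(1/K)}}$. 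The one delicate point in this assembly, beyond the already-proved selection estimate, is ensuring that passing to the path metric does not inflate the contraction factor; this is exactly what the lemma showing the path metric equals $\dm{\cdot,\cdot}$ guarantees, so the factor $K$ survives intact.
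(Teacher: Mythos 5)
Your proposal is correct and follows essentially the same route as the paper: the paper likewise composes the selection coupling (with its per-edge bound of $\frac{1}{1-\frac{1}{N}}\frac{\max_\sigma a_\sigma}{\min_\tau a_\tau}L$ on $\E{\dm{I_X,I_Y}\mid X_t,Y_t}$) with the $((1-2\mu),\dm{\cdot,\cdot})$ mutation coupling, invokes the Bubley--Dyer path coupling theorem together with the lemma that the induced path metric coincides with $\dm{\cdot,\cdot}$, and finishes with the diameter bound $D\le NL$ and the standard coupling-to-mixing-time estimate. No gaps.
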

We note here that the main difficulty in designing coupling arguments
for the RSM process is the distance expansion property of the
reproduction and selection phases.  Given two states of the RSM
process, the reproduction phase amplifies the distance between them,
and the nature of the selection phase tends to keep this distance
intact.  In this setting, the chain can be described as a noisy random
walk on a boolean hypercube, and our bound reflects the intuition that
when the noise is small, the fast mixing property of the hypercube
should able to enforce fast mixing of the RSM chain.  We reemphasize
that we consider that achieving a better understanding of the mixing
properties of the RSM walk, in terms of both upper and lower bounds
appears to be an interesting and challenging open problem.
\subsection{Proof of Theorem \ref{thm:projected-chains}}
\label{sec:proof-theor-refthm:p}

We now proceed to prove Theorem \ref{thm:projected-chains}.  We will
use notation similar to that used in the proof of Lemma
\ref{lem:proj-strong} in Appendix \ref{sec:proofs-omitted-from-1}.  By
a slight abuse of notation, we will use $P$ and $R$ for the transition
matrices of the projected versions of the chains described in the
proof of Lemma \ref{lem:proj-strong} above.  We note that for any
numbers $M$ and $ M'_1,M'_2,\ldots, M'_l$ summing up to $M$,
$\binom{M}{M'_1,M'_2,\ldots, M'_l}$ is of size at most $O(M\log M)$
bits and can be computed in at most $O(M)$ operations over numbers of
size at most $O(M\log M)$.  We start with an estimation of the time
required to compute the entries of $P$.  Using the above observation,
and the form of entries of $P$, we have the following observation for
$P$.
\begin{observation}
  Each entry of $P$ is of size at most $\tilde{O}(N\max a_\sigma)$,
  and can be computed in $N\max a_\sigma$ operations over
  integers of size $\tilde{O}(N\max a_\sigma)$.
\end{observation}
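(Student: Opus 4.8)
The plan is to read off the explicit form of the transition entries from the proof of Lemma \ref{lem:proj-strong} and then perform a direct size-and-time accounting. For equivalence classes $[h']$ and $[h]$, that proof shows $P([h'],[h])$ to be the multivariate hypergeometric probability of drawing exactly $h(i)$ genomes of Hamming weight $i$, for each $0 \le i \le L$, when $N$ genomes are sampled without replacement from a bag holding $a_i h'(i)$ genomes of weight $i$. Setting $n_i \defeq a_i h'(i)$ and letting $M \defeq \sum_{i=0}^L n_i$ be the total post-reproduction population, I would first record the closed form
$$
P([h'],[h]) = \frac{\prod_{i=0}^L \binom{n_i}{h(i)}}{\binom{M}{N}},
$$
and agree to store each entry as the pair of integers formed by its numerator and denominator. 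The one inequality driving the whole estimate is $M \le N \max_\sigma a_\sigma$, which holds because each $a_i \le \max_\sigma a_\sigma$ while $\sum_i h'(i) = N$.

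Granting this, the bit-size bound is immediate. Since $\binom{n_i}{h(i)} \le 2^{n_i}$, the numerator is at most $\prod_i 2^{n_i} = 2^M$, and likewise $\binom{M}{N} \le 2^M$; hence both integers have bit-length $O(M) = O(N\max_\sigma a_\sigma)$, which gives the claimed size $\tilde{O}(N\max_\sigma a_\sigma)$ for each entry. For the running time I would invoke the preliminary fact, specialized to $l=2$, that a single binomial coefficient $\binom{n_i}{h(i)}$ is computable in $O(n_i)$ arithmetic operations on integers of size $\tilde{O}(M)$. Building all numerator factors then costs $\sum_i O(n_i) = O(M)$ operations; combining them takes at most one multiplication per nonzero $h'(i)$, of which there are at most $N \le M$; and the denominator takes a further $O(M)$ operations. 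The total is $O(M) = O(N\max_\sigma a_\sigma)$ operations over integers of size $\tilde{O}(N\max_\sigma a_\sigma)$, exactly as stated.

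The computation is essentially routine, so the only real care is in the bookkeeping, and this is where any obstacle lies. The two points to pin down are: that it is the post-reproduction total $M$, rather than $N$ or some larger quantity, that simultaneously controls the magnitudes and the operation count, together with the bound $M \le N\max_\sigma a_\sigma$; and that the product structure of the numerator cannot inflate its size beyond $2^M$, which rests precisely on the arguments $n_i$ summing to $M$. Once these are checked, this product-of-binomials estimate is exactly the form needed to feed into the global running-time accounting of Theorem \ref{thm:projected-chains}.
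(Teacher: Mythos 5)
Your proposal is correct and matches the paper's (very terse) justification: the paper likewise reads off the entry of $P$ as a multivariate hypergeometric probability, invokes the preliminary remark that (multi)nomial coefficients with total $M$ have $O(M\log M)$ bits and cost $O(M)$ operations, and uses $M=\sum_i a_i h'(i)\le N\max_\sigma a_\sigma$. Your explicit $2^M$ bound on the numerator and denominator and the operation count over the factors $\binom{n_i}{h(i)}$ just fill in the bookkeeping the paper leaves implicit.
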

We now proceed to estimate the complexity of computing entries of the
matrix $R$.  We first get a bound on the time required to pre-compute
the $(L+1)\times (L+1)$ matrix $\mathcal{Q}$ defined in the
description of the mutation chain in Appendix
\ref{sec:proofs-omitted-from-1}.

\begin{observation}
  Let $b$ be the number of bits required to represent $\mu$.  We have
$$
\mathcal{Q}_{st} = (1-\mu)^L\inp{\frac{\mu}{1-\mu}}^{s-t}\sum_{x}\binom{L-s}{x}\binom{s}{t-x}\inp{\frac{\mu}{1-\mu}}^{2x}.
$$
Hence, each entry of $\mathcal{Q}$ is of size at most $\tilde{O}(bL)$ and can
be pre-computed in time $O(L^2)$ operations over numbers of size $\tilde{O}(bL)$. 
\end{observation}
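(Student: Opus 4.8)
The plan is to obtain the closed form by a direct combinatorial analysis of the mutation step, and then to read off the bit-size and running-time bounds from the resulting expression. First I would fix a string $\sigma$ of Hamming weight $s$, which has exactly $s$ coordinates equal to $1$ and $L-s$ coordinates equal to $0$. Since each coordinate is flipped independently with probability $\mu$, the resulting Hamming weight is determined entirely by how many of the $L-s$ zeros flip up to $1$ and how many of the $s$ ones survive. Letting $x$ be the number of zeros that flip to $1$, landing at weight $t$ forces exactly $t-x$ of the original ones to remain, and hence $s-t+x$ of them to flip to $0$. Choosing which coordinates do this contributes $\binom{L-s}{x}\binom{s}{t-x}$, while the per-coordinate probabilities contribute $\mu^x(1-\mu)^{L-s-x}$ from the zeros and $(1-\mu)^{t-x}\mu^{s-t+x}$ from the ones. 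Collecting exponents (the total power of $\mu$ being $s-t+2x$ and that of $1-\mu$ being $L-s+t-2x$, which sum to $L$) and summing over the admissible $x$ gives
$$\mathcal{Q}_{st} = \sum_x \binom{L-s}{x}\binom{s}{t-x}\mu^{s-t+2x}(1-\mu)^{L-s+t-2x};$$
factoring out $(1-\mu)^L$ and then $\inp{\mu/(1-\mu)}^{s-t}$ produces exactly the stated identity.

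For the bit-size claim I would write $\mu = a/2^b$ with $a$ an integer, so that both $\mu$ and $1-\mu = (2^b-a)/2^b$ have denominator $2^b$. Each summand is a product of exactly $L$ factors, each equal to $\mu$ or $1-\mu$, and so is an integer over the common denominator $2^{bL}$; hence $\mathcal{Q}_{st}$ is a rational number whose denominator divides $2^{bL}$. Since $\mathcal{Q}_{st}$ is a probability and therefore lies in $[0,1]$, its numerator over this denominator is at most $2^{bL}$, so the entry is representable in $O(bL)$ bits, which is $\tilde{O}(bL)$ after absorbing the $O(L)$-bit binomial coefficients.

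For the running-time claim I would observe that the admissible range is $\max(0,t-s)\le x\le\min(L-s,t)$, so the sum has at most $L+1$ terms. Assembling a single term requires the two binomial coefficients $\binom{L-s}{x}$ and $\binom{s}{t-x}$, each computable in $O(L)$ integer multiplications, together with the corresponding powers of $\mu$ and $1-\mu$, again $O(L)$ multiplications, all over numbers of size $\tilde{O}(bL)$ carried under the common denominator $2^{bL}$. Thus each of the $O(L)$ terms costs $O(L)$ operations, giving $O(L^2)$ operations per entry, as claimed.

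There is no deep obstacle in this observation; the only places that demand care are the exponent bookkeeping in the combinatorial count and the decision to carry every quantity over the single integer denominator $2^{bL}$. The latter keeps all arithmetic exact (so that the divisions implicit in forming the binomial coefficients remain integer operations) and makes the \emph{probability-bounded} size estimate $\tilde{O}(bL)$ transparent without having to bound the numerator term-by-term.
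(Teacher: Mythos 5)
Your proof is correct and is exactly the argument the paper intends: the observation is stated without proof, and your combinatorial count of up-flips $x$ among the $L-s$ zeros versus surviving ones, the exponent bookkeeping giving $\mu^{s-t+2x}(1-\mu)^{L-s+t-2x}$, the factoring into the stated form, the common-denominator $2^{bL}$ size bound via the probability bound, and the $O(L)$-terms-times-$O(L)$-operations count per entry all check out. Nothing is missing.
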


\begin{proof}[Proof of Theorem \ref{thm:projected-chains}]
  Notice that the number of terms in the sum in equation (\ref{eq:6})
  for the computation of the entry of the matrix $R([h], [h'])$ is at
  most
  \begin{eqnarray}
    \prod_{i=0}^{L}\binom{h(i) + L}{L} &\leq&
    \inp{\frac{e}{L}}^{L(L+1)}\prod_{i=0}^{L}(h(i) + L)^L\nonumber\\
    &\leq& \inp{e\inp{1 + \frac{N}{L(L+1)}}}^{L(L+1)}\text{,
      by the AM-GM inequality.}\nonumber
  \end{eqnarray}
  We will use the shorthands $S = \binom{N+L}{L} \leq \frac{(N+L)^L}{L!}$ and $$G =
  \inp{e\inp{1 + \frac{N}{L(L+1)}}}^{L(L+1)}.$$  Notice that $R$ is of
  dimension at most $S\times S$.  Computing the products of all the
  $\mathcal{Q}_{ij}$'s in each of these terms takes $N$
  multiplications on numbers of size at most $\tilde{O}(bL)$, and hence
  produces a number of size $\tilde{O}(NbL)$ in time at most $\tilde{O}(NbL)$.
  Computing the products of all the multinomial coefficients in each
  of the terms takes $O(N)$ multiplications on integers of size $\log
  N$, thus producing an integer of size $O(N\log N)$ in time
  $\tilde{O}(N\log N)$.  The total size of each entry of $R$ is thus
  at most $s_1= \tilde{O}\inp{\log G + N\log N + NbL}$, and
  the entry can be computed in time $t_1 =
  \tilde{O}(Gs_1)$.  All the entries of $R$
  can thus be computed in time $S^2t_1$.

  Notice that $\M_w = PR$, and given the above estimates on the sizes
  of the entries of $P$ and $R$ and the times required to compute
  their entries, each entry of $\M_w$ is of size at most $s_2 = O(\log S + s_1
  + N\max a_\sigma)$, and can be computed in time $\tilde{O}(Ss_2)$,
  given the entries of $P$ and $R$.  Thus, the time taken for the
  computation of $\M_w$, including the computation of $R$ and $P$ and
  the pre-computation of $\mathcal{Q}$ is
  \begin{eqnarray}
    T_1 &=& \tilde{O}\inp{bL^3 + S^3s_2 + S^2Gs_1}\nonumber
  \end{eqnarray}
and each entry of $\M_w$ is of size $$s_2 = \tilde{O}(NbL + N\max
a_\sigma + L^2).$$
The time required to compute an exact solution of $\M_w\pi_w = \pi_w$
with the restriction $\norm[1]{\pi_w} = 1$ using Gaussian elimination
is thus of the order $\tilde{O}(S^3s_2)$.  Comparing this with $T_1$,
we get that the total running time is $\tilde{O}(T_1)$, which is what
we sought to prove.
\end{proof}
\subsection{Proof of Theorem \ref{thm:algor-estim-error}}

\begin{figure*}[htb]
  	\begin{tabularx}{\textwidth}{|X|}
    \hline
	 \vspace{1mm}

         {\bf \textsc{Input:}} An initial state $\alpha_{0} \in
         \Omega_w,$ an $\eps >0$, grid resolution $\delta$, accuracy
         parameter $\delta_1$, and error probability $\delta_2$, $L,N$
         and an $A$ which is class invariant.
  
         \vspace{1mm}
         {\bf \textsc{Goal:}} To estimate $\mu^{h}_{c}(\eps, N).$ 

         \vspace{1mm} 

         {\bf \textsc{Output:}} $\mu_0$ such that$\mu_0 \geq
         \mu_c^h(\epsilon + \delta_1/2)$ and $d^h(\vec{D}_{\infty, \mu_0
           - \delta}, \vec{U}) \geq \epsilon
         -\delta_1/2$. \vspace{1mm}

         Let $\eps'= \frac{\delta_1}{2L^2}$ (the distance from
         stationarity), $c = \ceil{\frac{1}{2\delta}}$  and $s =
         \ceil{\frac{8L^4}{\delta_1^2}\log(2c(L+1)/\delta_2)}$ (number of
         samples from the distribution).
	
	\vspace{1mm}
	{\bf For} $\delta \leq \mu  \leq 1/2$ in steps of $\delta$,
	\begin{enumerate}
	
	\item  Let $\tau \defeq \tau(L,N,A,\mu,\eps').$
		
	\item Let $\vec{D}_{\tau,k},$ for $k=1,\ldots,s,$ denote $s$
          independent samples from the RSM process with parameters
          $L,N,A$ and $\mu$ starting from the initial state
          $\alpha_{0}.$
	
	\item Let $\vec{Z}_{s} \defeq  \frac{1}{s} \sum_{k=1}^{s} \vec{D}_{\tau,k}.$
	
	\item {\bf If} $d^h (\vec{Z}_{s},U) \leq \eps$ then {\bf
            Return} $\mu$ and {\bf Stop}.
	
	\item {\bf Else} $\mu = \mu + \delta.$ 
	
	\end{enumerate}
\\
\hline 
\end{tabularx}
  \caption{The  Algorithm {\sc ErrorThreshold}}
  \label{fig:algorithm}
\end{figure*}

\label{sec:proof-theor-refthm}
In this section, we give a proof of Theorem
\ref{thm:algor-estim-error}.  Recall that $\pi_w$ denotes the
stationary distribution of the projected RSM chain described in
Section \ref{sec:prel-defin}.  Using the definition of the mixing time,
and the Chernoff-Hoeffding bound, we  have the following lemma in
order to bound the number of samples $s$ required in each iteration of
the algorithm:
\begin{lemma}
  Suppose we take $s$ samples from independent realizations of the
  fully mixed projected RSM process, denoting the samples so obtained
  as $\vec{D}(i)$ for $i = 1,2,\ldots, s$.  For any fixed genome
  $\sigma$, we then have $\E{D^\sigma(i)} = \Ex{\pi_w}{D^\sigma}$ for
  all $i$.  Now for every $\epsilon > 0$
$$
\Pr{\abs{\frac{1}{s}\sum_{i=1}^sD^\sigma(i) - \Ex{\pi_w}{D^\sigma}} \geq
  \epsilon} \leq 2\exp\inp{-2\epsilon^2s}.
$$
In particular, if we run $s$ independent realizations of the chain up
to the time $\tau(L, N, A, \mu, \epsilon/2)$, and let $\vec{D}_{i}$
denote the sample obtained by the $i$th realization, then
$$
\Pr{\abs{\frac{1}{s}\sum_{i=1}^sD^\sigma(i) - \Ex{\pi_w}{D^\sigma}} \geq
  \epsilon} \leq 2\exp\inp{-\epsilon^2s/2}.
$$
\end{lemma}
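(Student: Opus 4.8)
The plan is to treat both displayed inequalities as direct applications of the Hoeffding bound to bounded, independent random variables, with the only real content being the control of the bias that appears in the second inequality because we sample at a finite time rather than exactly from stationarity.

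For the first inequality, I would start by noting that each $D^\sigma(i)$ is a fractional population and hence takes values in $[0,1]$. Since the $s$ realizations are drawn independently and each is distributed exactly according to the stationary distribution $\pi_w$, the variables $\inb{D^\sigma(i)}_{i=1}^s$ are independent and identically distributed with common mean $\Ex{\pi_w}{D^\sigma}$. The Hoeffding inequality for a sum of $s$ independent $[0,1]$-valued random variables then yields immediately
$$\Pr{\abs{\tfrac{1}{s}\sum_{i=1}^sD^\sigma(i) - \Ex{\pi_w}{D^\sigma}} \geq \epsilon} \leq 2\exp\inp{-2\epsilon^2 s},$$
the factor $2$ in the exponent being exactly the one coming from $\sum_{i}(b_i-a_i)^2 = s$ with $a_i = 0$ and $b_i = 1$.

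For the second inequality, the samples are drawn not from $\pi_w$ but from the $\tau$-step distribution $\M_w^\tau(\alpha_0,\cdot)$, where $\tau = \tau(L,N,A,\mu,\epsilon/2)$; call this distribution $\tilde\pi$. By the definition of the mixing time, $\norm[TV]{\tilde\pi - \pi_w} \leq \epsilon/2$. I would first bound the bias this introduces in the mean: since $D^\sigma$ is $[0,1]$-valued, the standard estimate $\abs{\Ex{\tilde\pi}{D^\sigma} - \Ex{\pi_w}{D^\sigma}} \leq \norm[TV]{\tilde\pi - \pi_w} \leq \epsilon/2$ holds. The samples $D^\sigma(i)$ remain i.i.d.\ and $[0,1]$-valued, now with mean $\Ex{\tilde\pi}{D^\sigma}$, so Hoeffding applied with threshold $\epsilon/2$ gives
$$\Pr{\abs{\tfrac{1}{s}\sum_{i=1}^sD^\sigma(i) - \Ex{\tilde\pi}{D^\sigma}} \geq \epsilon/2} \leq 2\exp\inp{-\epsilon^2 s/2}.$$
A triangle inequality then closes the argument: if $\abs{\tfrac{1}{s}\sum_i D^\sigma(i) - \Ex{\pi_w}{D^\sigma}} \geq \epsilon$, then because the bias is at most $\epsilon/2$ the fluctuation $\abs{\tfrac{1}{s}\sum_i D^\sigma(i) - \Ex{\tilde\pi}{D^\sigma}}$ must be at least $\epsilon/2$, so the event in the conclusion is contained in the event bounded above, yielding the stated $2\exp\inp{-\epsilon^2 s/2}$.

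The argument is essentially routine, and the only step requiring any care --- and what I regard as the main (though mild) obstacle --- is the second inequality, where one must cleanly separate the \emph{statistical} fluctuation of the empirical average, controlled by Hoeffding, from the \emph{systematic} bias introduced by stopping the chain at its mixing time, controlled by the total-variation estimate. Splitting the error budget $\epsilon$ into two equal halves, one for each source, and invoking the mixing-time definition with accuracy parameter $\epsilon/2$, is precisely what accounts for the factor-of-four degradation in the exponent between the two displayed bounds.
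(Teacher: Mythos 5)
Your proposal is correct and follows exactly the route the paper intends (the paper states the lemma without a written-out proof, attributing it to ``the definition of the mixing time and the Chernoff--Hoeffding bound''): Hoeffding for $[0,1]$-valued i.i.d.\ samples gives the first bound, and splitting the error budget into an $\epsilon/2$ systematic bias controlled by the total-variation guarantee at time $\tau(L,N,A,\mu,\epsilon/2)$ plus an $\epsilon/2$ statistical fluctuation controlled by Hoeffding gives the second. Your use of the sharp estimate $\abs{\Ex{\tilde\pi}{D^\sigma}-\Ex{\pi_w}{D^\sigma}}\leq\norm[TV]{\tilde\pi-\pi_w}$ for a $[0,1]$-valued observable is the right one and makes the constants come out exactly as stated.
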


\begin{proof}[Proof of Theorem \ref{thm:algor-estim-error}]
  Consider the random variables $\vec{Z_s}$ in any of the at most $c$
  iterations in \textsc{ErrorThreshold}.  By the choice of $\epsilon'$
  and $s$, we get from the above lemma and a union bound that all the
  $c$ different variables $\vec{Z_s}$ that we get across all the iterations of the
  loop satisfy
  \begin{equation}
    \label{eq:8}
    \norm[\infty]{\vec{Z_s} - \Ex{\pi_w}{\vec{D}}} \leq \delta_1/(2L^2)
  \end{equation}
  with probability at least $1-\delta_2$.  In the rest of the proof,
  we condition on the above event occurring. 

  Since the average Hamming distance can be at most $L$, and we are
  running the projected chains till $\tau(L, N, A, \mu, \delta_1/(2L^2))$
  for each $\mu$, we get that for every $\mu$ considered by the
  algorithm
$$
\abs{d^h(\pi_{w,\mu}, \vec{U}) - d^h(\vec{Z_s}, \vec{U})} \leq
\delta_1/2.
$$
We therefore deduce that when the algorithm outputs a $\mu$,
$d^h(\pi_{w,\mu}, \vec{U}) \leq \epsilon + \delta_1/2$, using the
conditioning on the event in equation (\ref{eq:8}).  Similarly, by
noticing that the algorithm had $d^h(\vec{Z_s}, \vec{U}) \geq
\epsilon$ for $\mu - \delta$, we get that $d^h(\pi_{w,\mu-\delta},
\vec{U}) \geq \epsilon - \delta_1/2$.  The estimate of the running
time follows by noticing that assuming bits with bias $\mu$ can be
sampled in $O(\log(1/\mu))$ time, it takes time $O(NL\max
a_\sigma\log(1/\mu))$ to simulate each step of the $\M_w$ chain.  The
bound on the error probability follows from the conditioning used on
the validity of equation (\ref{eq:8}).
\end{proof}

We comment briefly on a technical point about the definition of the
error threshold that has been used in the literature (and that we use
too).  With this definition, there might exist a $\mu$ satisfying
$\nfrac{1}{2} > \mu > \mu_c^h(\epsilon)$ such that $d^h(\vec{v},
\vec{U}) > \epsilon$.  An analogous condition might hold in the finite
population case too.  If we could preclude the occurrence of such
anomalous behavior of error-thresholds, we would be able to improve
the guarantee on the output $\mu_0$ of the algorithm {\sc
  ErrorThreshold} to be of the form $\mu_c^h(\epsilon + \delta_1, N)
\leq \mu_0 \leq \mu_c^h(\epsilon - \delta_1) + \delta$.  We observe,
however, that somewhat surprisingly, even for the simpler case of the
quasispecies model, to the best of our knowledge, no attempts have
been made to rigorously prove that such anomalous behavior cannot
occur.  We leave the resolution of this point for the finite
population case as an open problem.

\end{document}